\documentclass{article}

\usepackage[letterpaper]{geometry}

\interdisplaylinepenalty=250
\usepackage[normalem]{ulem}



\PassOptionsToPackage{hyphens}{url}
\usepackage[colorlinks=true,citecolor=blue,linkcolor=blue,linktocpage]{hyperref}
\usepackage{makecell}

\usepackage[linesnumbered,ruled,vlined]{algorithm2e}
\SetKwInput{KwInput}{Input}                
\SetKwInput{KwOutput}{Output}
\usepackage{cite}


\usepackage{drawmatrix}
\usepackage{tikz}
\usepackage{xcolor}
\usepackage[export]{adjustbox}
\usetikzlibrary{matrix,positioning}
\usepackage{graphicx}
\usepackage{caption}
\usepackage[labelformat=simple]{subcaption}

\usepackage{epstopdf}
\usepackage{epsfig}
\usepackage{amssymb}
\usepackage{amsmath}
\usepackage{amsthm}
\usepackage{latexsym}
\usepackage{setspace}
\usepackage{bbm}
\usepackage{flushend}
\usepackage{float}
\usepackage{dsfont}
\usepackage{algorithmic,algorithm2e}
\usepackage{xspace}
\usepackage{enumerate}

\usepackage[title]{appendix}
\hyphenation{subgaussian}  

\newcommand\numberthis{\addtocounter{equation}{1}\tag{\theequation}}

\newcommand{\supp}[1]{\mathtt{supp}\left(#1\right)}
\newcommand{\diag}[1]{\mathtt{diag}\left(#1\right)}
\newcommand{\dP}{\mathrm{P}}

\newcommand{\bP}[2]{\dP_{#1}\left(#2\right)}

\newcommand{\bPP}[1]{\dP_{#1}}

\newcommand{\bPr}[1]{{\mathrm{Pr}}\left(#1\right)}

\newcommand{\bE}[2]{{\mathbb{E}}_{#1}\left[{#2}\right]}
\newcommand{\bEE}[1]{{\mathbb{E}}\left[#1\right]}

\newcommand{\cE}{{\mathcal E}}

\newcommand{\cG}{{\mathcal G}}

\newcommand{\cI}{{\mathcal I}}

\newcommand{\cN}{{\mathcal N}}
\newcommand{\mN}{{\mathbbm N}}
\newcommand{\mR}{{\mathbbm R}}

\newcommand{\cS}{{\mathcal S}}

\newcommand{\cU}{{\mathcal U}}

\makeatletter
\newtheorem*{rep@theorem}{\rep@title}
\newcommand{\newreptheorem}[2]{%
\newenvironment{rep#1}[1]{%
 \def\rep@title{#2 \ref{##1}}%
 \begin{rep@theorem}}%
 {\end{rep@theorem}}}
\makeatother

\newtheorem{theorem}{Theorem}

\newtheorem*{corollary*}{Corollary}
\newtheorem{assumption}{Assumption}

\newtheorem*{assumptions*}{Assumptions}
\newtheorem{lemma}[theorem]{Lemma}
\newtheorem*{lemma*}{Lemma}

\newtheorem{lemma-app}{Lemma}[section]

\newreptheorem{theorem}{Theorem}
\newreptheorem{lemma}{Lemma}

\theoremstyle{remark}
\newtheorem{remark}{Remark}
\newtheorem*{remark*}{Remark}
\newtheorem*{remarks*}{Remarks}

\theoremstyle{definition}
\newtheorem{definition}{Definition}


\newcommand{\ed}{\stackrel{{\rm def}}{=}}

\def\undertilde#1{\mathord{\vtop{\ialign{##\crcr
$\hfil\displaystyle{#1}\hfil$\crcr\noalign{\kern1.5pt\nointerlineskip}
$\hfil\tilde{}\hfil$\crcr\noalign{\kern1.5pt}}}}}

\newcommand{\ep}{\varepsilon}



\allowdisplaybreaks 

\newcommand{\nm}{m}  

\newcommand{\dg}{\mu_{0}} 
\newcommand{\on}{\mu_{s}} 
\newcommand{\off}{\mu_{d}} 


\newcommand{\by}{Y}				
\newcommand{\bphi}{\Phi}		

\newcommand{\sbg}{\text{subG}}
\newcommand{\sbx}{\text{subexp}}


\begin{document}

\title{Multiple Support Recovery Using Very Few Measurements Per Sample}

\date{}
\author{{Lekshmi Ramesh} \and
  {Chandra R. Murthy} \and {Himanshu
    Tyagi} }

\maketitle

{\renewcommand{\thefootnote}{}\footnotetext{
\noindent The authors are with the Department of Electrical
Communication Engineering, Indian Institute of Science, Bangalore
560012, India.  Email: \{lekshmi, cmurthy, htyagi\}@iisc.ac.in.} }

		
{\renewcommand{\thefootnote}{}\footnotetext{
		\noindent \thanks{This work was financially supported by a PhD fellowship from the Ministry of Electronics and Information Technology, Govt. of India, and by research grants from the Aerospace Network Research Consortium and the Center for Networked Intelligence (CNI) at the Indian Institute of Science.}} }		
		
{\renewcommand{\thefootnote}{}\footnotetext{
		\noindent \thanks{To appear in ISIT 2021.}} }		

\maketitle

\begin{abstract}
  In the problem of multiple support recovery,
  we are given
    access to linear measurements of multiple sparse samples in $\mR^{d}$. These samples can be partitioned into $\ell$ groups, with samples having the same support belonging to the same group. For a given budget of $m$ measurements per sample, 
the goal is to recover the $\ell$ underlying supports, in the absence of the knowledge of group labels.
We study this problem with a focus on the \emph{measurement-constrained} regime where $m$ is
{smaller} than the support size $k$ of each sample.
We design a two-step procedure that estimates the union of the underlying supports first, and then uses a spectral algorithm to estimate the individual supports.
 Our proposed estimator can recover the supports with $m<k$ measurements per sample,
from $\tilde{O}(k^{4}\ell^{4}/m^{4})$ samples.
Our guarantees hold for a general, generative model assumption on the samples and  measurement matrices. We also provide results from experiments conducted on synthetic data and on the MNIST dataset.
\end{abstract}


\section{Introduction}

We study the problem of {\em multiple support recovery}
using linear measurements, where there are $n$ random samples $X_{1},\ldots,X_{n}$
taking values in
	$\mR^{d}$, such that for each $i\in[n]$,
	$\mathtt{supp}(X_{i})\in\{\cS_{1},\ldots,\cS_{\ell}\}$ {\em almost surely,}\footnote{The support of a vector $x\in\mR^{d}$ is the set $\{u\in[d]:x_{u}\ne 0\}$.} with 
	$\cS_{i}\subset [d]$ and $\cS_{i}\cap\cS_{j}=\emptyset$ for all $i\ne j$. 
	That is, the support of each sample is one out of a small set of $\ell$ allowed supports.
	We assume that the samples $X_{i}$ are sparse and that $|\cS_{i}|=k \ll d$, $i\in[\ell]$.
	We are given low dimensional
	projections of these samples using $m\times d$ matrices
	$\Phi_{1},\ldots,\Phi_{n}$. In our setting, we focus on the regime
        where we have access to very few measurements per sample, namely, when 
	$m<k$. 
        Given access to the projections $Y_{i}=\Phi_{i}X_{i},
        i\in[n]$, and the projection matrices, we seek to recover the underlying supports 
	$\{\cS_{1},\ldots,\cS_{\ell}\}$.

        This is a generalization of the well-studied problem of recovering
        a \emph{single} unknown support from multiple linear measurements \cite{Lounici_COLT_2009,Tang_TIT_2010,Eldar_TIT_2010,Park_TIT_2017,Ramesh_arxiv_2019}, which has been applied to solve inverse problems in imaging, source localization, and anomaly detection \cite{Chen_GRSL_2011,Iordache_TGRS_2014,Malioutov_TSP_2005,Adler_MLSP_2013}.
    It is also related to the study of sparse random effects in mixed linear models \cite{Castro_AnnalsStats_2011,Balasubramanian_UAI_2013}. Mixed linear models are a generalization of linear models where an
    additional additive correction component is included to model a class-specific correction to the average
    behavior. This residual correction term is commonly known as the random effect term. It is often assumed
    to be generated from an unknown prior distribution with zero-mean, coming from a parametric family
    whose parameters are estimated by using the class-specific data. The problem of multiple support recovery is also discussed in \cite{Vaswani_TSP_2016, Mota_TSP_2016} under the assumption of slowly varying supports.
    
    There are two sets of unknowns in the setting described above -- the labels, indicating which support was chosen for each sample, and the $\ell$ supports $\cS_{1},\ldots,\cS_{\ell}$. Note that given the knowledge of the labels, one could group together samples with the same support, and use standard algorithms to recover the support. However, in the absence of labels,  the problem of recovering the supports is much harder.  
        A naive
        scheme could be to just estimate each support individually, which requires $m=O(k\log (d-k))$ measurements per sample \cite{Wainwright_TIT_2009,Aeron_TIT_2010}. But can we do better if we exploit
        the joint structure present across the samples, since there will be several samples that have the same support? In this work, we show that one can operate in the measurement-constrained regime of $m<k$, when a sufficiently large number of samples is available.

  \subsection{Prior work}  
 For the special case with $n=\ell=1$, when there is a single $k$-sparse sample of length $d$, 
 it is known that $m=\Theta(k\log (d-k))$ 
 measurements are necessary and sufficient to recover the support
 \cite{Wainwright_TIT_2009} with noisy measurements, when the inputs are worst-case. For the case
 with a single common support across multiple samples (i.e., $\ell=1$ and $n>1$), several previous works have studied the question of support recovery in the $m>k$ setting \cite{Tang_TIT_2010,Eldar_TIT_2010,Park_TIT_2017}.
 
On the other hand, in the $m<k$ regime, it was 
 shown recently in~\cite{Ramesh_ISIT_2019,Ramesh_arxiv_2019} that
 $n=\Theta((k^{2}/m^{2})\log d)$ samples are necessary and
 sufficient, 
 assuming
 a subgaussian generative model on the samples and
 measurement matrices and that the measurement matrices are drawn independently
 across samples.
 In fact, the lower bound of~\cite{Ramesh_arxiv_2019} applies to the worst-case
   setting as well, showing that while $k$ overall measurements\footnote{The overall measurements
     in our model are $nm$.} suffice when $m$ exceeds $k$, at least (roughly) $k^2/m$ measurements
   are required when $m<k$.

 In \cite{Obozinski_annals_stats_2011}, the problem of recovering the
 union of supports from linear measurements is considered. The setting
 allows for overlaps in the supports, but otherwise places no
 constraints.  The results when applied to the case of disjoint
 supports lead to a requirement of $m=O(k\log d)$ measurements per
 sample, and therefore are not applicable to our setting.
 Another line of related works is on multi-task learning/multi-task
 sparse estimation \cite{Wang_UAI_2015,Qi_ICML_2008,Argyriou_NIPS_2006} that use hierarchical Bayesian models and
 focus on recovering the samples, rather than the supports,
   and so still require at least $k$ measurements per sample.  However, none of these results shed light on how to recover multiple
 supports when we are constrained to observe less than $k$
 measurements per sample.
 
We note that there has been some recent work in the literature on
mixture of sparse linear regressions that considers the related
problem of recovering multiple sparse vectors from linear measurements
\cite{Yin_TIT_2019,Krishnamurthy_NIPS_2019,Li_COLT_2018,Chen2_arxiv_2019,Argyriou_NIPS_2006,Obozinski_SC_2010}.  The model shares some similarities with the
$m=1$ case in our setting, but there are some important differences.
Unlike our setting, these works consider the samples to be
deterministic and do a worst-case analysis.  Further, when $\ell=1$ in
the mixture of sparse linear regressions setting, we have multiple observations from the
{same} unknown sparse vector, thus reducing the problem to the
standard compressed sensing problem.  On the other hand, with
$\ell=m=1$ in our setting, we obtain a single observation from
{different} sparse vectors sharing a common support. The latter
setting is harder and requires $\Omega(k^{2}\log d)$ samples to recover the common support~\cite{Ramesh_arxiv_2019}.

 \subsection{Contributions and Techniques}
 Our approach builds on the following
 simple but crucial observation: since each sample is $k$-sparse with
 support equal to one of the $\cS_{i}$ (with the $\cS_{i}$ being
 disjoint), the sample covariance matrix
 $(1/n)\sum_{i=1}^{n}X_{i}X_{i}^{\top}$ exhibits a block structure
 under an unknown permutation of rows and columns.  This motivates the
 use of spectral clustering to recover the underlying
 supports. However, we only have access to low-dimensional projections of the
 data. 
To circumvent this difficulty, we compute
 $\Phi_{i}^{\top}Y_{i}$ and use these as a proxy for the data, and form an estimate of the diagonal entries of the covariance matrix of the samples.  We
 build further on this idea and propose an estimator that first determines the
 union of the $\ell$ supports from $\Phi_{i}^{\top}Y_{i}$ using
 the estimator in \cite{Ramesh_arxiv_2019}.
                We then construct an affinity matrix using  the proxy samples
$\Phi_{i}^{\top}Y_{i}$
and apply spectral clustering to estimate individual supports from the union.

                This clustering based approach to support recovery is new, and very different from traditional approaches to sparse recovery in the multiple sample setting. It reduces the support recovery problem to that of recovering the structure of a certain block matrix, a question which has been studied in the literature on community detection on graphs \cite{McSherry_FOCS_2001,Newman_PHYRE_2006,Hajek_JMLR_2016,Abbe_JMLR_2017}, and for which many algorithms are known. However, unlike the community detection problem where an instance of the adjacency matrix is available as an observation, the affinity matrix constructed in our case has a more complicated structure and requires a separate, careful analysis.

        We show that using our algorithm, it is possible to recover
                all the supports with \emph{fewer} than $k$ measurements per
                sample. Our algorithm is easy to implement and has computational complexity that scales linearly with ambient dimension $d$ and number of samples $n$.
                Our main result is an upper bound on the sample complexity of the multiple support recovery problem, stated in Theorem \ref{thm_multiple_supp_disj}.  In similar spirit to \cite{Ramesh_arxiv_2019}, which studied the case of a single unknown support in the measurement-constrained regime of $m<k$, our work provides an algorithm for the multiple support recovery problem in this regime. 
The analysis of our algorithm involves studying spectral properties of the (random) affinity matrix that has dependent and heavy-tailed entries.  
  We characterize these spectral quantities for the expected affinity matrix, which we show has a block structure, and then use results from matrix perturbation and matrix concentration to obtain performance guarantees for our algorithm.

Also, we provide experimental results on synthetic and real datasets, and show that the proposed algorithm is able to recover the unknown supports with very few measurements per sample. While our guarantees are for the case of disjoint supports, some simple heuristics can be used to handle the case of overlapping supports in practice, as we show in Section \ref{sec:simulation}.

\subsection{Organization}
In the next section, we formally state the problem and the assumptions
we make in our generative model setting. This is followed by a statement of our main
result, which provides an upper bound on the sample complexity of multiple support
recovery. 
We describe the estimator in Section \ref{sec:estimator}, and analyze its performance in Section \ref{sec:analysis}.
We provide experimental results in Section \ref{sec:simulation}.
The technical results required for the proofs in Section \ref{sec:analysis} are available in the appendices.
\subsection{Notation}  
For a matrix $A$, we denote its $(u,v)$th entry by $A_{uv}$. For a collection of matrices $\{A_{i}\}_{i=1}^{n}$, we use $A_{i}(u,v)$ to denote the $(u,v)$th entry of the $i$th matrix.
Also, for a vector $X_{j}$, $X_{ji}$ denotes the $i$th component of $X_{j}$. For sets $\cS$ and $\cS^{\prime}$, $\cS\Delta\cS^{\prime}=(\cS\backslash\cS^{\prime})\cup(\cS^{\prime}\backslash\cS)$ denotes their symmetric difference.
For a vector $a\in\mR^{d}$, $\supp{a}$ denotes the subset $\{i\in[d]:a_{i}\ne 0\}$, $\mathtt{diag}(a)$ denotes the $d\times d$ diagonal matrix with entries of $a$ on the diagonal, and $[d]$ denotes the set $\{1, 2, \ldots, d\}$.
For a matrix $A$, we use $\|A\|_{op}\ed\sup_{\|x\|_{2}=1}\|Ax\|_{2}$ to denote the operator norm of $A$. When $A$ is symmetric, $\|A\|_{op}$ equals the magnitude of the largest eigenvalue of $A$.
We use the shorthand $Z_{1}^{n}$ to denote independent and identically distributed random variables $Z_{1},\ldots,Z_{n}$. For $u>0$, we use $\Gamma(u)\ed\int_{0}^{\infty}x^{u-1}e^{-x}dx$ to denote the gamma function.

\section{Problem formulation and main result}\label{sec:formulation}

We consider a Bayesian setup for modeling samples $X_1 , \ldots, X_n$ taking values in $\mR^d$ with
$\supp{X_i}\ed \{j\in [d]: X_{ij}\neq 0\}\in \{\cS_1,\ldots,\cS_\ell\}$,
where $\cS_i \subset [d]$ are unknown sets such that $|\cS_i|=k$. Specifically,
we consider distributions 
$\dP^{(1)}, \ldots, \dP^{(\ell)}$ with\footnote{We consider distributions $\dP$ with densities $f_\dP$ with respect to the Lebesgue measure 
	and define $\supp{\dP}=\{x\in \mR^d: f_\dP(x)>0\}$.} 
\[
\supp{\dP^{(i)}}=\{x\in\mR^{d}:\mathtt{supp}(x)=\cS_{i}\}, \quad i \in [\ell],
\]
and $n$ i.i.d. samples $X_{1},\ldots,X_{n}$ taking values in $\mR^{d}$
and generated from a common mixture distribution 
\begin{align}
\label{mixture}
  \dP_{\cS_{1},\ldots,\cS_{\ell}}=\frac 1 \ell \sum_{i=1}^{\ell} \dP^{(i)},
\end{align}
parameterized by the tuple $(\cS_{1}\ldots,\cS_{\ell})$.
In fact, we assume that
$\dP^{(i)}$ is a multivariate subgaussian distribution (see Appendix~\ref{app:moment_concentration} for the definition of a subgaussian random variable) with zero mean and diagonal covariance matrix $K_{\lambda_{i}}=\diag{\lambda_{i}}$,
where the parameter $\lambda_{i}$ is a $d$-dimensional vector for which $\supp{\lambda_i}=\cS_i$, $i\in[\ell]$.  More concretely, we make
the following assumption. 
\begin{assumption}\label{assump_x}
	For a sample $X_{j}\sim \dP^{(i)}$, $j\in[n]$, $i \in [\ell]$, and an absolute constant $c$, 
	$\bE{\dP^{(i)}}{X_{j}X_{j}^T}=\diag{\lambda_i}$ with $\lambda_i \in \mR_+^d$, $\supp{\lambda_i}=\cS_i$, and  $X_{j}$ has independent, zero mean entries with its $t$th entry $X_{jt}$ satisfying
	$X_{jt} \sim\sbg(c \lambda_{it})$, $t\in[d]$. 
	{Furthermore, for each $i\in [\ell]$ and $t\in\cS_{i}$,  $\lambda_{it}=\lambda_{0}>0$, and $\bE{\dP^{(i)}}{X_{jt}^{4}}=\rho$.}
\end{assumption}

For samples $X_1, \ldots, X_n$ generated as above, we are given access to
projections $Y_{i}=\Phi_{i}X_{i}$, $i\in [n]$, where the matrices
$\Phi_{i}\in\mR^{m\times d}$ are random and independent for different $i\in [n]$.
Our analysis requires handling higher order moments of the entries of the measurement matrices, which motivates the following assumption.

\begin{assumption}\label{assump_phi}
	The {$m\times d$} measurement matrices
        $\Phi_{1},\ldots,\Phi_{n}$ are independent, with entries
        that are independent and zero-mean. Furthermore,
                $\Phi_{i}(u,v)\sim\sbg(c^\prime/m)$, and the moment
                conditions $\bEE{\Phi_{i}(u,v)^{2}}=1/m$ and $\bEE{\Phi_{i}(u,v)^{2q}}=c_{q}/m^{q}$ hold for
                $q\in\{2,3,4\}$, where $c_{q}$ and $c^{\prime}$ are absolute
                constants.
\end{assumption}
 The assumption above holds, for example, when $\Phi_{i}(u,v)\sim\cN(0,1/m)$ or when $\Phi_{i}(u,v)$ are Rademacher, i.e., take values from $\{1/\sqrt{m},-1/\sqrt{m}\}$ with equal probability. 
Also, these moment assumptions can be relaxed to hold up to constant factors from above and below, i.e.,   $\bEE{\Phi_{i}(u,v)^{2q}}=\Theta(1/m^{q})$.

Our goal is to recover the supports $\{\cS_{1},\ldots,\cS_{\ell}\}$ using
 $\{Y_{i},\Phi_{i}\}_{i=1}^{n}$. The error criterion will be the average of the per support errors, measured using the set difference between the true and estimated supports. 
Specifically, denote by $\Sigma_{\ell,d}^{\prime}$ the set consisting of all $\ell$ tuples of subsets
$(\cS_1, \ldots, \cS_\ell)$ such that $\cS_{i}\subset [d]$, $i\in[\ell]$, and $\cS_{i}\cap\cS_{j}=\emptyset$, for all $i\ne j$.
Let $\Sigma_{k,\ell,d}\subset\Sigma_{\ell,d}^{\prime}$ be such that $|\cS_i|=k$, for all $i\in[\ell]$. Denote by $\cG_{\ell}\ed \{\sigma:[\ell]\rightarrow [\ell]\}$ the set of all permutations on $[\ell]$. We have the following definition.
\begin{definition}
An {\em $(n,\varepsilon,\delta)$-estimator for $\Sigma_{k,\ell,d}$} is a mapping
$e: (Y_{1}^{n},\Phi_{1}^{n})\mapsto (\hat{\cS}_1,\ldots, \hat{\cS}_\ell)\in \Sigma_{\ell,d}^{\prime}$
for which
\begin{align}
  \dP_{\cS_{1},\ldots,\cS_{\ell}}\bigg(\exists\,\sigma\in\cG_{\ell}~\text{s.t.}~\sum_{i=1}^\ell\left|\cS_{i}\Delta\hat{\cS}_{\sigma(i)}\right|< k\varepsilon\ell^{2}\bigg)
 \geq 1-\delta, 
  \label{err_eps}
\end{align}
for all $(\cS_1, \ldots, \cS_\ell)\in\Sigma_{k,\ell,d}$, where $\cS_{1}\Delta \cS_{2}$ denotes the symmetric difference between
sets $\cS_{1}$ and $\cS_{2}$.
\end{definition}
For fixed $\ell, m, k, d,
\varepsilon$, and $\delta$, the least $n$ such that we can find an
$(n,\varepsilon, \delta)$-estimator for $\Sigma_{k,\ell,d}$ is termed
the {\em sample complexity of multiple support recovery}, which we denote by $n^{*}(\ell,m,k,d,\varepsilon,\delta)$. In our main result stated below, we provide an upper bound on $n^{*}(\ell,m,k,d,\varepsilon,\delta)$.

\begin{theorem}\label{thm_multiple_supp_disj}
	Let $m,k, d, \ell\in \mN$ with
        $\log k\ge 2$. Further, let $(\log k\ell)^{2}\le m< k$, and  $1/k\ell \leq \varepsilon\leq 1/\ell$. Then, under
        Assumptions \ref{assump_x} and \ref{assump_phi}, the sample complexity of multiple support recovery satisfies        
\begin{align*}
  n^{*}(\ell,m,k,d,\varepsilon,\delta)
  =O\!\left(\!\max\bigg\{\frac{1}{\varepsilon}\bigg(\frac{k\ell}m\right)^{4} (\log k)^{4}\log k\ell\log\frac{1}{\delta},
\frac{k^{2}\ell^{2}}{m^{2}}\log \frac{k\ell(d-k\ell)}{\delta}\bigg\}\bigg).
\end{align*}      
\end{theorem}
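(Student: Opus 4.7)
The plan is to analyze the two-step estimator announced in the introduction: first recover the union $\cS^{*}\ed\bigcup_{j=1}^{\ell}\cS_{j}$ of the $\ell$ supports, and then partition $\cS^{*}$ into its pieces by spectral clustering on a carefully chosen affinity matrix. The two terms inside the maximum will correspond to the two steps, and the overall estimator succeeds on the intersection of the two success events, controlled by a union bound. For the first step I apply the single-support recovery estimator of \cite{Ramesh_arxiv_2019} to the mixture $\dP_{\cS_{1},\ldots,\cS_{\ell}}$, treating $\cS^{*}$ as an unknown support of size $k\ell$. Since the $\cS_{j}$ are disjoint, the mixture has diagonal covariance $\diag{\bar\lambda}$ with $\bar\lambda=(\lambda_{0}/\ell)\mathbbm 1_{\cS^{*}}$, and Assumption \ref{assump_x} is inherited up to a constant rescaling of the subgaussian parameter. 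The guarantee of \cite{Ramesh_arxiv_2019} then gives $\hat\cS^{*}=\cS^{*}$ with probability at least $1-\delta/2$ whenever $n\ge n_{1}=O\big((k\ell/m)^{2}\log(k\ell(d-k\ell)/\delta)\big)$, which is exactly the second term in the maximum.

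Restricting attention to the coordinate set $\hat\cS^{*}$, I form a $k\ell\times k\ell$ affinity matrix $A$ from the proxy samples $Z_{i}=\bphi_{i}^{\top}Y_{i}=\bphi_{i}^{\top}\bphi_{i}X_{i}$, with entry $A_{uv}=\frac{1}{n}\sum_{i=1}^{n}Z_{iu}^{2}Z_{iv}^{2}$ (possibly truncated coordinate-wise). This fourth-order statistic is natural because on the true data $X_{iu}^{2}X_{iv}^{2}$ vanishes identically whenever $u$ and $v$ lie in different $\cS_{j}$, so the block structure of the supports is already encoded in $\bEE{X_{iu}^{2}X_{iv}^{2}}$; passing from $X_{i}$ to $Z_{i}$ introduces only extra contributions that depend on moments of $\bphi_{i}^{\top}\bphi_{i}$ and are nearly coordinate-independent. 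Expanding $Z_{iu}^{2}Z_{iv}^{2}$ in entries of $\bphi_{i}^{\top}\bphi_{i}$ and $X_{i}$ and using Assumption \ref{assump_x} together with the explicit $q\in\{2,3,4\}$ moment conditions of Assumption \ref{assump_phi}, I will show $\bar A\ed\bEE{A}=B+R$, where $B$ is a block-constant matrix with one block per $\cS_{j}$ whose within-block entries exceed between-block entries by a strictly positive gap (driven by the kurtosis term $\rho$ and by the $k/m$ factors arising from off-diagonal moments of $\bphi_{i}^{\top}\bphi_{i}$), and $R$ is an approximately rank-one background. Consequently $\bar A$ has $\ell$ top eigenvalues separated from the remainder by a spectral gap $\gamma$, and the corresponding top-$\ell$ eigenvectors are piecewise-constant on the partition $\{\cS_{1},\ldots,\cS_{\ell}\}$.

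With this block picture in hand, I control the random perturbation $\|A-\bar A\|_{op}$ and plug the bound into standard spectral-clustering estimates. Each summand $Z_{iu}^{2}Z_{iv}^{2}$ is an eighth-degree polynomial in subgaussian variables, hence sub-Weibull, so I will first truncate at a level calibrated to its subexponential Orlicz norm, control the truncation bias in operator norm via moment bounds coming from the fourth-moment assumption on $X_{i}$ and the explicit moment bounds on $\bphi_{i}$, and apply the matrix Bernstein inequality to the truncated sum. This yields $\|A-\bar A\|_{op}\le\gamma/(c\ell)$ for a suitable constant $c$, with probability at least $1-\delta/2$, provided $n\ge n_{2}$ where $n_{2}$ equals the first term of the claimed bound; the factor $(\log k)^{4}$ absorbs the polylogarithmic cost of handling the sub-Weibull tails of the degree-eight summands, while $\log(k\ell)$ is the matrix Bernstein dimension factor. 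Finally, the Davis--Kahan theorem converts this operator-norm bound into a bound on the principal angle between the top-$\ell$ invariant subspaces of $A$ and $\bar A$, and a standard spectral-clustering argument in the spirit of \cite{McSherry_FOCS_2001,Hajek_JMLR_2016} shows that $\ell$-means on the rows of the resulting eigenvector matrix mislabels at most $O(\ell\,\|A-\bar A\|_{op}^{2}/\gamma^{2})$ coordinates of $\hat\cS^{*}$; tuning this quantity to be at most $k\varepsilon\ell^{2}/2$ gives the error criterion in \eqref{err_eps} for some permutation $\sigma\in\cG_{\ell}$.

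The hardest step will be the concentration bound on $\|A-\bar A\|_{op}$. Three issues conspire: the summands are degree-eight polynomials in subgaussian variables, so matrix Bernstein does not apply directly; entries of $A$ within the same row are strongly dependent through the shared $\bphi_{i}$, which forces either a Hanson--Wright type argument conditional on $X_{i}$ or a careful decoupling of the $X_{i}$ and $\bphi_{i}$ randomness; and the target perturbation $\gamma/\ell$ is small, because $\gamma$ itself shrinks with $\ell$ and $m$ through the $1/m$ off-diagonal moments of $\bphi_{i}^{\top}\bphi_{i}$. Balancing the truncation level, the per-entry variance, and the matrix Bernstein dimension factor against this small gap is what will produce the $(k\ell/m)^{4}$ scaling and the $(\log k)^{4}\log(k\ell)\log(1/\delta)$ logarithmic factors in the final sample complexity.
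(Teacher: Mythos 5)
Your architecture is the same as the paper's: recover the union via the estimator of \cite{Ramesh_arxiv_2019} (giving the second term in the max), restrict to the recovered union, build exactly the affinity matrix $T_{uv}=\frac1n\sum_i Z_{iu}^2Z_{iv}^2$, show $\bEE{T}$ is block-constant with a spectral gap, and convert an operator-norm perturbation bound into a misclustering bound via Davis--Kahan and an $\ell$-means rounding step. Where you genuinely diverge is the concentration step. You propose truncation at a sub-Weibull-calibrated level plus matrix Bernstein on the truncated rank-one summands; the paper explicitly considers and rejects MGF-based routes (it notes the summands' MGFs are unbounded and that tail-splitting ``leads to measurement matrix dependent quantities that are difficult to handle''), and instead invokes an extension of Rudelson's moment-only bound on $\bEE{\|\frac1n\sum_i a_ia_i^\top-\bEE{T}\|_{op}}$, followed by Markov's inequality to get a \emph{constant}-probability guarantee, and then a median-of-estimates boosting lemma to obtain the multiplicative $\log\frac1\delta$. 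Your route, if executed, would give a direct high-probability bound (with $\delta$ appearing additively inside a logarithm), which is in principle at least as strong; but you should be aware that the truncation-bias control requires precisely the moment bounds $\bEE{\|a_i\|_2^q}\le c_0^q(\Gamma(q))^2\lambda_0^q(k\sqrt{k\ell}/m)^q$ that the paper derives through a Hanson--Wright argument conditional on $\Phi$, and that the $(\Gamma(q))^2$ growth (tails of $\|a_i\|_2^2$ like $e^{-ct^{1/4}}$) is what forces the $(\log k)^4$ factor in either approach. So the two routes land in the same place; the paper's choice trades a sharper $\delta$-dependence for a cleaner execution.

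Two specific corrections. First, your stated mechanism for the within-block versus between-block gap is wrong: in the paper's computation of $\on-\off$ the kurtosis contributions cancel exactly ($\gamma_1^s+\gamma_1^d-2\gamma_1^{sd}=0$), and the gap $\on-\off\ge\lambda_0^2/\ell$ comes entirely from the second-moment ($\lambda_0^2$) terms, specifically the $O(1)$ contributions of $\|\bphi_u\|_2^4\|\bphi_v\|_2^4$-type products when both $u,v$ lie in the active support; the $k/m$ background terms are common to both blocks and do not contribute. The conclusion (eigengap $\Delta_\ell\ge\lambda_0^2 k/\ell$) survives, but a proof built on the kurtosis term would not close. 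Second, for the union step the relevant parameter is not the mixture variance $\lambda_0/\ell$ but the subgaussian parameter of the mixture, which the paper shows equals the entrywise maximum $\lambda_1\lor\cdots\lor\lambda_\ell$; this is what lets \cite[Theorem~3]{Ramesh_arxiv_2019} be applied verbatim with support size $k\ell$. Neither issue changes the final rates, but both need to be fixed for the argument to be complete.
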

  
   \begin{remark}
    	For values of $\varepsilon$ lower than $1/\ell k$, the result from Theorem \ref{thm_multiple_supp_disj} continues to hold with $\varepsilon$ set to $1/\ell k$. This is because $\varepsilon=1/\ell k$ corresponds to exact recovery of the supports.
    \end{remark}
  We present the
algorithm that attains this performance in the next section, and prove the theorem in Section \ref{proof_thm_multiple_supp_disj}.

Our estimator works in two steps by estimating the union of supports first and then estimating each support, and the sample complexity bound above is obtained by analyzing each of the two steps. To the best of our knowledge, this is the first estimator that can recover multiple supports under the constraint of $m<k$ linear measurements per sample.
We also note that for the problem of recovering a single support exactly,
it was shown in~\cite{Ramesh_arxiv_2019} that roughly
$\Omega((k/m)^2\log k (d-k))$ samples are necessary. Thus, our sample
complexity upper bound above matches this lower bound
quadratically. However, there is a gap between the lower bound and the
upper bound, which is an interesting problem for future research.

\section{The estimator}\label{sec:estimator}
Our first step will be to recover the union of the $\ell$ underlying
supports, and then refine this estimate to finally recover the
individual supports. To estimate the union, we use the estimator
described in \cite{Ramesh_ISIT_2019}. Following this, we use a
spectral clustering based approach to recover the individual
supports. We provide more details in the next two subsections.
\subsection{Recovering the union of supports}\label{sec:estimator_union}
We first observe that 
 the samples $X_i$ have an effective covariance matrix whose diagonal 
  has support equal to the union of the supports, which allows us to use
  the results from \cite{Ramesh_arxiv_2019} to recover the union. Specifically, 
we form ``proxy samples'' 
$\hat{X}_{i}=\Phi_{i}^{\top}Y_{i}=\Phi_{i}^{\top}\Phi_iX_i$
and use the diagonal of the sample covariance matrix of $\hat{X}_{i}$ as an estimate
for the diagonal of the covariance matrix for $X_i$. We will show that the $k\ell$ largest
entries of the recovered diagonal correspond to the union of the
supports.

Formally, define $\cS_{\text{un}}\ed\cup_{i=1}^{\ell}\cS_{i}$ to be
the union of the $\ell$ unknown disjoint supports and note that
$|\cS_{\text{un}}|=k\ell$.  We use the estimator described in~\cite{Ramesh_arxiv_2019} and form the statistic
$\tilde{\lambda}\in\mR^{d}$ as follows. First, define vectors
$a_{1}^{\prime},\ldots,a_{n}^{\prime}$ with entries
\begin{align}\label{eq:def_a} 
a_{ji}^{\prime}\ed(\Phi_{ji}^{\top}\by_{j})^{2},\quad i\in[d].
\end{align}
Each $a_{j}^{\prime}$, $j\in[n]$, can be thought of as a crude estimate for the variances along the $d$ coordinates obtained using the $j$th sample. We then define the average of these vectors as
\begin{align}\label{eq:def_lambdatil}
\tilde{\lambda}\ed\frac{1}{n}\sum_{j=1}^{n}a_{j}^{\prime}.
\end{align}
This statistic captures the variance along each coordinate of $X_i$.
Due to the
averaging across samples, we expect a larger value of the
statistic along coordinates that are present in at least one of the supports. On the
other hand, coordinates that are not present any support should
result in a smaller value of the statistic.
As shown in \cite{Ramesh_arxiv_2019}, such a separation between the estimate values indeed occurs when $n$ is sufficiently large.
The algorithm declares the indices of the $k\ell$ largest entries of $\tilde{\lambda}$ as the estimate for $\cS_{\text{un}}$. 
Letting $\tilde{\lambda}_{(1)}\ge\cdots\ge\tilde{\lambda}_{(k\ell)}$
represent the sorted entries of $\tilde{\lambda}$, the estimate
$\hat{\cS}_{\text{un}}$ for the union is 
\begin{align}\label{est_union}
\hat{\cS}_{\text{un}}=\{(1),\ldots,(k\ell)\},
\end{align}
where we assume the size of the union to be known. In practice, $\tilde{\lambda}$ can be used to estimate the size of the union as well
by
sorting the entries of $\tilde{\lambda}$ and using the index where there is a sharp decrease in the values as the estimate for $k\ell$, similar to the approach of using scree plots to determine model order in problems such as PCA \cite{Zhu_Elsevier_2006}. 


\subsection{Recovering individual supports}\label{sec:estimator_cluster}
We now describe the main step of our algorithm where we partition the coordinates in $\hat{\cS}_{\text{un}}$
recovered in the first step into disjoint support estimates $\hat{\cS}_{1},\ldots,\hat{\cS}_{\ell}$.
We will use $a_{1}^{\prime},\ldots,a_{n}^{\prime}$ described in \eqref{eq:def_a} for this purpose.
Since we now have an estimate for the union, we will restrict $a_i^\prime$ to coordinates in
$\hat{\cS}_{\text{un}}$, and denote them as  $a_i\in\mR^{k\ell}_+$. Also, without loss of generality, we set
$\hat{\cS}_{\text{un}}=[k\ell]$.\footnote{This is to keep notation simple. For a general
  $\hat\cS_{\text{un}}$, we can have a function
  $g:[k\ell]\rightarrow \hat{\cS}_{\text{un}}$ that provides the
  mapping of each coordinate of $a_{i}$ to its corresponding value in
  $\hat{\cS}_{\text{un}}$ as indicated in step $7$ of Algorithm \ref{algo_mutliple_disj}.}
  \usetikzlibrary{matrix}
  \usetikzlibrary{calc,fit}
  \tikzset{%
  	highlight1/.style={rectangle,rounded corners,color=red!,fill=blue!15,draw,fill opacity=0.5,thick,inner sep=0pt}
  }
  \tikzset{%
  	highlight2/.style={rectangle,rounded corners,color=red!,fill=blue!15,draw,fill opacity=0.5,thick,inner sep=0pt}
  }
  \begin{figure}[t]
  	\begin{equation*}
  	\renewcommand{\arraystretch}{1.5}
  	\bEE{T}=
  	\begin{tikzpicture}[baseline=(m.center)]
  	\matrix (m) [matrix of math nodes, left delimiter={[}, right delimiter={]},
  	row sep=1mm, nodes={minimum width=3em, minimum height=1.6em}] {
  		\mu_{0} & \mu^{s} & \mu^{d}  & \mu^{d} \\
  		\mu^{s} &  \mu_{0} & \mu^{d} & \mu^{d}  \\
  		\mu^{d}& \mu^{d} & \mu_{0} & \mu^{s} \\
  		\mu^{d}& \mu^{d} & \mu^{s} & \mu_{0} \\
  	}; 
  	\node[highlight2, fit=(m-1-1.north west) (m-2-2.south east)] {};
  	\node[highlight1, fit=(m-3-3.north west) (m-4-4.south east)] {};
  	\end{tikzpicture}
  	\adjustbox{raise=.7cm}{$\Bigg\}\cS_{1}$}	
  	\hspace{-.6cm}\adjustbox{raise=-.9cm}{$\Bigg\}\cS_{2}$}
  	\end{equation*}
  	\caption{Block structure of the expected clustering matrix when $\ell=2$ and the supports are disjoint, under appropriate permutation of rows and columns. }\label{fig_block_mtx}
  \end{figure}
  
  Our approach is the following: we construct a $k\ell\times k\ell$ \emph{affinity matrix} $T$ and perform spectral clustering using this matrix, which will partition the coordinates in $[k\ell]$ into $\ell$ groups.
  The main step here is to construct  an affinity matrix $T$ that can provide reliable clustering, and we will use the per-sample variance estimates $a_{1},\ldots,a_{n}$ for this purpose. 
  The idea is that for any coordinate pair $(u,v)\in[k\ell]\times [k\ell]$, if both $u$ and $v$ belong to the same support, then we expect the product $a_{iu}a_{iv}$ to have a ``large" value for most of the sample indices $i\in[n]$. On the other hand, if $u$ and $v$ belong to different supports, then $a_{iu}a_{iv}$ will be close to zero for most $i\in[n]$. Although each $a_{i}$ individually is not a good estimate for the support of $X_{i}$, the averaging over $n$ makes the estimate reliable. Formally, we construct the $k\ell \times k \ell$ matrix $T$ with entries
  \begin{equation}\label{eq:T_mtx_entry}
  T_{uv}\ed\frac{1}{n}\sum_{j=1}^{n}a_{ju}a_{jv},\quad (u,v)\in[k\ell]\times[k\ell].
  \end{equation}
  The key observation here is that the \emph{expected} value of the random matrix $T$ has a block structure when the rows and columns are appropriately permuted, and this block structure corresponds to memberships of each of the indices in $[k\ell]$ to one of the underlying supports. This is illustrated in Figure \ref{fig_block_mtx} for $\ell=2$, and we will examine this structure in detail in the next section. 
  A well-known method to find these memberships is to use spectral clustering \cite{Rohe_annals_stats_2011,Newman_PHYRE_2006}, which uses properties of the eigenvectors of block-structured matrices to determine the partition. For instance, when $\ell=2$, the \emph{sign} of the second leading eigenvector of $\bEE{T}$ provides a way to partition the coordinates in $[k\ell]$ into two groups. When $\ell>2$, spectral clustering makes use of multiple eigenvectors and a nearest neighbor step to identify the partition. A full description of the solution in the general case is provided in Algorithm \ref{algo_mutliple_disj}.
  


  In practice, we only have access to $T$,
and not $\bEE{T}$ to which the discussion above applies.  In
what follows, we show that the eigenvectors of $T$ itself suffice,
provided we have sufficiently many samples. At a high level, our
analysis follows that of spectral clustering in the stochastic block model (SBM)
setting and the goal is to show that the eigenvectors of $\bEE{T}$ and
its ``perturbed'' version $T$ are close to each other. This can be shown
using the Davis-Kahan theorem from matrix perturbation theory, which states that the angle between any two
corresponding eigenvectors of $T$ and $\bEE{T}$ is small provided the
error matrix $T-\bEE{T}$ has small spectral norm.
The key challenge, therefore, is to control $\|T-\bEE{T}\|_{op}$. 

  Unlike typical settings, the entries of $T$ are not independent, in addition to being heavy tailed. Standard methods based on the $\varepsilon$-net argument are, therefore, difficult to apply in this setting.
  One strategy could be to show exponential concentration around the mean for
  \emph{each} entry of $T$. Once each entry of $T$ is bounded with high probability, one can bound the Frobenius
  norm and therefore the spectral norm of the error matrix. 
  However, the moment generating function (MGF) of 
  each summand in \eqref{eq:T_mtx_entry} is unbounded, so 
  deriving a tail bound for the sum requires 
  a more careful tail splitting method (see, for example, \cite[Exercise 2.1.7]{Tao_RMT_2016}), and leads to measurement matrix dependent quantities that are difficult to handle. Due to the same reason, techniques from matrix concentration that involve bounding the MGF of the summands \cite[Theorem 6.1, Theorem 6.2]{Tropp_FOCM_2012} cannot be used in our setting.
  
  To circumvent this difficulty, we turn to a beautiful result by Rudelson \cite{Rudelson_JFA_1999}, that characterizes the
  expected value of the quantity $\|T-\bEE{T}\|_{op}$, when $T$ is a sum of independent rank-one matrices and only requires certain moment assumptions on the summands. This is exactly our setting since \eqref{eq:T_mtx_entry} can equivalently be represented as $T=(1/n)\sum_{i=1}^{n}a_{i}a_{i}^{\top}$. An
  application of Markov inequality followed by the Davis-Kahan theorem then shows that the eigenvectors of
  $T$ and $\bEE{T}$ are close to each other. We provide more details about the analysis in the next section. 

    \begin{algorithm}[th]
    	\DontPrintSemicolon	
    	\KwInput{Measurements $\{Y_{i}\}_{i=1}^{n}$, Measurement matrices $\{\Phi_{i}\}_{i=1}^{n}$, $k$, $\ell$}
    	\KwOutput{Support estimates $\hat{\cS}_{1},\ldots,\hat{\cS}_{\ell}$}
    	
    	Form variance estimates $a_{1}^{\prime},\ldots,a_{n}^{\prime}$ with entries
    	\begin{align*}
    	a_{ji}^{\prime}=(\Phi_{ji}^{\top}Y_{j})^{2},\quad i\in[d].
    	\end{align*}	
    	
    	Compute
    	\begin{align*}
    	\tilde{\lambda}=\frac{1}{n}\sum_{i=1}^{n}a_{i}^{\prime}.
    	\end{align*}
    	Sort entries of $\tilde{\lambda}$ to get $\tilde{\lambda}_{(1)}\ge\cdots\ge\tilde{\lambda}_{(d)}$ and output estimate for union
    	\begin{align*}
    	\hat{\cS}_{\text{un}}=\{(1),\ldots,(k\ell)\}.
    	\end{align*}
    	
    	Restrict $a_{1}^{\prime},\ldots,a_{n}^{\prime}$ to the coordinates in $\hat{\cS}_{\text{un}}$, to get $a_{1},\ldots,a_{n}$.
          Also, let $g:[k\ell]\rightarrow \hat{\cS}_{\text{un}}$ denote the mapping from the coordinates of $a_{i}$ to the true coordinate in $\hat{\cS}_{\text{un}}$.
    	
    	 Construct affinity matrix $T\in\mR^{k\ell\times k\ell}$ as
    	\begin{align*}
    	T=\frac{1}{n}\sum_{i=1}^{n}a_{i}a_{i}^{\top}.
    	\end{align*}

    	Compute the $\ell$ leading eigenvectors $\hat{v}_{1},\ldots,\hat{v}_{\ell}$
  of $T$
          and let these be the columns of $\hat{V}\in\mR^{k\ell\times l}$. 
    	
    %
    	
    	{\em (The $\ell$-means step)} Find
    	        $C=\arg\min_{U\in \cU_\ell}\|U- \hat{V}\|_F^2$, where
    	        $\cU_\ell$
    	        is the set of all $k\ell\times \ell$ matrices with at most $\ell$
    	        distinct rows.
    	 
    		Denote the indices of identical rows of $C$ as 
    	        sets $\hat{\cS}_1^{\prime}, \ldots, \hat{\cS}_\ell^{\prime}$. 
    	        	Declare
    	        	\begin{align*}
    	        	\hat{\cS}_{i}=\{g(j)\in\hat{\cS}_{\text{un}}: j\in\hat{\cS}_{i}^{\prime}\}.
    	        	\end{align*}
    	\caption{Multiple support recovery}
    	\label{algo_mutliple_disj}
    \end{algorithm}
  
  \section{Analysis of the estimator}\label{sec:analysis}
  \subsection{Recovering the union: Analysis}\label{sec:analysis_union}
  Our analysis of the
  probability of exactly recovering $\cS_{\text{un}}$ using the
  estimator in \eqref{est_union} follows the approach in
  \cite{Ramesh_arxiv_2019}. The key difference is that the samples are
  now drawn from a \emph{mixture} of subgaussian distributions.
  In the next result, we show that if $X$ is drawn from the mixture described in \eqref{mixture}, then
  it is subgaussian with covariance matrix $K_{\lambda_{\text{un}}}$
  where $\lambda_{\text{un}}=\lambda_{1}\lor\cdots\lor \lambda_{\ell}$,
  where $\lor$ denotes entrywise maximum. This helps us to determine the effective parameter that characterizes the input distribution, after which we can use the result from \cite{Ramesh_arxiv_2019}. We prove this result for the two component mixture; it can be extended easily to the general case.
  \begin{lemma}\label{lem:mix_subg}
  	Let $X$ and $Y$ be zero-mean subgaussian random variables with parameters $a^{2}$ and $b^{2}$, respectively. 
  	Further, let $\dP_{X}$ and $\dP_{Y}$ denote the distributions
          of $X$ and $Y$. Then, the random variable $Z$ with distribution
          given by the mixture $q\dP_{X}+(1-q)\dP_{Y}$ with $q\in[0,1]$
          is subgaussian with parameter $\max\{a^{2},b^{2}\}$.  
  \end{lemma}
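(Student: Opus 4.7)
The plan is to work directly from the moment generating function characterization of subgaussianity. Recall that a zero-mean random variable $W$ is subgaussian with parameter $\sigma^{2}$ if $\bEE{e^{\lambda W}}\le e^{\lambda^{2}\sigma^{2}/2}$ for all $\lambda\in\mR$ (the definition used in Appendix~\ref{app:moment_concentration}). The proof reduces to computing the MGF of the mixture and dominating it by the MGF of a subgaussian with parameter $\max\{a^{2},b^{2}\}$.

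First I would write the MGF of $Z$ as a convex combination of the MGFs of $X$ and $Y$: since the law of $Z$ is $q\dP_{X}+(1-q)\dP_{Y}$, linearity of expectation gives
\begin{align*}
\bEE{e^{\lambda Z}}=q\,\bEE{e^{\lambda X}}+(1-q)\,\bEE{e^{\lambda Y}}.
\end{align*}
Next, using the subgaussian hypotheses on $X$ and $Y$, each term is bounded by $q\,e^{\lambda^{2}a^{2}/2}$ and $(1-q)\,e^{\lambda^{2}b^{2}/2}$ respectively. Setting $\sigma^{2}\ed\max\{a^{2},b^{2}\}$, both exponentials are at most $e^{\lambda^{2}\sigma^{2}/2}$, so
\begin{align*}
\bEE{e^{\lambda Z}}\le q\,e^{\lambda^{2}\sigma^{2}/2}+(1-q)\,e^{\lambda^{2}\sigma^{2}/2}=e^{\lambda^{2}\sigma^{2}/2},
\end{align*}
which holds for every $\lambda\in\mR$. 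Also $\bEE{Z}=q\bEE{X}+(1-q)\bEE{Y}=0$, so $Z$ is zero-mean subgaussian with parameter $\sigma^{2}=\max\{a^{2},b^{2}\}$.

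There is no real obstacle here; the statement is essentially a one-line consequence of the convexity of the mixture representation together with monotonicity of the exponential. The only mild care needed is to make sure the definition of the subgaussian parameter one is using is the MGF-based one (as opposed to an Orlicz-norm or tail-based definition, which would differ only up to absolute constants and would yield the same conclusion by an analogous argument). The extension to an $\ell$-component mixture $\sum_{i=1}^{\ell}q_{i}\dP^{(i)}$ with $\sum_{i}q_{i}=1$ is immediate by the same calculation, giving parameter $\max_{i\in[\ell]}\sigma_{i}^{2}$, which is exactly what is needed in Section~\ref{sec:analysis_union}.
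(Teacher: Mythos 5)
Your proposal is correct and follows exactly the paper's argument: write $\bEE{e^{\theta Z}}$ as the convex combination $q\,\bEE{e^{\theta X}}+(1-q)\,\bEE{e^{\theta Y}}$ and dominate each term by $e^{\theta^{2}\max\{a^{2},b^{2}\}/2}$. The additional remarks on the zero-mean check and the $\ell$-component extension are consistent with what the paper asserts without proof.
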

  \begin{proof}
  Upon bounding the MGF of $Z$, we see that
  	\begin{align*}
  	\bEE{e^{\theta Z}}&=q\bEE{e^{\theta X}}+(1-q)\bEE{e^{\theta Y}}\\
  	&\le q e^{\frac{\theta^{2} a^{2}}{2}}+(1-q)e^{\frac{\theta^{2} b^{2}}{2}}\\
  	&\le e^{\frac{\theta^{2} c^{2}}{2}},
  	\end{align*}
  	where $c=\max\{a,b\}$.
  \end{proof}  
  Thus, the samples $X_{1}, X_2, \ldots, X_n$
  have entries that are independent and subgaussian with covariance
  matrix $K_{\lambda_{\text{un}}}$, where   
  $\lambda_{\text{un}}=\lambda_{1}\lor\cdots\lor \lambda_{\ell}$.
  Therefore, results from \cite{Ramesh_arxiv_2019} imply that we can
  recover $\cS_{\text{un}}$ from the variance estimate
  \eqref{eq:def_lambdatil} by retaining the $k\ell$ largest entries. In
  particular, a direct application of 
  \cite[Theorem 3]{Ramesh_arxiv_2019} with support size set to $k\ell$, gives us
  the following result. 
  \begin{theorem}\label{thm:supp_union}
  	Let $\hat{\cS}_{\text{un}}$ described in \eqref{est_union} be the estimate for the union $\cS_{\text{un}}$. Then, for every $\delta>0$,
  	\begin{align*}
  	\bPr{\hat{\cS}_{\text{un}}\ne \cS_{\text{un}}}\le\delta,
  	\end{align*}	
  	provided $m\ge (\log k\ell)^{2}>1$, and
  	\begin{align*}
  	n\ge  c~\bigg(\frac{k^{2}\ell^{2}}{m^{2}}\log\frac{k\ell(d-k\ell)}{\delta}\bigg),
  	\end{align*}
  	for an absolute constant $c$.
  \end{theorem}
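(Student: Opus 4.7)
The plan is to reduce the union-recovery problem to a single-support recovery instance with effective support size $k\ell$, after which Theorem 3 of \cite{Ramesh_arxiv_2019} applies verbatim with the substitution $k \mapsto k\ell$. The enabling idea is that even though the samples are drawn from an $\ell$-component mixture, each coordinate is marginally subgaussian with a parameter that is supported exactly on the union $\cS_{\text{un}}$ of the $\ell$ disjoint supports.

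First I would extend Lemma \ref{lem:mix_subg} from the two-component case to the $\ell$-component mixture $\dP_{\cS_{1},\ldots,\cS_{\ell}} = (1/\ell)\sum_{i=1}^{\ell}\dP^{(i)}$ by an immediate induction on $\ell$, bounding the MGF by the maximum of the component MGFs. Next, fix $t \in [d]$: because the $\cS_i$ are disjoint, $t$ lies in at most one of them, so the marginal of $X_{jt}$ under the mixture is a convex combination of a zero-mean subgaussian with parameter $c\lambda_{0}$ (from the unique component whose support contains $t$, if any) and point masses at zero (from the remaining components). The extended lemma then yields $X_{jt}\sim\sbg(c\,\lambda_{\text{un},t})$ with $\lambda_{\text{un},t} = \lambda_{0}$ for $t \in \cS_{\text{un}}$ and $\lambda_{\text{un},t} = 0$ otherwise, so the diagonal of the effective covariance is supported exactly on $\cS_{\text{un}}$ with $|\cS_{\text{un}}| = k\ell$. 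Invoking \cite[Theorem 3]{Ramesh_arxiv_2019} with $k\ell$ in place of $k$ then gives both the stated sample complexity bound $n \ge c(k^{2}\ell^{2}/m^{2})\log(k\ell(d-k\ell)/\delta)$ and the condition $m \ge (\log k\ell)^{2}$.

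The point I expect to need to verify carefully is that the cited theorem's proof uses only per-coordinate marginal subgaussianity of $X_{j}$ together with the fourth-moment condition of Assumption \ref{assump_x}, rather than full independence of the coordinates of $X_{j}$. Under the mixture, the latent component index correlates the coordinates, so strict joint independence fails. Inspecting the argument of \cite{Ramesh_arxiv_2019}, the concentration analysis for the gap $\tilde\lambda_{(k\ell)} - \tilde\lambda_{(k\ell+1)}$ proceeds entry by entry on $(\Phi_{ji}^{\top}Y_{j})^{2}$ and only needs the marginal tail behavior of $X_{jt}$ and its fourth moment, both of which are preserved under the mixture. If coordinate independence were in fact invoked somewhere, the fallback is to condition on the component index, apply the single-support concentration bound within each component, and absorb a $\log\ell$ factor via a union bound over the $\ell$ components; this is swallowed by the $\log k\ell$ factor already present and does not affect the stated rate.
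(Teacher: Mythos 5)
Your proposal follows essentially the same route as the paper: show via Lemma \ref{lem:mix_subg} (extended to $\ell$ components) that each coordinate of the mixture is subgaussian with effective variance parameter supported exactly on $\cS_{\text{un}}$ with $|\cS_{\text{un}}|=k\ell$, and then invoke \cite[Theorem 3]{Ramesh_arxiv_2019} with support size $k\ell$. Your caveat about joint independence of the coordinates is well taken and is actually handled more carefully than in the paper, which simply asserts that the entries are independent with covariance $K_{\lambda_{\text{un}}}$ even though the latent mixture label correlates them; your observation that the cited analysis needs only the per-coordinate marginal behavior (with conditioning on the label as a fallback) is the right way to close that gap.
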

  
 As we discussed in the introduction, if we had labels for each sample indicating which support it belongs to, we could directly use the estimator from \cite{Ramesh_arxiv_2019} after grouping the samples with the same support together. This would require $O((k^{2}\ell/m^{2})\log k (d-k))$ samples. On the other hand, when the labels are unknown, the number of samples required even to estimate the union of the supports is higher, as seen from the theorem above.
  \subsection{Recovering individual supports: Analysis}\label{sec:analysis_cluster}
  Our analysis is based on the fact that the expected affinity matrix has a block structure (under an appropriate permutation of its rows and columns), which we prove in the next lemma. 
  \begin{lemma}[Block structure of $\bEE{T}$]\label{lem:ET_structure}
  Under Assumptions \ref{assump_x} and \ref{assump_phi}, for the matrix $T\in\mR^{k\ell\times k\ell}$
  in \eqref{eq:T_mtx_entry}, $\bEE{T}$ has entries
  given by
  		\begin{align*}
  		\bEE{T_{uv}}
  		=\begin{cases}
  		\dg,~\text{if}~u=v,\\
  		\on,~\text{if}~u\ne v, (u,v)\in\cS_{i}\times \cS_{i}~\text{for any}~i\in[\ell],\\
  		\off,~\text{otherwise},
  		\end{cases}
  		\end{align*} 
  		where the parameters $\dg$, $\on$, and $\off$ depend on $k$, $m$, and $\ell$ and can be explicitly calculated.
  \end{lemma}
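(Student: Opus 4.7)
The plan is to compute $\bEE{T_{uv}}$ directly from the definition of $T$ and then identify the three cases. Since the samples $\{(X_j,\Phi_j)\}_{j=1}^n$ are i.i.d., $\bEE{T_{uv}}=\bEE{a_{1u}a_{1v}}$, so I can work with a single pair $(X,\Phi)=(X_1,\Phi_1)$. Writing $\phi_t$ for the $t$-th column of $\Phi$ and $M_{ut}\ed\phi_u^\top\phi_t$, the per-sample variance proxy satisfies $a_u=(\phi_u^\top\Phi X)^2=(\sum_t M_{ut}X_t)^2$, hence
$$
a_u a_v = \sum_{t_1,t_2,t_3,t_4} M_{ut_1}M_{ut_2}M_{vt_3}M_{vt_4}\, X_{t_1}X_{t_2}X_{t_3}X_{t_4}.
$$

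First I would take the expectation over $X$ conditional on $\Phi$. Using the mixture decomposition $\bEE{\cdot}=(1/\ell)\sum_{i=1}^{\ell}\bE{\dP^{(i)}}{\cdot}$ together with Assumption \ref{assump_x} (under $\dP^{(i)}$ the entries of $X$ are independent, zero-mean, supported on $\cS_i$, with second moment $\lambda_0$ and fourth moment $\rho$), the only surviving index tuples $(t_1,t_2,t_3,t_4)$ are those contained in $\cS_i$ that pair up pairwise. Summing the four pairing patterns (all four indices equal, plus the three two-two matchings) yields a closed form for $\bE{\dP^{(i)}}{a_u a_v \mid \Phi}$ as a linear combination of $\sum_{t\in\cS_i}M_{ut}^2 M_{vt}^2$, $\sum_{t\ne t'\in\cS_i}M_{ut}^2 M_{vt'}^2$, and $\sum_{t\ne t'\in\cS_i}M_{ut}M_{vt}M_{ut'}M_{vt'}$, with coefficients $\rho$ and $\lambda_0^2$.

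Next I would take the expectation over $\Phi$. Each $M_{ut}=\sum_{r=1}^{m}\Phi(r,u)\Phi(r,t)$ is a sum of products of independent entries of $\Phi$, so by Assumption \ref{assump_phi} (independent zero-mean entries with controlled even moments up to order eight), the expected product of four $M$'s admits a standard Isserlis-type expansion: only monomials in which every $\Phi$-entry appears to an even power survive. A case analysis over the distinct values among the column indices $u,v,t,t'$ then shows that each of $\bEE{M_{ut}^2 M_{vt}^2}$, $\bEE{M_{ut}^2 M_{vt'}^2}$, and $\bEE{M_{ut}M_{vt}M_{ut'}M_{vt'}}$ is determined solely by the pattern of coincidences among these four column indices, scaling with $1/m$ and the constants $c_2,c_3,c_4$ from Assumption \ref{assump_phi}.

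Finally I would combine. Because $u,v\in\hat{\cS}_{\text{un}}$ and the supports $\cS_1,\ldots,\cS_\ell$ are disjoint, each of $u,v$ lies in exactly one $\cS_i$, so the pair $(u,v)$ falls into one of three cases: $u=v$; $u\ne v$ with $\{u,v\}\subset\cS_i$ for some (unique) $i$; or $u\ne v$ with $u\in\cS_i$, $v\in\cS_j$ and $i\ne j$. Splitting the outer sum $\sum_{i=1}^{\ell}$ according to whether $u\in\cS_i$ and/or $v\in\cS_i$, and plugging in the $\Phi$-moment values from the previous step, one sees that $\bEE{a_u a_v}$ depends on $(u,v)$ only through these three cases. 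Collecting terms yields the three values $\dg$, $\on$, and $\off$. I expect the main obstacle to be the sheer bookkeeping in the fourth-moment calculation over $\Phi$: the subcases (how many of $u,v,t,t'$ coincide, and how they pair into columns of $\Phi$) each produce terms of different orders in $1/m$, but once tabulated carefully the sums collapse to explicit constants and the structural claim of the lemma follows.
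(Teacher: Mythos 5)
Your proposal is correct and follows essentially the same route as the paper: it conditions on the mixture component and on $\Phi$, evaluates the $X$-expectation of the quartic form by counting surviving index pairings (the paper's Lemma \ref{lem_qform_moment}), then evaluates the resulting fourth-order $\Phi$-moments by a case analysis on coincidences among the column indices $u,v,t,t'$ (the paper's Lemma \ref{lem_moments_ip}), and finally averages over the uniform label to get the three values $\dg$, $\on$, $\off$. The three-case block structure falls out exactly as you describe, so no further comment is needed.
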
	
  
  The proof of Lemma \ref{lem:ET_structure} appears in Appendix \ref{app:ET_structure} and involves computing the expected values of expressions containing higher order terms in $\Phi_{i}$ and $X_{i}$.
  Before we proceed, we note the following extension of the ``median trick'' (see, for example, \cite{Chakrabarti_notes_2020}) which shows that
  the dependence of sample complexity on $\delta$ is at most a factor of $O(\log 1/\delta)$,
  provided we can find an $(n,\varepsilon,1/4)$-estimator. 
  
  \begin{lemma}[Probability of error boosting]
  	\label{lem:median_trick}
  	For $\delta \in (0,1)$ and $\ell\in \mN$, if we can find an $(n,\varepsilon,1/4)$-estimator for $\Sigma_{k,\ell,d}$,
  	then we can find an $\left(n\lceil 8 \log \frac 1 \delta\rceil,3\varepsilon, \delta\right)$-estimator for $\Sigma_{k,\ell,d}$.
  \end{lemma}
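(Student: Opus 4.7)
The plan is a permutation-aware median-of-means amplification. First I would split the $n\lceil 8\log(1/\delta)\rceil$ available samples into $N\ed\lceil 8\log(1/\delta)\rceil$ independent batches of size $n$, run the given $(n,\varepsilon,1/4)$-estimator on each batch, and collect the resulting tuples $\hat{T}^{(t)}=(\hat{\cS}_1^{(t)},\ldots,\hat{\cS}_\ell^{(t)})$, $t\in[N]$. Call batch $t$ \emph{good} if there exists $\sigma\in\cG_\ell$ with $\sum_{i=1}^{\ell}|\cS_i\Delta\hat{\cS}_{\sigma(i)}^{(t)}|<k\varepsilon\ell^2$, so each batch is independently good with probability at least $3/4$.

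For the aggregation step I would work with the natural permutation-invariant pseudo-distance on $\Sigma'_{\ell,d}$,
\begin{align*}
d\bigl((\cA_1,\ldots,\cA_\ell),(\cB_1,\ldots,\cB_\ell)\bigr)\ed\min_{\sigma\in\cG_\ell}\sum_{i=1}^{\ell}\bigl|\cA_i\Delta\cB_{\sigma(i)}\bigr|,
\end{align*}
which, by composing the minimizing permutations in $d(A,B)$ and $d(B,C)$ and applying the coordinate-wise triangle inequality for symmetric difference, obeys $d(A,C)\le d(A,B)+d(B,C)$. The output rule is then: declare any $\hat{T}^{(\hat t)}$ satisfying $|\{t':d(\hat{T}^{(\hat t)},\hat{T}^{(t')})<2k\varepsilon\ell^2\}|\ge N/2$.

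Three short steps then finish the argument. A Hoeffding bound on the indicators of good batches gives $\bPr{G<N/2}\le\exp(-N/8)\le\delta$ by the choice of $N$, where $G$ denotes the number of good batches. On the event $\{G\ge N/2\}$, any two good batches are within distance $2k\varepsilon\ell^2$ of each other (triangle inequality through the true tuple), so every good batch itself meets the selection criterion and some $\hat t$ exists. Finally, the selected $\hat t$ has more than half of its $\ge N/2$ ``neighbors'' good, hence at least one good neighbor, and one last application of the triangle inequality yields $d(\hat{T}^{(\hat t)},(\cS_1,\ldots,\cS_\ell))<3k\varepsilon\ell^2$, which matches the $3\varepsilon$ tolerance in~\eqref{err_eps}.

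The only substantive point—more a matter of care than a genuine obstacle—is that the error criterion in~\eqref{err_eps} quotients out the unknown labeling of the $\ell$ supports, so the pairwise comparisons driving the majority vote must be carried out with the quotient pseudo-distance $d$ above rather than any fixed-ordering distance; the triangle inequality for $d$ is exactly what makes the neighborhood-count rule output a tuple within $3k\varepsilon\ell^2$ of the truth.
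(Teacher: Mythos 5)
Your construction is essentially the paper's own proof: independent blocks, the permutation-quotiented symmetric-difference pseudo-metric with its triangle inequality, a Hoeffding bound on the number of good blocks, and selection of a block whose estimate lies within $2k\varepsilon\ell^{2}$ of at least half of the others. The one place to tighten is the final pigeonhole step: as written, ``$G\ge N/2$ good blocks'' together with ``a neighborhood of size $\ge N/2$'' only gives $|\text{neighbors}\cap\text{good}|\ge N/2+N/2-N=0$, so the claimed good neighbor need not exist; you should invoke the non-strict form $\bPr{G\le N/2}\le e^{-N/8}$ (which Hoeffding does provide), so that $G\ge N/2+1$ on the complement and the intersection is guaranteed nonempty --- this is precisely why the paper's argument insists on $M\ge L/2+1$ good indices against a neighborhood of cardinality $\ge L/2$.
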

  We provide the proof in Appendix~\ref{app:median_trick}. 
  
  Thus, from here on, we fix our error requirement to $\delta=1/4$ and seek $(n,\varepsilon, 1/4)$-estimators with the least possible $n$.
  We characterize the performance of the clustering step in the following theorem. The analysis of this step is conditioned on exact recovery of the union ${\cS}_{\text{un}}$ in the first step.
  
  \begin{theorem}\label{thm_ecc}
  	Let $\nu_{1}\ge\cdots\ge\nu_{k\ell}$ denote the ordered
          eigenvalues of $\bEE{T}\in\mR^{k\ell\times k\ell}$, and
          define 
  	$\Delta_{\ell}=\nu_{\ell}-\nu_{\ell+1}$ when $\ell\geq 2$.
  	For every $\varepsilon \in [1/\ell k, 1/\ell)$, we can find an $(n,\varepsilon,1/4)$-estimator for $\Sigma_{k,\ell,k\ell}$ provided
  	  \[n\ge 	 c\frac{\max\{1,\|\bEE{T}\|_{op}\}}{\varepsilon\Delta_{\ell}^{2}}\cdot
                    {\bEE{\max_{i\in[n]}\|a_{i}\|_{2}^{2}}}\cdot \log k\ell ,\]
for an absolute constant $c$.         
  \end{theorem}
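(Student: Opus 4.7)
The plan is to work conditional on the event that the union-recovery step returns $\hat{\cS}_{\text{un}} = \cS_{\text{un}}$ (whose failure is handled by Theorem~\ref{thm:supp_union} and contributes the second branch of the maximum in Theorem~\ref{thm_multiple_supp_disj}), and to bound the number $M$ of coordinates in $\cS_{\text{un}}$ that the clustering step assigns to the wrong block up to a label permutation $\sigma\in\cG_\ell$. After ordering coordinates by their true block, Lemma~\ref{lem:ET_structure} endows $\bEE{T}$ with the block form of Figure~\ref{fig_block_mtx}. Let $V\in\mR^{k\ell\times \ell}$ collect the top $\ell$ eigenvectors of $\bEE{T}$. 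A short calculation from the block structure shows that $V$ has exactly $\ell$ distinct rows $v_1,\dots,v_\ell$ (one per block), with $\|v_j\|_2^2=1/k$ and $\|v_i-v_j\|_2^2=2/k$ for $i\ne j$; thus distinct clusters are $\sqrt{2/k}$-separated in the leading eigenspace.

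The heart of the proof is controlling $\|T-\bEE{T}\|_{op}$. Because $T=(1/n)\sum_{i=1}^n a_i a_i^\top$ is an empirical average of independent rank-one matrices (and the $a_i$ have finite moments of every order under Assumptions~\ref{assump_x} and~\ref{assump_phi}), Rudelson's inequality---exactly as flagged at the end of Section~\ref{sec:estimator_cluster}---yields
\[
\bEE{\|T-\bEE{T}\|_{op}}\le c_1\sqrt{\frac{\log(k\ell)}{n}\,\bEE{\max_{i\in[n]}\|a_i\|_2^2}\,\|\bEE{T}\|_{op}}.
\]
Markov's inequality converts this into a constant-confidence bound on $\|T-\bEE{T}\|_{op}$, and the Davis--Kahan $\sin\Theta$ theorem in its Frobenius form produces an orthogonal matrix $R$ such that
\[
\|\hat V - V R\|_F\le \frac{c_2\sqrt{\ell}\,\|T-\bEE{T}\|_{op}}{\Delta_\ell}.
\]

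Next I translate eigenspace closeness into a misclustering count. A standard spectral-clustering argument uses two facts: the $\ell$-means minimizer $C$ produced in step~7 satisfies $\|C-\hat V\|_F\le\|VR-\hat V\|_F$, so $\|C-VR\|_F\le 2\|\hat V-VR\|_F$; and every misclustered row of $C$ differs from its true row $v_j$ by at least $(1/2)\min_{i\ne j}\|v_i-v_j\|_2$. Combining these,
\[
M\le c_3\cdot\frac{\|\hat V - V R\|_F^2}{\min_{i\ne j}\|v_i-v_j\|_2^2}\le c_4\, k\,\|\hat V - V R\|_F^2.
\]
Since every misclustered coordinate contributes at most $2$ to $\sum_i|\cS_i\Delta\hat{\cS}_{\sigma(i)}|$, the error criterion reduces to $M<k\varepsilon\ell^2/2$. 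Chaining the three displays above and solving for $n$ gives the announced sample complexity, with $\max\{1,\|\bEE{T}\|_{op}\}$ covering the regime $\|\bEE{T}\|_{op}<1$ where Rudelson's square-root bound does not shrink further.

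The main obstacle is the concentration step: Rudelson's theorem controls only the expectation, so Markov's inequality alone delivers merely constant confidence---this is precisely why the theorem is stated at the $1/4$ level, with the median-trick Lemma~\ref{lem:median_trick} later boosting the failure probability to any $\delta$ at only an $O(\log 1/\delta)$ sample overhead. A secondary bookkeeping task is composing the orthogonal rotation $R$ from Davis--Kahan with the label permutation $\sigma\in\cG_\ell$ demanded by \eqref{err_eps}; both arise from the intrinsic label ambiguity of spectral clustering and are reconciled by inspecting $C$ block by block.
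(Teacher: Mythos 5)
Your proposal is correct and follows essentially the same route as the paper: block structure of $\bEE{T}$ and the row-separation properties of $V$, Rudelson's inequality plus Markov to control $\|T-\bEE{T}\|_{op}$ at constant confidence, Davis--Kahan to bound $\|\hat V - VO\|_F$, and the standard $\ell$-means counting argument (the paper's Lemma~\ref{l:error_perturbation} carries out in detail the step you sketch, including the use of $\varepsilon<1/\ell$ to guarantee each true block retains a correctly-placed anchor so the label permutation is well defined). No substantive gap.
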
	
  
  The result above applies to any setting where we have i.i.d. samples $a_{1},\ldots,a_{n}$ whose covariance has a block structure under permutation, and the goal is to group the coordinates of $a_{i}$ based on the unknown block structure. 
  We provide the proof of Theorem \ref{thm_ecc} at the end of this section.
  
  The next two results provide us with bounds on the spectral quantities $\|\bEE{T}\|_{op}$ and $\Delta_{\ell}$, and on $\bEE{\max_{i\in[n]}\|a_{i}\|_{2}^{2}}$ appearing in Theorem \ref{thm_ecc}.
  
  \begin{lemma}\label{lem:ET_spectrum}
  Under Assumptions \ref{assump_x} and \ref{assump_phi},
  	we have
  	\begin{align*}
  	\|\bEE{T}\|_{op}\le
  	\rho\frac{k^{2}\ell}{m^{2}}+\lambda_{0}^{2}\frac{k^{3}\ell}{m^{2}},
  	~\text{and}~
  	\Delta_{\ell}\ge \frac{\lambda_{0}^{2}k}{\ell}.
  	\end{align*} 
  \end{lemma}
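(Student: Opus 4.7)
My plan is to exploit the block structure of $\bEE{T}$ established in Lemma \ref{lem:ET_structure} to diagonalize it in closed form, read off $\|\bEE{T}\|_{op}$ and $\Delta_{\ell}$ as elementary expressions in $\dg, \on, \off$, and then substitute the explicit expressions for those moments produced in Appendix \ref{app:ET_structure}. After the row/column permutation that groups the coordinates of each $\cS_i$ consecutively, Lemma \ref{lem:ET_structure} gives the decomposition
\begin{equation*}
\bEE{T} \;=\; (\dg-\on)\,I_{k\ell} \;+\; (\on-\off)\,B \;+\; \off\,J_{k\ell},
\end{equation*}
where $J_{k\ell}$ is the $k\ell\times k\ell$ all-ones matrix and $B$ is the block-diagonal matrix with $\ell$ diagonal blocks, each equal to the $k\times k$ all-ones matrix.

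Since $I_{k\ell}$, $B$, and $J_{k\ell}$ admit a common eigenbasis, the spectrum of $\bEE{T}$ can be computed exactly. Three classes of eigenvectors suffice: (i) the all-ones vector $\mathbf{1}_{k\ell}$; (ii) vectors that are constant on each block with block-constants summing to zero, giving an $(\ell-1)$-dimensional space; and (iii) vectors supported in a single block and summing to zero on that block, giving a $(k\ell-\ell)$-dimensional space. Substituting these into the decomposition above, the three distinct eigenvalues are
\begin{align*}
\nu^{(1)} &= \dg+(k-1)\on+k(\ell-1)\off, \\
\nu^{(2)} &= \dg+(k-1)\on-k\off, \\
\nu^{(3)} &= \dg-\on,
\end{align*}
with multiplicities $1$, $\ell-1$, and $k\ell-\ell$, respectively. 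Once the moment computations in Appendix \ref{app:ET_structure} confirm $\dg \ge \on \ge \off \ge 0$ under Assumptions \ref{assump_x} and \ref{assump_phi}, the ordering $\nu^{(1)} \ge \nu^{(2)} \ge \nu^{(3)}$ follows, so $\nu_1 = \nu^{(1)}$, $\nu_\ell = \nu^{(2)}$, and $\nu_{\ell+1} = \nu^{(3)}$. Hence $\|\bEE{T}\|_{op} = \nu^{(1)}$ and $\Delta_{\ell} = \nu^{(2)} - \nu^{(3)} = k(\on - \off)$.

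What remains is to substitute the closed-form values of $\dg, \on, \off$ from Appendix \ref{app:ET_structure} and collect the dominant terms. For the operator norm, I expect the $\lambda_0^2$-weighted fourth-moment contributions from $\phi_u^\top \Phi X$ in $\dg$ and $(k-1)\on$ to scale like $\lambda_0^2 k^2/m^2$, while accumulating the $k(\ell-1)\off$ piece across the $\ell-1$ off-blocks introduces the extra factors of $k$ and $\ell$, producing the bound $\lambda_0^2 k^3 \ell/m^2$; the $\rho k^2 \ell/m^2$ term collects the higher-moment pieces controlled by $\bEE{X_{jt}^4}=\rho$. The lower bound on $\Delta_\ell$ reduces to showing $\on - \off \ge \lambda_0^2/\ell$: this ``signal gap'' exists because, with probability $1/\ell$, the sample comes from the unique support $\cS_i$ containing both $u$ and $v$, generating a $\lambda_0^2$-scaled intra-block cross term of order $1/\ell$, whereas when $u$ and $v$ lie in different supports no such term survives. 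The main obstacle is isolating exactly this $\Theta(\lambda_0^2/\ell)$ contribution from the many lower-order cross terms that arise when expanding $a_u a_v = (\phi_u^\top \Phi X)^2 (\phi_v^\top \Phi X)^2$; however the detailed moment bookkeeping is already carried out in Appendix \ref{app:ET_structure}, so the work here is simply to collect the dominant pieces and verify the two claimed inequalities.
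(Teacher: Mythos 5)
Your proposal is correct and follows essentially the same route as the paper: the paper likewise reads off $\|\bEE{T}\|_{op}=\dg+(k-1)\on+k(\ell-1)\off$ and $\Delta_{\ell}=k(\on-\off)$ from the block structure (stating these formulas directly rather than exhibiting the eigenbasis as you do), and then substitutes the explicit values of $\dg,\on,\off$ from the appendix — in particular verifying that the $\rho$-contributions to $\on-\off$ cancel exactly and the surviving $\lambda_0^2$-term is at least $\lambda_0^2/\ell$, which is precisely the intra-block cross term your heuristic identifies. Your explicit eigendecomposition and the ordering check $\off\ge 0$, $\on\ge\off$ (the latter being the gap computation itself, the former following from $T$ having nonnegative entries) are a sound way to justify the two spectral identities the paper asserts without comment.
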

  
  \begin{lemma}\label{lem:Emax_a}
  For every $q\in\mathbb{N}$ and $i\in[n]$, we have
  $	\bEE{\|a_{i}\|_{2}^{q}}
  	\le c_{0}^{q}(\Gamma(q))^{2}\lambda_{0}^{q}\bigg(\frac{k\sqrt{k\ell}}{m}\bigg)^{q}$.
          Further, when $\log k \geq 2$, it follows that $\bEE{\text{max}_{i\in[n]}\|a_{i}\|_{2}^{2}}\leq 
  	n^{\frac{2}{\log k}}\bEE{\|a_{1}\|_{2}^{\log k}}^{\frac{2}{\log k}}$. 
  \end{lemma}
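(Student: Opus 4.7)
}
The lemma contains two separate claims and I would handle them one at a time, with the second being essentially a standard argument and the first carrying the real content.

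\emph{Part 1 (moment bound on $\|a_i\|_2$).} Fix $i\in[n]$ and abbreviate $Z_u\ed \Phi_{iu}^\top Y_i$, so that the restricted coordinates satisfy $a_{iu}=Z_u^2$. Since $\|a_i\|_2^2=\sum_{u\in [k\ell]} a_{iu}^2$, the power-mean (Jensen) inequality applied with the convex function $x\mapsto x^{q/2}$ for $q\ge 2$ gives
\begin{equation*}
\|a_i\|_2^q=\Big(\sum_{u=1}^{k\ell} a_{iu}^2\Big)^{q/2}\le (k\ell)^{q/2-1}\sum_{u=1}^{k\ell} a_{iu}^q,
\end{equation*}
which after taking expectations and bounding the sum by $k\ell$ times the maximum reduces the problem to showing $\bEE{a_{iu}^q}\le c_1^q\,\Gamma(q)^2\,\lambda_0^q (k/m)^q$ for every $u\in [k\ell]$. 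I would exploit the independence of the columns of $\Phi_i$ by writing $Z_u=X_{iu}\|\Phi_{iu}\|^2+\sum_{j=1}^m \Phi_i(j,u)\,W_j$ with $W_j\ed \sum_{v\ne u}\Phi_i(j,v) X_{iv}$, which splits $Z_u$ into a product of independent pieces plus a linear form in $\Phi_{iu}$ that, conditional on $X_i$ and the remaining columns of $\Phi_i$, is subgaussian with parameter $O(\|W\|^2/m)$. Using Assumption \ref{assump_phi} to control the moments of $\|\Phi_{iu}\|^2$ around $1$, and Assumption \ref{assump_x} together with the fact that $\|X_i\|^2$ is a sum of $k$ independent subexponential entries (each with mean $\lambda_0$) to bound $\bEE{\|X_i\|^{2q}}\le C^q(\lambda_0 k)^q \Gamma(q)$, I would then chain successive conditional expectations (first over $\Phi_{iu}$, then over the other columns of $\Phi_i$, then over $X_i$) and combine the two resulting $\Gamma(q)$-type factors to land on the claimed $\Gamma(q)^2\lambda_0^q(k/m)^q$ scaling. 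Substituting this back into the reduction step above yields
\begin{equation*}
\bEE{\|a_i\|_2^q}\le (k\ell)^{q/2}\, c_1^q\, \Gamma(q)^2\,\lambda_0^q\,(k/m)^q = c_0^q\,\Gamma(q)^2\,\lambda_0^q\,\bigl(k\sqrt{k\ell}/m\bigr)^q.
\end{equation*}

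\emph{Part 2 (maximum over samples).} For nonnegative random variables and any $p\ge 1$ we have $\max_{i\in[n]}\|a_i\|_2\le \bigl(\sum_{i=1}^n \|a_i\|_2^p\bigr)^{1/p}$. Squaring and choosing $p=\log k$ gives
\begin{equation*}
\bEE{\max_{i\in[n]}\|a_i\|_2^2}\le \bEE{\Big(\sum_{i=1}^n \|a_i\|_2^{\log k}\Big)^{2/\log k}}.
\end{equation*}
Because $\log k\ge 2$, the exponent $2/\log k$ lies in $(0,1]$ and so $t\mapsto t^{2/\log k}$ is concave. Jensen's inequality together with the i.i.d.\ assumption on the samples then gives
\begin{equation*}
\bEE{\max_{i\in[n]}\|a_i\|_2^2}\le \Big(n\,\bEE{\|a_1\|_2^{\log k}}\Big)^{2/\log k}=n^{2/\log k}\,\bEE{\|a_1\|_2^{\log k}}^{2/\log k},
\end{equation*}
which is the second claim.

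\emph{Main obstacle.} The routine step is Part 2; the delicate step is the per-coordinate moment bound $\bEE{Z_u^{2q}}$ in Part 1. Unlike standard subgaussian or subexponential settings, $Z_u$ is a quadratic form in the entries of $\Phi_i$ weighted by $X_i$, so neither Hoeffding- nor Bernstein-type moment estimates apply directly. The key accounting I would have to do carefully is to isolate the diagonal piece $X_{iu}\|\Phi_{iu}\|^2$ from the cross-term $\sum_j \Phi_i(j,u)W_j$ and track the two independent $\Gamma(q)$ factors—one coming from the subgaussian tail of $X_i$-entries (via the quartic moment $\rho$ of Assumption \ref{assump_x} in the lower orders, then the $\sbg(c\lambda_0)$ estimate in the higher orders) and one coming from the $\sbg(c'/m)$ entries of $\Phi_i$ (Assumption \ref{assump_phi})—so that their product gives the required $\Gamma(q)^2$ growth rather than the weaker $q!$ bound that would result from ignoring this product structure.
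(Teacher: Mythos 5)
Your Part 2 is exactly the paper's argument (bound the max by the $\ell_{\log k}$-sum, then Jensen with the concave map $t\mapsto t^{2/\log k}$), and your opening reduction in Part 1 --- power-mean to pull the $q/2$ power inside the sum over coordinates, paying $(k\ell)^{q/2}$ and leaving a per-coordinate moment $\bEE{a_{iu}^q}\le c_1^q\Gamma(q)^2\lambda_0^q(k/m)^q$ to prove --- is also the same reduction the paper uses (there written with exponent $2q$ and rescaled at the end). Where you genuinely diverge is in how that per-coordinate moment is bounded. The paper conditions on $\Phi$ first, views $a_{iu}=(\alpha_u^\top X_\cS)^2=X_\cS^\top\alpha_u\alpha_u^\top X_\cS$ as a quadratic form in the subgaussian vector $X_\cS$ with a \emph{rank-one} matrix, invokes Hanson--Wright to get subexponential tails with both parameters equal to $\lambda_0\|\alpha_u\|_2^2$, converts tails to moments via Lemma~\ref{lem_sbx_moment} (one $\Gamma(2q)$), and then separately bounds $\bEE{\|\alpha_u\|_2^{4q}}$ by treating $\|\Phi_u\|_2^2$ and $\sum_j(\Phi_u^\top\Phi_j)^2$ as subexponential (the second $\Gamma(2q)$). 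Your route instead decomposes $Z_u$ column-by-column and chains conditional expectations in the opposite order.

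The gap is in the $\Gamma(q)$ bookkeeping of that chain, which is precisely the content of the lemma. Your chain has \emph{three} stages, each of which naively contributes a factorial factor: (i) the $2q$-th moment of the linear form in $\Phi_{iu}$ given $W$ costs $C^q\Gamma(q+1)(\|W\|_2^2/m)^q$; (ii) $\bEE{\|W\|_2^{2q}\mid X}$, bounded by convexity over the $m$ summands $W_j^2$, costs another $C^q\Gamma(q+1)\|X\|_2^{2q}$; (iii) your own stated bound $\bEE{\|X_i\|_2^{2q}}\le C^q(\lambda_0 k)^q\Gamma(q)$ costs a third. Multiplying gives $\Gamma(q+1)^3(\lambda_0 k/m)^q$, not the claimed $\Gamma(q)^2(\lambda_0 k/m)^q$, yet your sketch accounts for only two factors. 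To close this you must replace the convexity bounds in (ii) and (iii) by Bernstein-type moment bounds (as in Lemma~\ref{lem_sbx_moment}) in which the \emph{mean} term dominates, i.e.\ $\bEE{\|X\|_2^{2q}}\le C^q(\lambda_0k)^q$ and $\bEE{\|W\|_2^{2q}\mid X}\le C^q\|X\|_2^{2q}$ with no factorial, which holds only when roughly $\Gamma(q)\le k^{q/2}$ and $\Gamma(q)\le m^{q/2}$. That is satisfied for the only value actually used, $q=\log k$ with $m\ge(\log k\ell)^2$, so your route can be repaired in the regime that matters (and there it even yields a single $\Gamma(q)$, better than the lemma), but as written the central estimate is asserted rather than established, and the two-factor accounting you describe does not follow from the steps you list.
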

  The proof of Lemma \ref{lem:ET_spectrum} is provided in Appendix \ref{app:ET_spectrum} and the proof of Lemma~\ref{lem:Emax_a} appears in Appendix~\ref{s:moment_bound}. 
  We close this section with the proof of Theorem \ref{thm_ecc}. 
  \begin{proof}[Proof of Theorem~\ref{thm_ecc}]
Recall that the proof is conditioned on exact recovery of the union ${\cS}_{\text{un}}$. Further, for notational simplicity, we set $\cS_{\text{un}}=[k\ell]$.
 We divide the proof into two steps.
 
  \noindent {\textit{Step 1.  Relating  probability of error to perturbation.}}
    Denote the event that Algorithm~\ref{algo_mutliple_disj} labels more
        than $\varepsilon k\ell$ coordinates incorrectly by $\cE$. An upper bound on $\bPr{\cE}$ would imply an upper bound on the probability of the error event implied by \eqref{err_eps}.
        The per support errors across the $\ell$ labels can have significant overlap or even be equal, so the criterion in \eqref{err_eps} is a good indicator of the number of misclustered coordinates determined by $\cE$. Additionally, it satisfies the triangle inequality, a property we will use later in proving Lemma \ref{lem:median_trick}.

    The following result
     relates the error probability  to a perturbation bound.
  
  \begin{lemma}[Error to perturbation bound]\label{l:error_perturbation}
    Let $V$ and $\hat{V}$, respectively, be $k\ell\times \ell$ matrices with $i$th column
    given by ${v}_i$ and $\hat{v}_{i}$, $1\leq i \leq \ell$,
    where $v_1, \ldots, v_\ell$
and $\hat{v}_1, \ldots, \hat{v}_\ell$
    denote the normalized eigenvectors of
    $\bEE{T}$ and $T$, respectively,
    corresponding to their $\ell$ largest eigenvalues.
  Then, 
  \begin{align}\label{e:prob_err_2}
  \bPr{\cE}\le \bPr{\|\hat{V}-VO\|_{F}\ge \frac 1 2 \sqrt{\frac{\ep \ell}2}},
  \end{align}
  where $O\in\mR^{\ell\times \ell}$ is a random orthonormal matrix and the probability on the right hand side is over the joint distribution of $\hat{V}$ and $O$.
  \end{lemma}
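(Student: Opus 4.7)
The plan is to reduce the combinatorial mis-clustering event $\cE$ to a Frobenius-norm perturbation bound on the top-$\ell$ eigenvector matrix, via three standard-but-careful steps from the spectral clustering toolbox.

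First, I would pin down the row structure of $V$. By Lemma~\ref{lem:ET_structure}, one can write $\bEE{T} = \off\,\mathbf{1}\mathbf{1}^{\top} + (\on - \off)\,B + (\dg - \on)\,I$, where $B$ is the $k\ell\times k\ell$ block-indicator matrix with $B_{uv}=1$ iff $u,v$ lie in the same $\cS_i$. The top-$\ell$ eigenspace of $\bEE{T}$ then coincides with the span of the $\ell$ (normalized) cluster-indicator vectors, so the rows of $V$ are piecewise constant on the supports. Writing $V_u = w_i$ for $u \in \cS_i$, orthonormality of the columns of $V$ forces $k\sum_{i=1}^{\ell} w_i w_i^{\top} = I_\ell$, i.e., $\{\sqrt{k}\,w_i\}_{i=1}^{\ell}$ is an orthonormal basis of $\mR^{\ell}$. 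In particular, $\|w_i - w_j\|_2 = \sqrt{2/k}$ for every $i\neq j$.

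Second, I would exploit the optimality of $C$ to control $\|C - VO\|_F$. For any orthogonal $O\in\mR^{\ell\times\ell}$, the matrix $VO$ has only $\ell$ distinct rows and therefore lies in $\cU_\ell$, so the $\ell$-means optimality of $C$ yields $\|C - \hat{V}\|_F \le \|VO - \hat{V}\|_F$. The triangle inequality then gives $\|C - VO\|_F \le 2\,\|\hat{V} - VO\|_F$, which is a Procrustes-type bound that is the crucial bridge between the algorithmic output and the eigenvector perturbation.

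Third, I would convert $\|C - VO\|_F$ into a count of mis-clustered coordinates. Setting $\delta \ed \sqrt{2/k}$, any coordinate $u$ with $\|C_u - (VO)_u\|_2 \ge \delta/2$ contributes at least $1/(2k)$ to $\|C - VO\|_F^{2}$, so the number of such ``bad'' coordinates is at most $2k\,\|C - VO\|_F^{2} \le 8k\,\|\hat{V} - VO\|_F^{2}$. For a non-bad coordinate $u\in\cS_i$, the reverse triangle inequality places $C_u$ strictly closer to $\bar{w}_i \ed w_i O$ than to any other $\bar{w}_j$, so the (at most) $\ell$ distinct rows of $C$ admit a canonical identification with the true clusters. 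This identification produces a permutation $\sigma\in\cG_{\ell}$ under which every mis-clustered coordinate is bad; hence the number $M$ of mis-clustered coordinates satisfies $M \le 8k\,\|\hat{V} - VO\|_F^{2}$, and $\cE = \{M > \varepsilon k\ell\} \subseteq \{\|\hat{V} - VO\|_F \ge \tfrac{1}{2}\sqrt{\varepsilon \ell /2}\}$, which is exactly \eqref{e:prob_err_2}.

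The main obstacle is this last step: the canonical identification of $C$'s distinct rows with the true clusters must be defined with care when $C$ has strictly fewer than $\ell$ distinct rows. However, such a degeneracy can itself be shown to force many bad coordinates, so it is absorbed by the same $8k\|\hat V - VO\|_F^{2}$ bound; I would also double-check in Step 1 that the top-$\ell$ eigenspace of $\bEE{T}$ is exactly the cluster-indicator span (which is guaranteed by $\Delta_\ell > 0$ from Lemma~\ref{lem:ET_spectrum}), so that $V$ is well-defined up to a right-multiplication by an orthogonal $O$ — precisely the $O$ appearing in \eqref{e:prob_err_2}.
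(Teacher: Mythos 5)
Your proposal is correct and follows essentially the same route as the paper: the same use of the piecewise-constant row structure of $V$ with minimum row distance $\sqrt{2/k}$, the same $\ell$-means optimality plus triangle inequality giving $\|C-VO\|_F\le 2\|\hat V - VO\|_F$, and the same counting of "bad" coordinates at threshold $1/\sqrt{2k}$ (the paper phrases this via the good set $\cI$ and uses $\ep<1/\ell$ to rule out splitting a cluster, which is exactly the degeneracy you flag and resolve the same way). No substantive differences.
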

The proof of this lemma builds on the analysis in~\cite{Rohe_annals_stats_2011} and requires
  us to use some properties of $V$, which we
  note in the lemma below. 
  \begin{lemma}[Properties of $V$]\label{lem:V_prop}
    For $1\leq i\leq k\ell$, denote by $v^i$ the $i$th row of $V$. Then,
    the following properties hold:
    \begin{enumerate}
   \item {\em (Identity of rows of $V$ capture the partition)} $v^i=v^j$ if and only if $i$ and $j$ belong to the same
     support, i.e., $i,j\in\cS_{t}$ for some $t\in[\ell]$.     
  
    \item {\em (Minimum distance property)} For any two distinct rows $v^i$ and $v^j$, $\|v^i - v^j\|_2^2\geq 2/k$. 
  \end{enumerate}    
  \end{lemma}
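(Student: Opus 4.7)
The plan is to exploit the explicit block structure of $\bEE{T}$ established in Lemma~\ref{lem:ET_structure}. Writing $\mathbf{1}_{\cS_j}\in\mR^{k\ell}$ for the indicator vector of block $\cS_j$ and $J=\mathbf{1}\mathbf{1}^\top$ for the all-ones matrix, $\bEE{T}$ admits the decomposition
\[
\bEE{T} \;=\; (\dg-\on)\, I \;+\; (\on-\off)\sum_{j=1}^{\ell}\mathbf{1}_{\cS_j}\mathbf{1}_{\cS_j}^\top \;+\; \off\, J,
\]
as one checks by reading off the three cases in Lemma~\ref{lem:ET_structure}. Setting $u_j = \mathbf{1}_{\cS_j}/\sqrt{k}$, the vectors $u_1,\ldots,u_\ell$ form an orthonormal basis of the subspace $W=\mathrm{span}\{\mathbf{1}_{\cS_1},\ldots,\mathbf{1}_{\cS_\ell}\}$. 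Both the rank-$\ell$ block term and $J$ leave $W$ invariant and vanish on $W^\perp$, so $\bEE{T}$ acts on $W^\perp$ as the scalar multiple $(\dg-\on)I$. The spectral gap $\Delta_\ell \ge \lambda_0^2 k/\ell > 0$ from Lemma~\ref{lem:ET_spectrum} guarantees that the top-$\ell$ eigenspace is exactly $W$, so any orthonormal eigenbasis $V$ in the statement of the lemma is an orthonormal basis of $W$.

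Next, I would parametrize $V$ by writing $V = UC$, where $U\in\mR^{k\ell\times\ell}$ has columns $u_1,\ldots,u_\ell$ and $C\in\mR^{\ell\times\ell}$. The orthonormality of the columns of $V$ gives $I_\ell = V^\top V = C^\top U^\top U C = C^\top C$, so $C$ is an orthogonal matrix. Reading off the $u$-th row of $V$ for $u\in\cS_j$ yields $v^{u} = C_j^\top/\sqrt{k}$, where $C_j$ denotes the $j$-th column of $C$. In other words, the rows of $V$ are constant on each block and take at most $\ell$ distinct values.

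Both claimed properties now follow from orthonormality of the columns of $C$. For property~1, if $u,u'$ lie in the same support $\cS_j$ then $v^u = v^{u'} = C_j^\top/\sqrt{k}$; conversely, if $u\in\cS_j$ and $u'\in\cS_{j'}$ with $j\ne j'$, the equality $v^u = v^{u'}$ would force $C_j = C_{j'}$, which is impossible since distinct columns of an orthogonal matrix are orthonormal and hence unequal. For property~2, the same orthonormality yields
\[
\|v^u - v^{u'}\|_2^2 \;=\; \tfrac{1}{k}\,\|C_j - C_{j'}\|_2^2 \;=\; \tfrac{1}{k}\bigl(\|C_j\|_2^2 + \|C_{j'}\|_2^2 - 2\langle C_j,C_{j'}\rangle\bigr) \;=\; \tfrac{2}{k},
\]
so the inequality holds with equality. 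The only nontrivial step is the identification of the top-$\ell$ eigenspace with $W$; that is the one place where care is needed, and it reduces precisely to the strict positivity of $\Delta_\ell$ already supplied by Lemma~\ref{lem:ET_spectrum}. Everything else is a direct algebraic consequence of the block decomposition.
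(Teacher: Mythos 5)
Your proof is correct and follows essentially the same strategy as the paper's: both rest on the decomposition of $\bEE{T}$ into a block-constant part plus $(\dg-\on)I$, and both conclude that the rows of $V$ are the block-indicator rows transformed by a fixed invertible map, from which the two properties follow. The paper executes this by explicitly constructing $V=WH$ with $H=(W^{\top}W)^{-1/2}U$ \`a la Rohe et al.\ and then lower-bounding $\|v^i-v^j\|_2$ by $\sqrt{2}\,\nu_{\min}(H)$; since $W^{\top}W=kI$ here, its $H$ is $1/\sqrt{k}$ times an orthogonal matrix, so your parametrization $V=UC$ is the same object. What your version buys is twofold: (i) you identify the top-$\ell$ eigenspace as $\mathrm{span}\{\mathbf{1}_{\cS_1},\ldots,\mathbf{1}_{\cS_\ell}\}$ by an invariant-subspace argument that applies to \emph{any} choice of leading orthonormal eigenvectors --- a point worth making, since $\nu_\ell$ has multiplicity $\ell-1$ and $V$ is not unique --- whereas the paper exhibits one valid choice; and (ii) you get the exact equality $\|v^i-v^j\|_2^2=2/k$ rather than a bound. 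Two small notes: for $u\in\cS_j$ the row of $V$ is the $j$-th \emph{row} of $C$ scaled by $1/\sqrt{k}$, not the $j$-th column (immaterial, as rows of an orthogonal matrix are also orthonormal); and the eigenspace identification really uses $\on>\off$ and $\off\ge 0$ (both available from Lemma~\ref{lem:ET_spectrum} and nonnegativity of the entries of $\bEE{T}$), not merely the stated value of $\Delta_\ell$, so it is worth phrasing that step as relying on $\on-\off\ge\lambda_0^2/\ell>0$ rather than on $\Delta_\ell$ itself, whose definition presupposes the ordering you are trying to establish.
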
  
  We provide the proof of Lemma~\ref{lem:V_prop} in Appendix~\ref{app:lem_V_prop}.
  
  \begin{proof}[Proof of Lemma~\ref{l:error_perturbation}]
    We begin by observing that it suffices to show that
    \begin{align}
      \bPr{\cE}\le \bPr{\|C-VO\|_{F}\ge \sqrt{\frac{\varepsilon\ell}{2}}},\label{e:mislabel_bound1}
    \end{align}
  where $C$ is the matrix found in Step 6 of  Algorithm~\ref{algo_mutliple_disj}
  and is random since $\hat{V}$ is random. Indeed, 
  by Lemma~\ref{lem:V_prop}, $V$ has $\ell$ distinct rows,
  whereby $VO$, too, has $\ell$ distinct rows since $O$ is orthonormal. That is, $VO\in \cU_\ell$.
  Therefore, by triangle inequality, we get
  \begin{align}
    \|C-VO\|_F&\leq \|C-\hat{V}\|_F+ \|VO-\hat{V}\|_F
  \\
  &= \min_{U\in \cU_\ell} \|U-\hat{V}\|_F+\|VO-\hat{V}\|_F
  \\
  &\le 2\|VO-\hat{V}\|_F,
  \end{align}
  where the final bound holds since $VO$ belongs to $\cU_\ell$. Thus,~\eqref{e:mislabel_bound1} will imply~\eqref{e:prob_err_2}. Note that even if the matrix $O$ were to depend on $V$ and $\hat{V}$ and therefore be random, the result above holds with probability one, and the only property we require from $O$ is orthonormality.
  
  It remains to establish~\eqref{e:mislabel_bound1}. To that end, we define
  \begin{align}
  \cI\ed\{i\in[k\ell]:\|v^{i}O-c^{i}\|_{2}<1/\sqrt{2k}\},
  \end{align}
  where $v^{i}$ and $c^{i}$ are the $i$th row of $V$ and $C$, respectively.
  Our claim is that Algorithm~\ref{algo_mutliple_disj} does not make an error in
  labeling the coordinates in $\cI$, unless $|\cI^c|>\ep k \ell$.
  To see this, note that for any two distinct indices $i,j\in\cI$ we have
  \begin{align}
    \|v^{i}O-v^jO\|_2&\leq  \|v^{i}O-c^j\|_2+\|v^{j}O-c^j\|_2
    \\
    &\leq \|v^{i}O-c^{i}\|_{2}+\|c^{i}-c^{j}\|_{2}+\|v^{j}O-c^{j}\|_{2}\\
    &< \sqrt{\frac 2k} + \|c^i-c^j\|_2.
  \end{align}
  Thus, if $c^i=c^j$, we must have $\|v^iO-v^jO\|_2<\sqrt{2/k}$, which by
  the second property in Lemma~\ref{lem:V_prop} implies that
  $v^iO=v^jO$. 
  Therefore, when the labels given by the algorithm for coordinates $i$
  and $j$ coincide (this happens only when $c^i=c^j$), 
  then $v^iO=v^jO$. But then, by the first property in 
  Lemma~\ref{lem:V_prop}, the coordinates $i$ and $j$ must have been
  in the same part of $\cS$.  

  We have shown that the indices in $\cI$ that are assigned the same label by
  the algorithm must come from the same part in $\cS$. We still need to verify
  that coordinates from the same part in $\cS$ do not get assigned to
  different parts. We show this cannot happen unless $|\cI^c|>\ep k
  \ell$, and this is where we use the assumption that
  $\ep<1/\ell$. Indeed, if {$|\cI^c|\leq \ep k \ell< k$}, then at least
  one element from each part $\cS_1, \ldots,\cS_\ell$ must be in $\cI$,
  since $|\cS_i|=k$ for every $i$. By our previous observation, elements in
  each of these parts in $\cI$ must be assigned different labels by the
  algorithm, which means that it must assign at least $\ell$ different
  labels to the elements in $\cI$. Thus, if the algorithm assigns two
  elements in the same part $\cS_i$ different labels, it will assign
  more that $\ell$ different labels, which is not allowed.
  
  Therefore,
  all the indices in $\cI$ are correctly labeled when $|\cI^c|\leq \ep
  k\ell$. Then, clearly, in this case the error event $\cE$ does not
  hold. It follows from the definition of $\cI$ that
  \begin{align}
    \bPr{\cE} &\leq  \bPr{|\cI^c|>\ep k \ell}
    \\
    &\leq \bPr{\bigg\lvert\left\{i: \|c^i-v^iO\|_2
      \geq\frac 1{\sqrt{2k}} \right\}\bigg\rvert>\ep
      k \ell}
    \\
    &\leq \bPr{\|C-VO\|_F^2>\frac {\ep\ell}{2}},
  \end{align}  
  where in the final step we used the fact that the second step implies $\|C-VO\|_F^2=\sum_{i=1}^{k\ell} \|c^i-v^iO\|_2^2\ge \varepsilon k\ell/2k$.
  This completes the proof of~\eqref{e:mislabel_bound1}. 
  
  \end{proof}  
  
  \noindent {\textit{Step 2: Controlling the perturbation.}}
  
  In view of Lemma~\ref{l:error_perturbation}, we only need to control
  the perturbation $\|\hat{V}-{V}O\|_{F}$. We do this using the
  following extension of the Davis-Kahan theorem, which also fixes
  the choice of $O$.
  	\begin{theorem}[Perturbation of eigenspace]\cite{Yu_Biometrika_2015}\label{thm_DK_multiple}
  		Let $A$ and $\hat{A}$ be $d\times d$ symmetric matrices with eigenvalues $\nu_{1}\ge\cdots\ge \nu_{d}$ and $\hat{\nu}_{1}\ge\cdots\ge \hat{\nu}_{d}$, respectively. Let $V$ and $\hat{V}$ be $d\times \ell$ matrices consisting of the $\ell$ leading normalized eigenvectors of $A$ and $\hat{A}$, respectively.
  		Then, there exists an orthonormal matrix $O\in\mR^{\ell\times \ell}$ such that
  		\begin{align}
  		\|\hat{V}-{V}O\|_{F}^{2}
  		\le 2\sqrt{2}~\frac{\min\{\sqrt{\ell}\|\hat{A}-A\|_{op},\|\hat{A}-A\|_{F}\}}{\nu_{\ell}-\nu_{\ell+1}}.
  		\end{align}
  		
  	\end{theorem}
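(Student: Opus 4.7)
My plan is to prove this as a variant of the classical Davis--Kahan $\sin\Theta$ theorem, with a careful choice of the orthonormal matrix $O$ coming from the orthogonal Procrustes solution. Let $\theta_{1},\ldots,\theta_{\ell}\in[0,\pi/2]$ denote the canonical angles between the column spaces of $V$ and $\hat{V}$, so that $\cos\theta_{1},\ldots,\cos\theta_{\ell}$ are the singular values of $V^{\top}\hat{V}$, and write $\Theta=\mathrm{diag}(\theta_{1},\ldots,\theta_{\ell})$. The two broad steps are: (i) control $\|\sin\Theta\|_{F}$ in terms of the perturbation $\hat{A}-A$, and (ii) convert this into a bound on $\|\hat{V}-VO\|_{F}$ for a well-chosen $O$.

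For step (i), I would invoke the classical Davis--Kahan theorem in the two standard unitarily invariant norms. The operator-norm version gives $\|\sin\Theta\|_{op}\le \|\hat{A}-A\|_{op}/(\nu_{\ell}-\nu_{\ell+1})$, and since $\sin\Theta$ is an $\ell\times\ell$ matrix this implies $\|\sin\Theta\|_{F}\le \sqrt{\ell}\,\|\hat{A}-A\|_{op}/(\nu_{\ell}-\nu_{\ell+1})$. The Frobenius version directly gives $\|\sin\Theta\|_{F}\le \|\hat{A}-A\|_{F}/(\nu_{\ell}-\nu_{\ell+1})$. Retaining the smaller of the two produces the $\min\{\sqrt{\ell}\|\hat{A}-A\|_{op},\|\hat{A}-A\|_{F}\}$ appearing in the statement.

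For step (ii), I would take the SVD $V^{\top}\hat{V}=LDM^{\top}$ with $D=\mathrm{diag}(\cos\theta_{i})$ and set $O=LM^{\top}$, which is the minimizer of the orthogonal Procrustes problem. A direct expansion then gives $\|\hat{V}-VO\|_{F}^{2}=2\ell-2\,\mathrm{tr}(D)=4\sum_{i=1}^{\ell}\sin^{2}(\theta_{i}/2)$. Applying the elementary inequality $\sin(\theta/2)\le \sin\theta/\sqrt{2}$ on $[0,\pi/2]$ yields $\|\hat{V}-VO\|_{F}^{2}\le 2\|\sin\Theta\|_{F}^{2}$, and combining with step (i) completes the argument up to absolute constants.

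The main obstacle is that the classical Davis--Kahan theorem requires a gap between a \emph{perturbed} eigenvalue and an unperturbed one (e.g.\ between $\hat{\nu}_{\ell}$ and $\nu_{\ell+1}$), whereas the bound here is cleanly stated in terms of the population gap $\nu_{\ell}-\nu_{\ell+1}$ alone. To bridge this mismatch I would use the Yu--Wang--Samworth device: Weyl's inequality gives $|\hat{\nu}_{j}-\nu_{j}|\le \|\hat{A}-A\|_{op}$ for every $j$, so when the perturbation is small relative to $\nu_{\ell}-\nu_{\ell+1}$ a valid gap can be transferred to the perturbed side with only a constant loss; when the perturbation is comparable to or larger than the gap, the conclusion follows from the trivial estimate $\|\hat{V}-VO\|_{F}\le \sqrt{2\ell}$, which already satisfies the bound in that regime. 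Stitching the two regimes together and absorbing constants produces the factor $2\sqrt{2}$ in the stated inequality.
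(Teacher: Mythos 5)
The paper offers no proof of this statement: it is imported directly from the cited reference (Yu, Wang and Samworth), so there is no internal argument to compare against. Your proposal is, in effect, a faithful reconstruction of the proof in that reference, and it is sound: Davis--Kahan in both the operator and Frobenius norms to bound $\|\sin\Theta\|_{F}$ (with $\|\sin\Theta\|_{F}\le\sqrt{\ell}\,\|\sin\Theta\|_{op}$ supplying the $\min$), Weyl's inequality plus a case split on the size of $\|\hat{A}-A\|_{op}$ relative to $\nu_{\ell}-\nu_{\ell+1}$ to replace the mixed population/sample gap required by the classical theorem with the purely population gap (the large-perturbation regime being absorbed by the trivial bound $\|\hat{V}-VO\|_{F}\le\sqrt{2\ell}$, which is exactly what the Procrustes choice of $O$ guarantees), and finally the orthogonal Procrustes identity $\|\hat{V}-VO\|_{F}^{2}=2\ell-2\,\mathrm{tr}(D)=4\sum_{i}\sin^{2}(\theta_{i}/2)\le 2\|\sin\Theta\|_{F}^{2}$; the factor $2$ from the case split and the factor $\sqrt{2}$ from the half-angle inequality combine to the stated $2\sqrt{2}$. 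The one discrepancy worth flagging is the exponent on the left-hand side: what you prove (and what the cited reference states, and what this paper actually uses in the very next display) is a bound on $\|\hat{V}-VO\|_{F}$, not on its square; the square in the displayed statement is a typo, and your argument correctly establishes the unsquared form rather than the literal inequality as printed.
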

  By applying this result with $T$ and $\bEE{T}$ in the role of $\hat{A}$ and
  $A$, respectively, we get that there exists an orthonormal matrix $O$ such that
  \begin{align}
  		\|\hat{V}-{V}O\|_{F}\le 
  	\frac{2\sqrt{2}}{\Delta_{\ell}}\min\{\sqrt{\ell}\|T-\bEE{T}\|_{op},\|T-\bEE{T}\|_{F}\},
  		\end{align}
  where $\Delta_{\ell}\ed\nu_{\ell}-\nu_{\ell+1}$. Combining this bound
  with our earlier bound from Lemma~\ref{l:error_perturbation}, we get
  \begin{align}
  \bPr{\cE} &\leq
  \bPr{\|T-\bEE{T}\|_{op}  \geq \frac{\Delta_{\ell}\sqrt{\ep}}8}
    \\
    &\leq \frac{8 }{\Delta_{\ell}\sqrt{\ep}} \cdot \bEE{\|T-\bEE{T}\|_{op}},
  \label{e:error_expected_perturbation}
  \end{align}
  where the last step uses Markov's inequality.
  
  To bound the expected value on the right hand side, we use the following
  extension of a result of Rudelson~\cite{Rudelson_JFA_1999}. As pointed out
  earlier, the original bound in~\cite{Rudelson_JFA_1999} was
  restricted to isotropic $Z_i$s, and we show that it extends to
  arbitrary i.i.d. $Z_i$s with an extra factor. 
  The proof is provided in Appendix \ref{app:Rudelson_pf}. 
  \begin{theorem}[Extension of a result in \cite{Rudelson_JFA_1999}]\label{thm_Rud_alternate}
  		Let $Z\in\mR^{N}$ be a random vector such that $A=\bEE{ZZ^{\top}}$. Let $Z_{1},\ldots,Z_{n}$
  		be independent copies of $Z$. Then, there exists an absolute constant $c$ such that
  		\begin{align}\label{Rud_alternate}
  		\bEE{\bigg\|\frac{1}{n}\sum_{i=1}^{n}Z_{i}Z_{i}^{\top}-A\bigg\|_{op}}
  		\le \frac 12\left(\alpha^2 + \alpha\sqrt{\alpha^2+4\|A\|_{op}}\right),
  		\end{align}
                  where
                  \[
  \alpha= c\sqrt{\frac{ \bEE{\max_{i\in[n]}\|Z_i\|_2^2}\log N}n}.
                  \]
  	\end{theorem}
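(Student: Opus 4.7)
The plan is to follow the classical self-bounding argument for sample covariance matrices, which extends Rudelson's original isotropic bound to a general mean $A$ at the cost of an extra quadratic equation in the end. Writing $S_n \ed (1/n)\sum_{i=1}^n Z_iZ_i^\top - A$, the first step is to apply the standard symmetrization inequality for the i.i.d.\ mean-zero matrix summands $Z_iZ_i^\top - A$ to obtain
\[
\bEE{\|S_n\|_{op}} \;\le\; 2\,\bEE{\bigg\|\frac{1}{n}\sum_{i=1}^n \varepsilon_i Z_iZ_i^\top\bigg\|_{op}},
\]
where $\varepsilon_1,\ldots,\varepsilon_n$ are i.i.d.\ Rademacher signs independent of $Z_1,\ldots,Z_n$. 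This replaces the mean-centered sum with a symmetric Rademacher sum, for which Rudelson's inequality applies directly.

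Next, I would condition on $Z_1,\ldots,Z_n$ and invoke Rudelson's non-commutative Khintchine inequality for sums of independent rank-one matrices: for an absolute constant $c$,
\[
\bEE{\bigg\|\sum_{i=1}^n \varepsilon_i Z_iZ_i^\top\bigg\|_{op}\;\bigg|\;Z_1^n}
\;\le\; c\sqrt{\log N}\,\max_{i\in[n]}\|Z_i\|_2\cdot \bigg\|\sum_{i=1}^n Z_iZ_i^\top\bigg\|_{op}^{1/2}.
\]
Taking expectations and normalizing by $n$, the right-hand side becomes $c\sqrt{(\log N)/n}\cdot \bEE{\max_i\|Z_i\|_2 \cdot \|(1/n)\sum_i Z_iZ_i^\top\|_{op}^{1/2}}$. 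An application of the Cauchy--Schwarz inequality then yields
\[
\bEE{\|S_n\|_{op}} \;\le\; \alpha\cdot \sqrt{\bEE{\bigg\|\frac{1}{n}\sum_{i=1}^n Z_iZ_i^\top\bigg\|_{op}}},
\]
with $\alpha$ as defined in the statement (absorbing the factor of $2$ into $c$).

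The final step is to close the recursion using the triangle inequality $\|(1/n)\sum_i Z_iZ_i^\top\|_{op} \le \|S_n\|_{op} + \|A\|_{op}$, which together with Jensen's inequality and the preceding bound gives, for $\beta\ed \bEE{\|S_n\|_{op}}$,
\[
\beta^2 \;\le\; \alpha^2\big(\beta + \|A\|_{op}\big).
\]
Solving this quadratic inequality in $\beta$ produces $\beta \le \tfrac{1}{2}\bigl(\alpha^2 + \alpha\sqrt{\alpha^2 + 4\|A\|_{op}}\bigr)$, matching the claim.

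The main obstacle I anticipate is the conditional application of Rudelson's inequality itself, since the cleanest form available in the literature is typically stated for deterministic vectors; one has to verify that the bound on the $\varepsilon$-expectation holds pointwise in the $Z_i$s so that outer expectation and Cauchy--Schwarz can be applied. Beyond that, the symmetrization and quadratic-algebra steps are routine, so the extension to arbitrary non-isotropic $Z_i$s amounts precisely to paying the price of the extra $\|A\|_{op}$ term inside the square root.
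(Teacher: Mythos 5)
Your proposal is correct and follows essentially the same route as the paper: the paper compresses your symmetrization, conditional non-commutative Khintchine, and Cauchy--Schwarz steps into the single cited bound $\beta\le c\sqrt{(\log N)/n}\,\sqrt{\bEE{\max_i\|Z_i\|_2^2}}\,\sqrt{\bEE{\|\frac1n\sum_i Z_iZ_i^{\top}\|_{op}}}$, and then closes the same quadratic $\beta^2\le\alpha^2(\beta+\|A\|_{op})$ via the triangle inequality. No substantive differences.
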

  Using this bound in~\eqref{e:error_expected_perturbation} with $N=k\ell$, we
  obtain 
  \begin{align}
    \bPr{\cE}\leq \frac{4}{\Delta_{\ell}\sqrt{\varepsilon}}\left(\alpha^2 + \alpha\sqrt{\alpha^2+4\|\bEE{T}\|_{op}}\right).
  \end{align}
  The proof is completed upon noting that $\alpha$ can be made
  smaller than $1/2$ using $n\ge c \bEE{\max_{i\in[n]}\|a_i\|_2^2}\, \log k\ell$, in which case
  $\alpha\sqrt{\alpha^2+4\|\bEE{T}\|_{op}}\leq
  \alpha\sqrt{8\max\{1,\|\bEE{T}\|_{op}\}}$. The error probability above can thus be made less than $1/4$ if $n\ge c (\log k\ell) \max\{1,\|\bEE{T}\|_{op}\}
  \bEE{\max_{i\in[n]}\|a_i\|_2^2}/(\Delta_{\ell}^{2}\varepsilon)$.
  \end{proof}

  In the next section, we combine the results from Theorems~\ref{thm:supp_union} and  \ref{thm_ecc} to show the sample complexity bound of Theorem~\ref{thm_multiple_supp_disj}.
  \subsection{Proof of Theorem \ref{thm_multiple_supp_disj}} \label{proof_thm_multiple_supp_disj}
  The proof of Theorem \ref{thm_multiple_supp_disj} now follows by combining guarantees for the union recovery step from Theorem \ref{thm:supp_union} and the clustering step from Theorem \ref{thm_ecc}.
  
  We begin by applying Theorem~\ref{thm:supp_union} to get that $\hat{\cS}_{\text{un}}$ coincides
  with $\cS_{\text{un}}=\cup_{i=1}^\ell\cS_i$ with probability close to
  $1$. Throughout, we condition on this event occurring. However, to
  avoid technical difficulties, we assume that a different set of independent
  samples is used to recover $\cS_{\text{un}}$ than those used to
  recover $\cS_1, \ldots, \cS_\ell$ -- thus, the overall number of samples needed
  will be the sum of samples needed for union recovery in Theorem~\ref{thm:supp_union}
  and the sample complexity determined in our analysis below.
  In particular, the
  clustering step dominates the sample complexity of our algorithm.
  
  Next, upon substituting the bounds from Lemma~\ref{lem:ET_spectrum} and
  Lemma~\ref{lem:Emax_a} into Theorem \ref{thm_ecc}, we see that for $\varepsilon$-approximate recovery of the supports it suffices to have
  	\begin{align}
  	n&\ge
  	\frac{c}{\varepsilon}\lambda_{0}^{2} \frac{k^{3}\ell}{m^{2}}\frac{\ell^{2}}{\lambda_{0}^{4}k^{2}}\cdot	n^{\frac{2}{\log k}}\cdot \bigg(\lambda_{0}\frac{k\sqrt{k}\sqrt{\ell}}{m}(\log k)^{2}\bigg)^{2}\cdot\log(k\ell)\nonumber \\
  	&=\frac{c}{\varepsilon} \frac{k^{4}\ell^{4}}{m^{4}}n^{\frac{2}{\log k}}(\log k)^{4}\log(k\ell).
  	\end{align}
 For $n\ge c((1/\varepsilon)(k\ell/m)^{4}\cdot (\log k)^{4}\log(k\ell))$,
  $n^{\frac{1}{\log k}}=O(1)$, which completes the proof in view of the
  sufficient condition for $n$ above.

  \section{Simulations}\label{sec:simulation}
  \subsection{Synthetic data}
  
\begin{figure}
\centering
\begin{subfigure}{.5\textwidth}
  \centering
  \includegraphics[width=.8\linewidth]{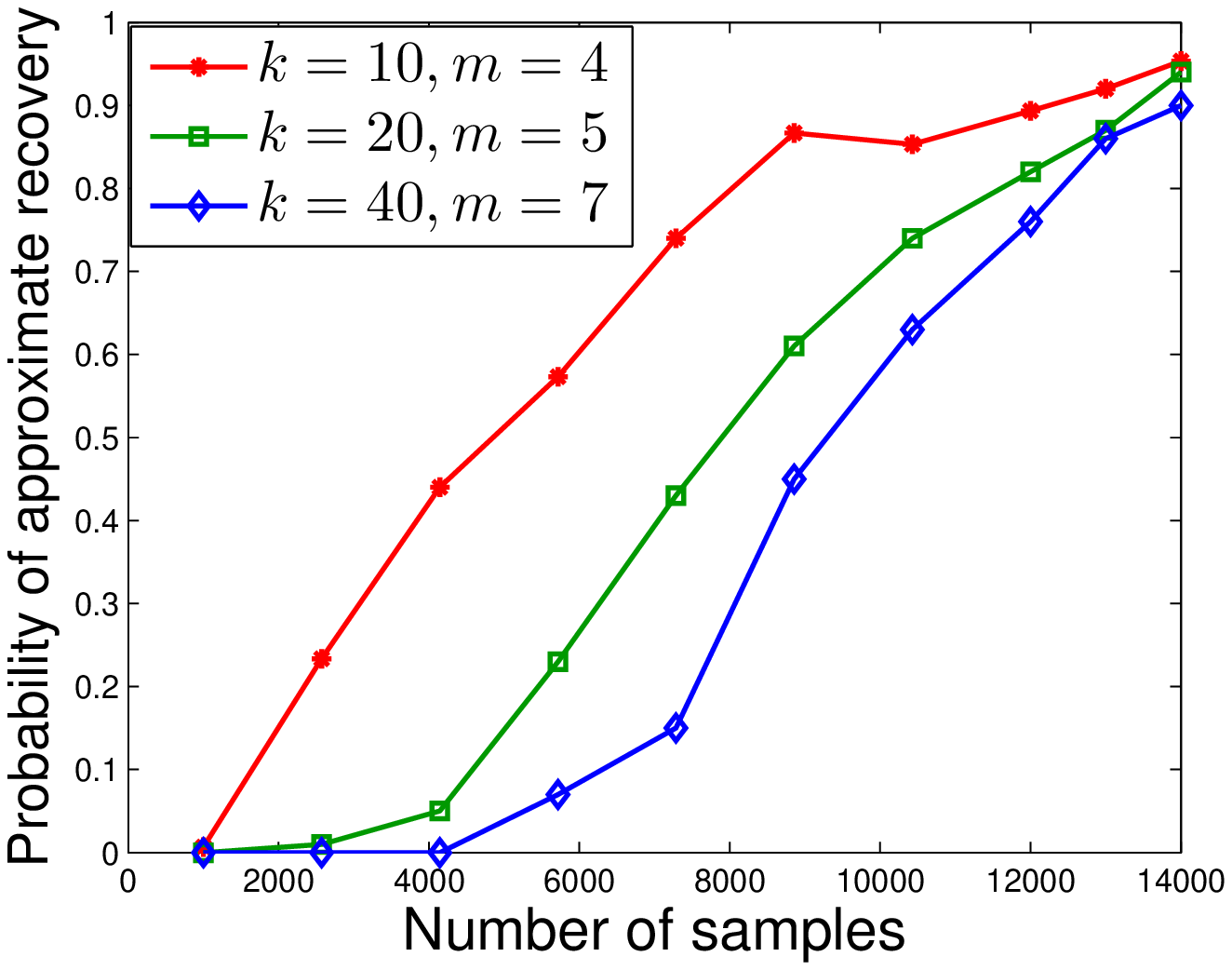}
  \caption{ $d=100$, $\varepsilon=0.2$, $\ell=2$.}
  \label{fig:sim_support}
\end{subfigure}%
\begin{subfigure}{.5\textwidth}
  \centering
  \includegraphics[width=.8\linewidth]{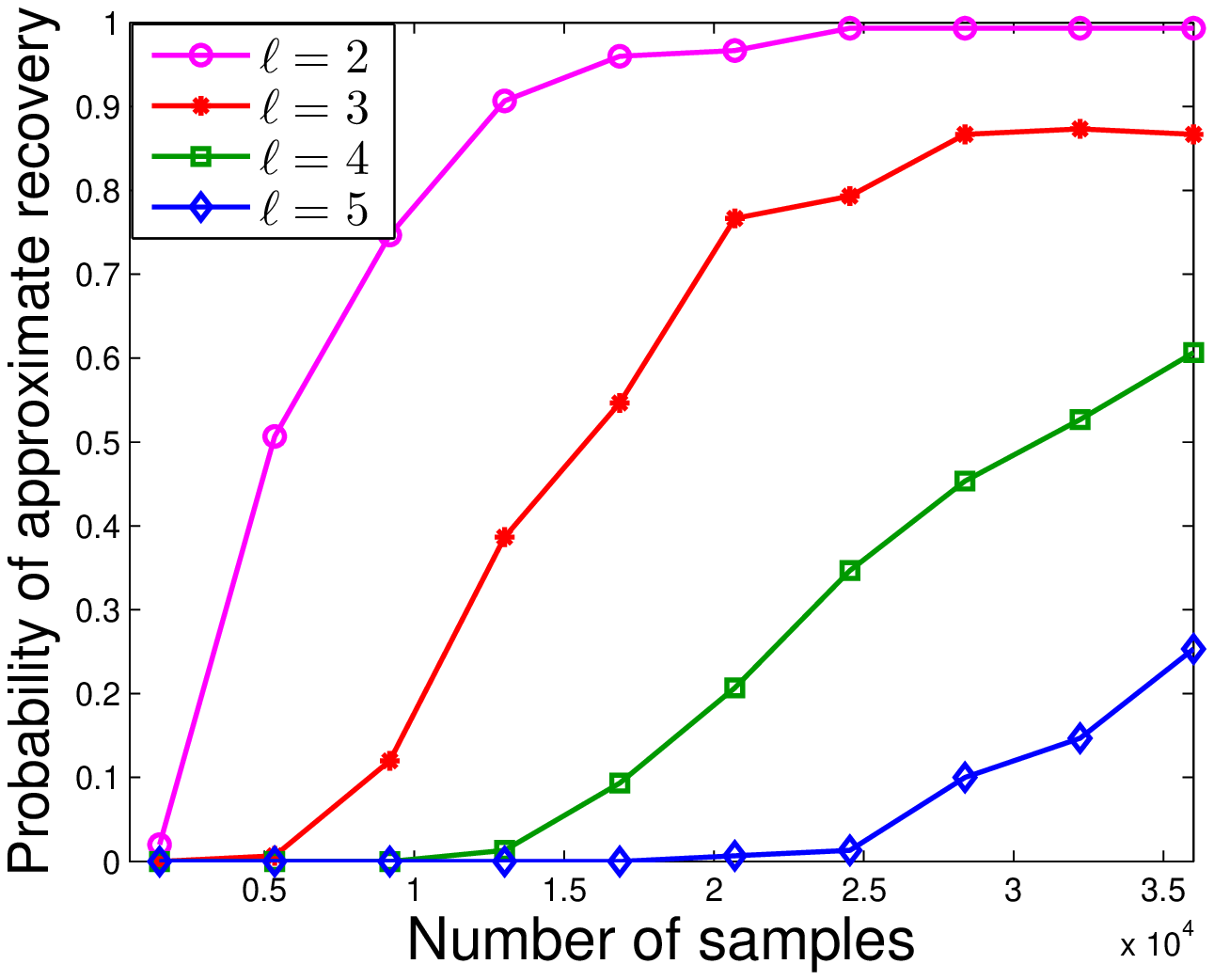}
  \caption{ $d=100$, $\varepsilon=0.2$, $m=4$, $k=10$.}
\label{fig:sim_support2}
\end{subfigure}
\caption{Probability of approximate support recovery with (a) varying $k/m$ ratios, and (b) varying $\ell$.}
\end{figure}
  
%
  
In this subsection, we evaluate the performance of Algorithm \ref{algo_mutliple_disj} on synthetic data for various parameter values.
Through these simulations, our goal is to see how the performance of the algorithm varies as a function of the ratio $k/m$ and $\ell$ for a fixed~$d$. 

 We first choose $d=100$, $\ell=2$ and consider three
  different values of $k/m$.  
  We generate two disjoint subsets $\cS_{1}$ and $\cS_{2}$ of
  $[d]$, each of size $k$. Then, for a given $n$, we generate $n/2$
  samples with each support, with values on the support drawn from the
  standard normal distribution in $\mR^{k}$. Measurement matrices
  $\{\Phi_{i}\}_{i=1}^{n}$ are generated independently with
  i.i.d. $\cN(0,1/m)$ entries and multiplied with the samples to obtain
  measurements $\{Y_{i}\}_{i=1}^{n}$. These measurements are given as input to
  the support recovery algorithm, which produces estimates for the union,
  as well as the individual supports, which we denote by $\hat{\cS}_{1}$ and
  $\hat{\cS}_{2}$.
  For each value of $(k,m,n)$, we run $100$ trials and declare it a
  success if the error $\sum_{i=1}^{2}|\hat{\cS}_{i}\Delta
  \cS_{\sigma(i)}|< 2\varepsilon k$. The plot in Figure \ref{fig:sim_support} shows the success rate
  over the $100$ trials as a function of the number of samples $n$, with $\varepsilon$ set as $0.2$. Note that the number of measurements taken per sample, $m$, is much
  smaller than the support size, $k$, of each sample. We can see from Figure \ref{fig:sim_support} that for a fixed probability of success, the number of samples required increases with $k/m$, which agrees with the result in Theorem~\ref{thm_multiple_supp_disj}. 
  In Figure \ref{fig:sim_support2}, we show the variation in the probability of approximate recovery as a function of $n$ for the number of supports $\ell = \{2, 3, 4, 5\}$, with $k$ and $m$ (and hence their ratio) held fixed. We can see that the number of samples required to achieve a given probability of recovery increases with $\ell$. Our current experiments however do not reveal whether the dependence on these parameters is tight.
%

  \subsection{MNIST dataset}
As an application involving natural data, we consider the problem of reconstructing handwritten images from very few linear measurements.
  We apply the multiple support recovery algorithm to the MNIST dataset \cite{mnist}, which consists of $60,000$ images of handwritten digits, each of size $28\times 28$.
  Each (grayscale) image is a sample in our setting, and the support of the sample essentially identifies the digit.
  This dataset fits well into our hypothesis that there is a small set of unknown supports underlying the data -- handwritten images corresponding to the same digit can be thought of as having roughly the same pattern (support) in the pixel domain. Thus, the vectorized version of images of the same digit will have approximately the same support.  
  We note that the task here is to recover the images of the digits from low dimensional projections, and not to learn a classifier using the dataset.

  In our experiments, the vectorized version of each image (a $784\times 1$ vector) is projected onto $m=100, 200$ or $500$ dimensions using Gaussian measurement matrices described in Assumption \ref{assump_phi}. Given these low dimensional projections, the goal is to identify the underlying digits. We fix $\ell=2$ and consider the example of digits $1$ and $5$ as shown in Figure~\ref{fig:digits_support}. The support size of each digit is roughly in the range $150-200$. It can be seen that Algorithm \ref{algo_mutliple_disj} can identify the distinct digits even when $m<k$. For comparison, we used the Group LASSO algorithm on the projected samples, which tries to recover the individual samples (images) itself. However, it requires a much larger number of measurements per sample (for example, about $m=500$ in this case).
  In fact, previously known algorithms for sparse recovery do not perform well in the low measurement regime of $m<k$, and we have used Group LASSO as an example to illustrate this fact. 
  
   We note that since these are handwritten digits, the support of samples coming from the {same} digit can also vary to some extent. However, the averaging across samples in our estimator takes care of this problem. Further, the supports from different digits need not be disjoint. To handle overlaps, we use the observation that $\tilde{\lambda}$ can provide an estimate for the intersection of supports as well. The plot of sorted entries of $\tilde{\lambda}$ shows a sharp drop in values at two locations, one  around the intersection and another around the union. We include this estimate of intersection of supports into our final estimate.
   This method performs well in practice, as can be seen in the results of Figure \ref{fig:digits_support}, where digits $1$ and $5$ have significant overlap.
   
  \begin{figure}[th]
  				\centering
  				\begin{subfigure}{.22\textwidth}
  				  \centering
  				  \includegraphics[height=2cm,width=3cm]{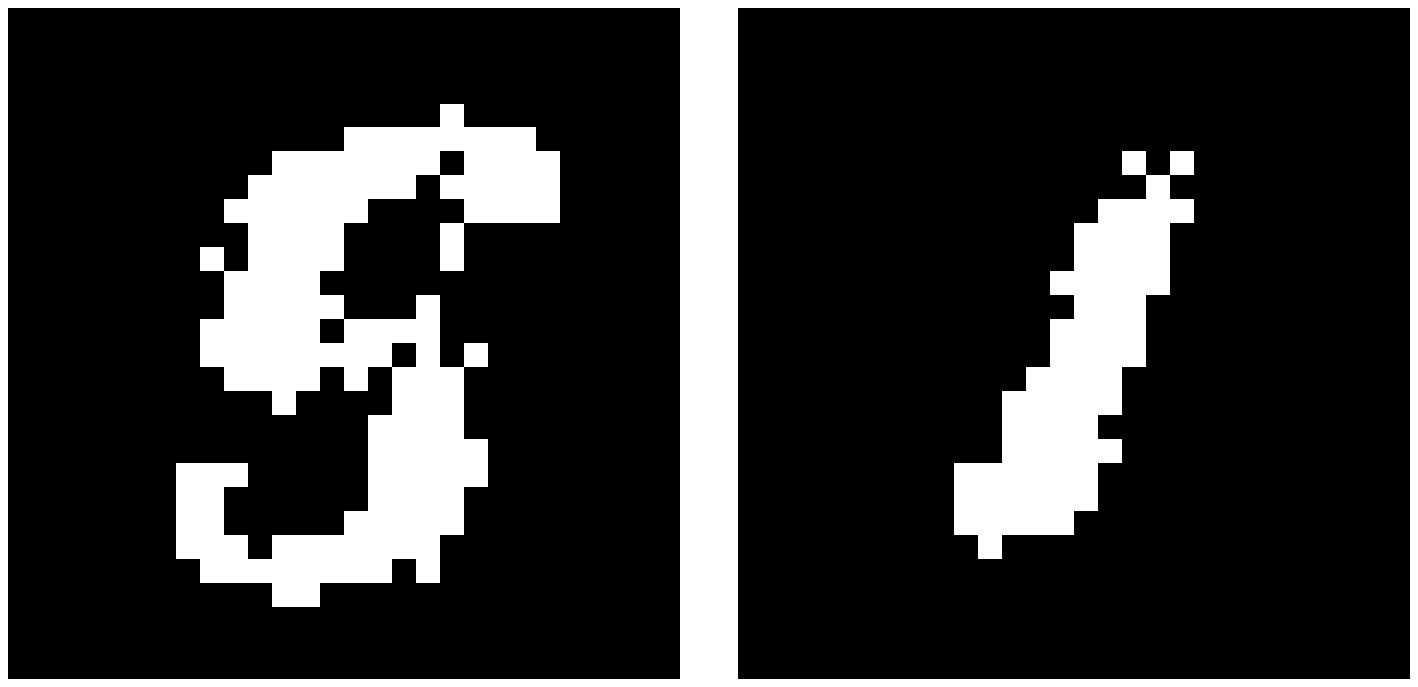}
  				 \caption{$m=100,n=2000$}
  				\end{subfigure}
  				\quad
  				\begin{subfigure}{.22\textwidth}
  				  \centering
  				  \includegraphics[height=2cm,width=3cm]{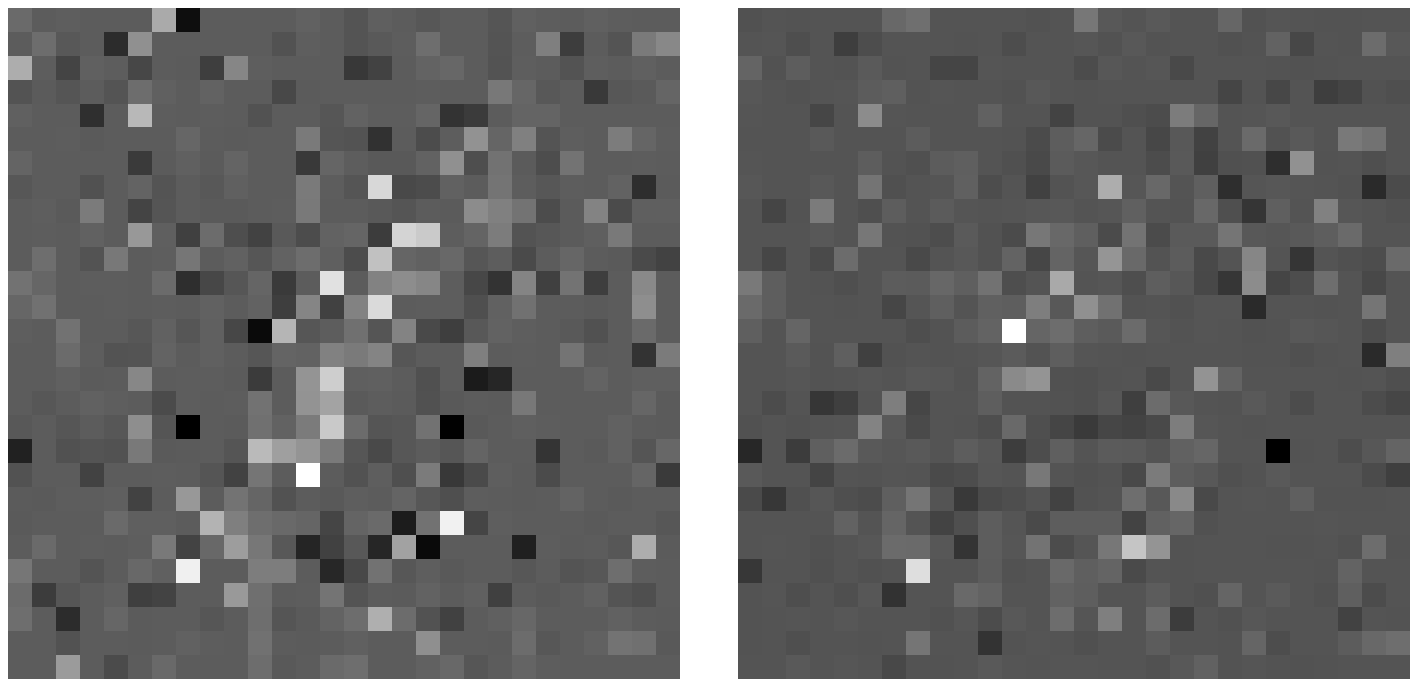}
  				  \caption{$m=100,n=2000$}
  				\end{subfigure}
  				
  				\begin{subfigure}{.22\textwidth}
  				  \centering
  				 	\includegraphics[height=2cm,width=3cm]{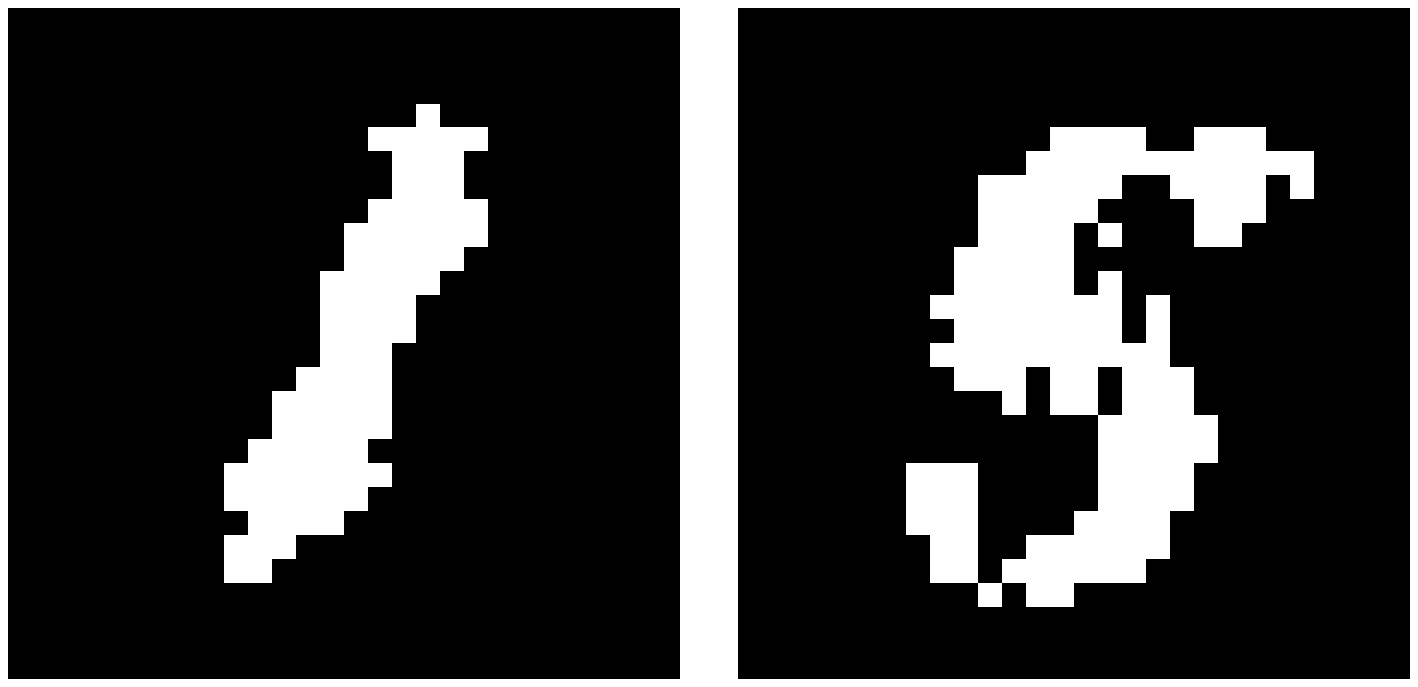}
  				  \caption{$m=200,n=2000$}
  				\end{subfigure}
  				\quad 
  					\begin{subfigure}{.22\textwidth}
  						  \centering
  						 	\includegraphics[height=2cm,width=3cm]{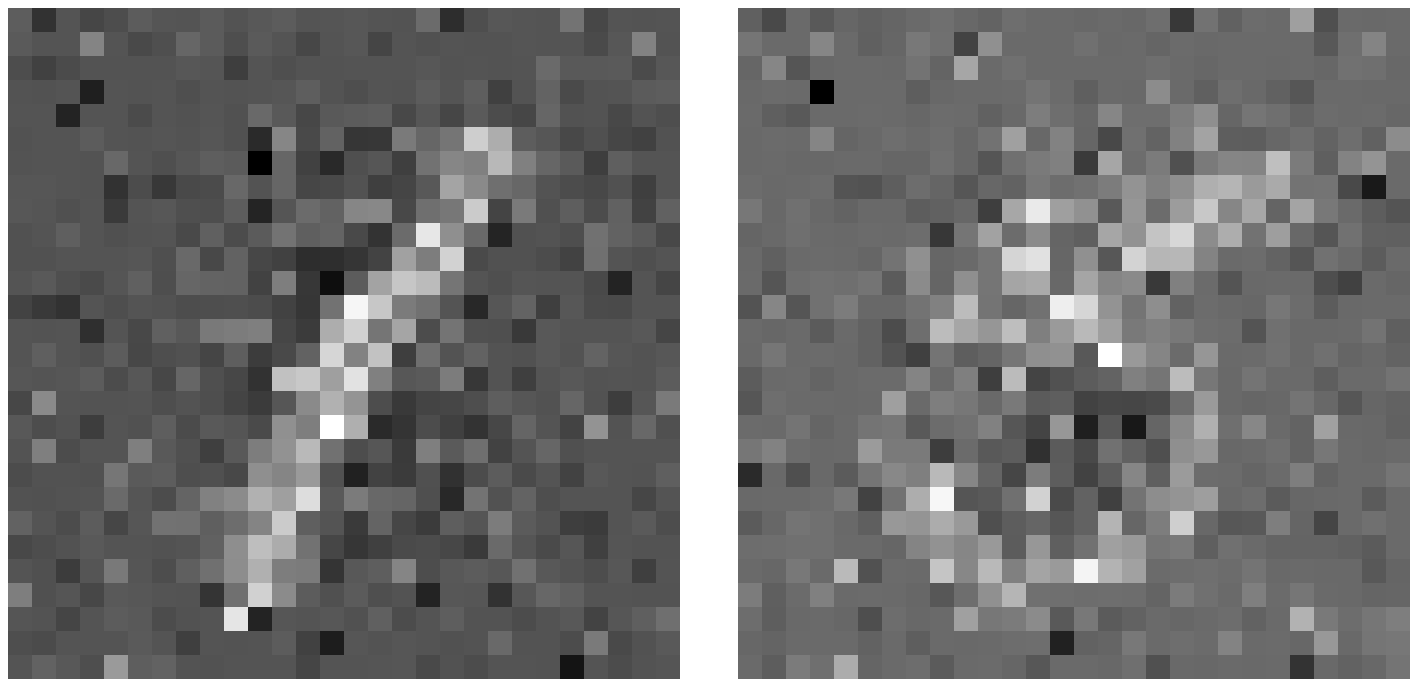}
  						  \caption{$m=200,n=2000$}
  						\end{subfigure}
  						
  					\centering
  					\begin{subfigure}{.22\textwidth}
  					  \centering
  					  \includegraphics[height=2cm,width=3cm]{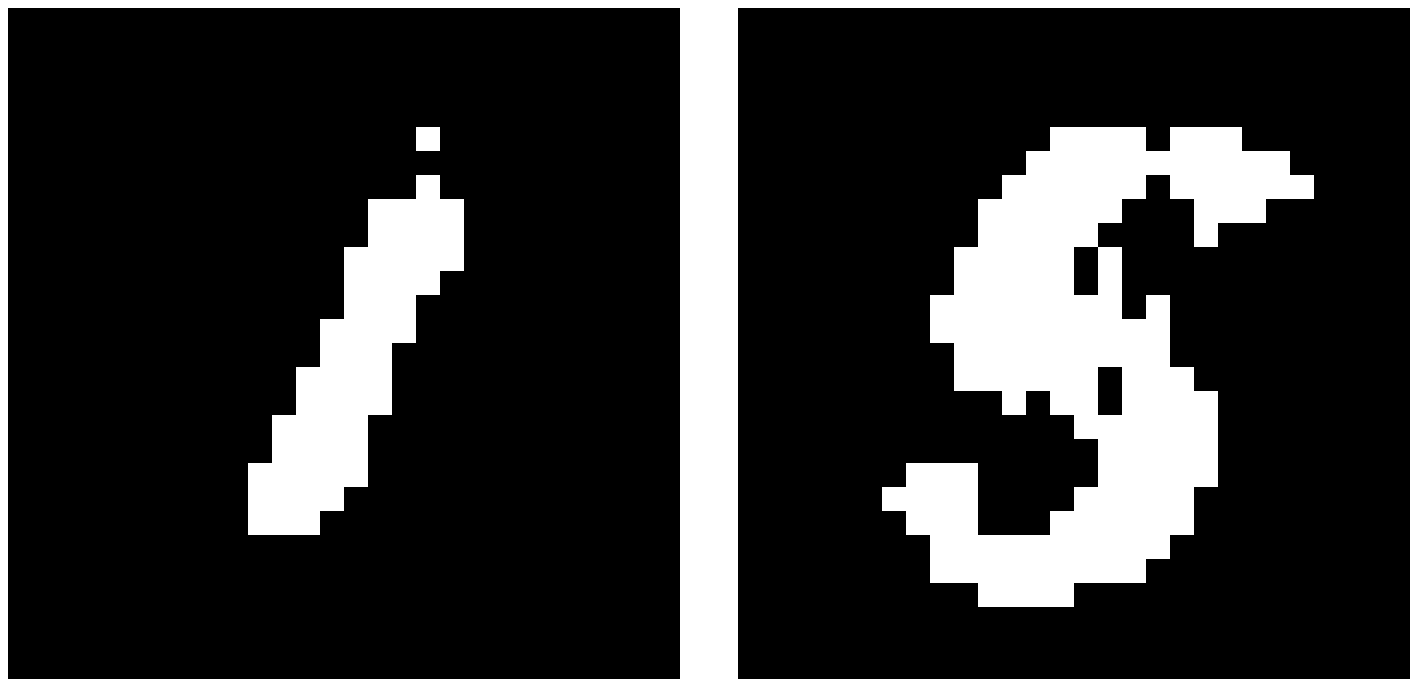}
  					 \caption{$m=500,n=2000$}
  					\end{subfigure}
  					\quad
  					\begin{subfigure}{.22\textwidth}
  					  \centering
  					  \includegraphics[height=2cm,width=3cm]{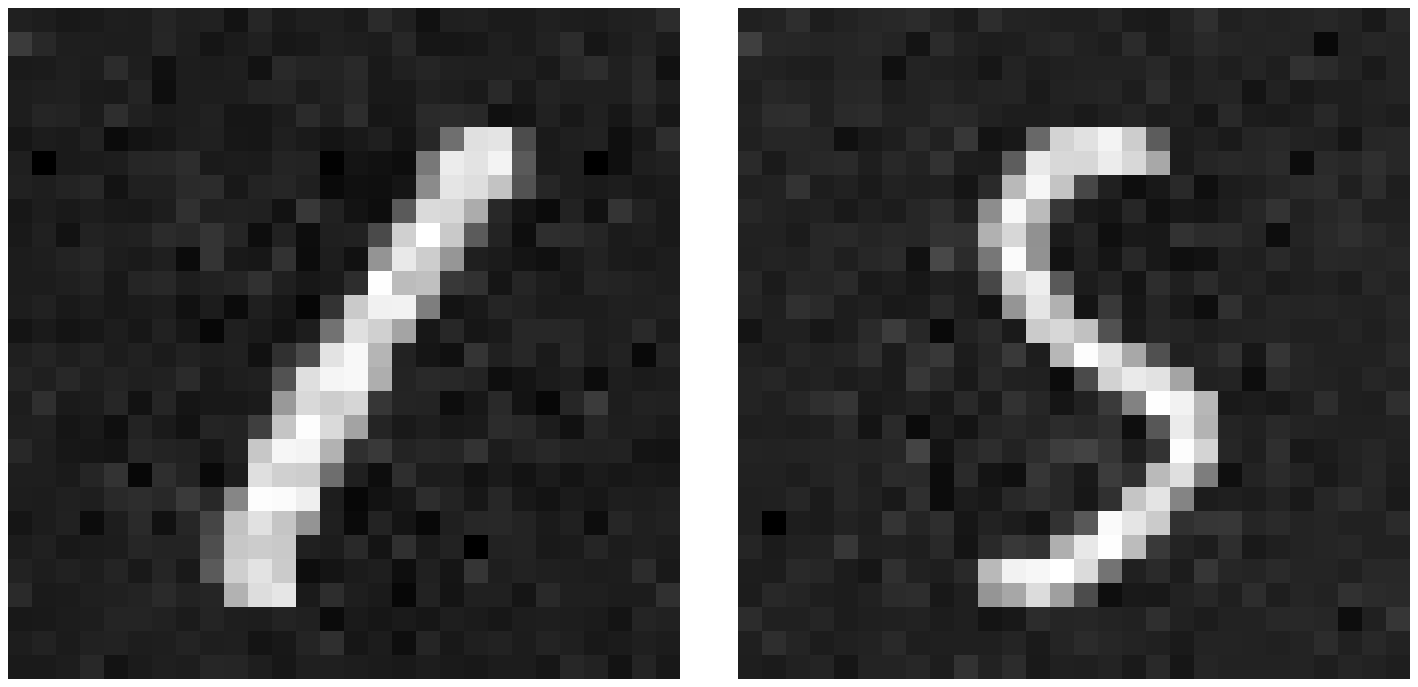}
  					  \caption{$m=500,n=2000$}
  					\end{subfigure}
  							  \caption{Recovery performance of Algorithm \ref{algo_mutliple_disj} ((a),(c),(e)), and Group LASSO ((b),(d),(f)). 
  							  } \label{fig:digits_support} 
  							  \end{figure}
 \vspace{-1cm} 							  
\subsection{Computational complexity}

The first step in our algorithm for estimating the union involves computing the average variance along each of the $d$ coordinates and requires $O(mnd)$ operations.
The clustering step involves computing the $T$ matrix and its $\ell$ leading eigenvectors which requires $O(k^{3}\ell^{3}+k^{2}\ell^{2}n)$ operations, followed by the $\ell$-means step which requires $O(k\ell^{3})$ operations per iteration.  Other algorithms for recovering multiple supports do not perform well when $m<k$, and have computational complexity that scales quadratically or worse with $d$. For instance, the sparse Bayesian learning based algorithm from \cite{Wang_UAI_2015} has a complexity of $O(d^{2})$ per iteration, and LASSO-based procedures have a complexity of $O(d^{2})$ or $O(d^{3})$ per iteration, depending on the specific algorithm used.								  
  							  
  \section{Discussion}

Throughout in this work, we assumed that the distinct supports were pairwise disjoint sets.
In the case of overlapping supports, the structure of the expected affinity matrix, and consequently its spectrum, changes. For the special case of $\ell=2$, overlapping supports can be handled by a simple modification of the sign-based estimate. Instead of partitioning the coordinates in the union estimate based on the sign of the eigenvector, we now use a threshold $\tau>0$ and declare coordinates with values in $[-\tau,\tau]$ as belonging to both supports  (values above $\tau$ or below $-\tau$ are assigned to different supports). The optimal $\tau$ can be explicitly characterized in terms of the parameters of the problem.
Given our current algorithm, a simple way to handle this case for {\em{general}} $\ell$ would be to use fuzzy $\ell$-means, which returns scores for each coordinate indicating how likely it is to belong to a certain support. However, choosing a threshold to decide the supports using the scores is difficult in general. 
Some other approaches have been explored in the graph clustering literature, but these do not apply directly to our setting. Other extensions of this work include studying the performance of the algorithm under different support sizes, and prior distribution with non-uniform mixing weights. Also, our work shows a sufficient condition on the number of samples required for multiple support recovery; obtaining the necessary condition is a challenging task in general and requires characterizing the distance between mixture distributions. Using a component wise distance bound leads to the same lower bound as in \cite{Ramesh_arxiv_2019} (with an additional $1/\ell$ factor), and obtaining
a better lower bound seems difficult.

  

\begin{appendices}

\section{Remaining proofs from Section \ref{sec:analysis_cluster}}\label{app:remaining_lemmas_cluster}

\subsection{Proof of Lemma~\ref{lem:median_trick}~(Probability of error boosting)}\label{app:median_trick}

Given an $(n,\varepsilon,1/4)$-estimator for $\Sigma_{k,\ell,d}$, we apply it to $L$ independent blocks of data. Specifically,
denoting this estimator by $e$,
	consider independent copies $(Y^n(t), \Phi^n(t))$, $1\leq t \leq L$, of $(Y^n,\Phi^n)$. For $t\in [L]$, let
	\[
	(\hat{\cS}_{1, t} , \ldots, \hat{\cS}_{\ell,t}):=e(Y^n(t), \Phi^{n}(t))
	\]
	denote the output for the estimator applied to the $t$th block.

        We now describe a procedure to output a final estimate for the supports using the estimates $(\hat{\cS}_{1,t},\ldots,\hat{\cS}_{\ell,t})$ from the $L$ blocks of samples. For each $t\in[L]$, we check if there
is a set $\cI\subseteq[L]\backslash\{t\}$ of cardinality $N\ge L/2$ satisfying
\begin{align}\label{e:condition2}
\min_{\sigma_{t}\in \cG_{\ell}}~\frac{1}{k\ell}\sum_{i=1}^{\ell}|\hat{\cS}_{i,t}\Delta\hat{\cS}_{\sigma_{t}(i),t^\prime}|\le 2\varepsilon, \quad \forall\,t^{\prime}\in\cI.
\end{align}
That is, we look for a $t$ for which $(\hat{\cS}_{1,t},\ldots,\hat{\cS}_{\ell,t})$ are close to $L/2$ other estimates. This indicates ``robustness'' of the estimate from the $t$th block, making it an appropriate proxy for the median.
Our final estimate is $(\bar{\cS}_{1},\ldots,\bar{\cS}_{\ell})=(\hat{\cS}_{1,t},\ldots,\hat{\cS}_{\ell,t})$, where $t$ is an index which satisfies the property above.

We show that for $L\geq \lceil 8\ln \frac 1 \delta\rceil$ the estimator
above constitutes an $(nL, 3\varepsilon, \delta)$-estimator for $\Sigma_{k,\ell,d}$. 
Indeed, denoting
	\[
	Z_t=\mathds{1}\left(\exists\, \sigma\in \cG_\ell \text{ s.t. }
	\frac 1{k\ell}\sum_{i=1}^\ell|\cS_i \Delta \hat{\cS}_{\sigma(i),t}|\leq \ep
	\right),
	\]
	by our assumption for the estimator $e$ we have
	\[
	\bE{\bPP{(\cS_1, \ldots, \cS_\ell)}}{Z_t}\geq \frac 3 4.
	\]
	Furthermore, $Z_t$ are independent for different $t\in [L]$. Thus, by Hoeffding's inequality,
	\[
	\bP{(\cS_1, \ldots, \cS_\ell)}{\sum_{t=1}^L Z_t\leq \frac{L}{2}}\leq e^{-\frac{L}{8}}, \quad
	\forall\, (\cS_1, \ldots, \cS_\ell)\in \Sigma_{k,\ell,d}.
	\]
	In particular, for $L\geq \lceil 8\ln \frac 1 \delta\rceil$,
	with probability exceeding $1-\delta$
	there exist\footnote{Without loss of generality, we assume $L$ to be even.} $M\geq L/2+1$
	indices $t_1, \ldots, t_{M}\in [L]$ and permutations $\sigma_1, \ldots, \sigma_M\in \cG_\ell$ such that
	\begin{align}
	\frac 1{k\ell}\sum_{i=1}^\ell |\cS_i \Delta \hat{\cS}_{\sigma_j(i),t_j}|\leq \ep, \quad \forall\,j\in[M].
	\label{e:condition1}
	\end{align}
        Note that since $|A\Delta B|$ is a metric for subsets of $[d]$, the estimate $(\hat{\cS}_{1,t},\ldots,\hat{\cS}_{\ell,t})$ for $t=t_1$ satisfies~\eqref{e:condition2} when~\eqref{e:condition1} holds; in fact, any index among $\{t_1, ..., t_M\}$ can serve this purpose. 
        However, the estimate described earlier need not select any of these indices. Yet, we now show that any other index chosen by the procedure will work as well, provided~\eqref{e:condition1} holds.

        To that end, denote by $\cI^\prime$ the set $\{t_1, \ldots, t_M\}$ of indices satisfying~\eqref{e:condition1}, and recall the set $\cI$ found by our estimation procedure earlier. Then, when $|\cI^\prime|\ge L/2+1$, which holds with probability exceeding $1-\delta$,
        \[
	|\cI \cap \cI^\prime|\geq |\cI|+|\cI^\prime| - L \geq 1,
	\]
whereby 
	there exists an index $t\in [L]$ and permutations $\sigma, \overline{\sigma}  \in \cG_\ell$ such that
	\begin{align*}
	\frac 1{k\ell}\sum_{i=1}^\ell |\cS_i \Delta \hat{\cS}_{\sigma(i),t}|\leq \ep\,\, \text{ and }\,\,
	\frac 1{k\ell}\sum_{i=1}^\ell |\overline{\cS}_i \Delta \hat{\cS}_{\overline{\sigma}(i),t}|\leq 2\ep.
	\end{align*}
        It follows that the permutation $\sigma^\prime=\sigma \circ \overline{\sigma}^{-1}$ satisfies
	\[
	\frac 1{k\ell}\sum_{i=1}^\ell |\cS_i \Delta \overline{\cS}_{\sigma^\prime(i)}|\leq 3\ep,
	\]
	which completes the proof.
	\qed

\subsection{Proof of Lemma~\ref{lem:Emax_a}}\label{s:moment_bound}
As noted in the proof of Theorem \ref{thm_multiple_supp_disj}, 
the clustering step in our algorithm is analyzed under the assumption that the union of supports is exactly recovered in the first step, whereby we can set
 $\hat{\cS}_{\text{un}}=\cS_{\text{un}}$. 
 
 We will first show the bound on $\bEE{\max_{i\in[n]}\|a_{i}\|_{2}^{2}}$, followed by the moment bound for $\bEE{\|a_{i}\|_{2}^{q}}$. We start by noting that for any $q\ge2$,
 	\begin{align*}
 	\bEE{\max_{i\in[n]}\|a_{i}\|_{2}^{2}}
 	&=\bEE{\bigg(\max_{i\in[n]}\|a_{i}\|_{2}^{q}\bigg)^{\frac{2}{q}}}\\
 	&\le \bEE{\bigg(\sum_{i=1}^{n}\|a_{i}\|_{2}^{q}\bigg)^{\frac{2}{q}}}\\
 	&\le \bigg(\bEE{\sum_{i=1}^{n}\|a_{i}\|_{2}^{q}}\bigg)^{\frac{2}{q}}\\
 	&=n^{\frac{2}{q}}\bigg(\bEE{\|a_{1}\|_{2}^{q}}\bigg)^{\frac{2}{q}},
 	\end{align*}
 	where we used Jensen's inequality in the third step.
    For $\log
         k\ge 2$, upon setting $q=\log k$ in the inequality above, we get
 	\begin{align*}
 	\bEE{\max_{i\in[n]}\|a_{i}\|_{2}^{2}}\leq 
 	n^{\frac{2}{\log k}}\bigg(\bEE{\|a_{1}\|_{2}^{\log k}}\bigg)^{\frac{2}{\log k}}.
 	\end{align*}
 
 We now proceed to bound $\bEE{\|a_{i}\|_{2}^{q}}$. In the rest of the proof, we will denote $a_{i}\in\mR^{d}$ by $a$, and with some abuse of notation, denote by $\Phi_{i}$ the $i$th column of $\Phi$.
 By using the definition of $a$, we have	
		\begin{align*}
		\|a\|_{2}^{2q}
		=\bigg(\sum_{i\in{\cS}_{\text{un}}}a_{i}^{2}\bigg)^{q}
		&=\bigg(\sum_{i\in{\cS}_{\text{un}}}(\Phi_{i}^{\top}\Phi_{\cS} X_{\cS})^{4}\bigg)^{q}\\
		&=\bigg(\sum_{i\in{\cS}_{\text{un}}}(\alpha_{i}^{\top} X_{\cS})^{4}\bigg)^{q}\\
		&=\bigg(\sum_{i\in{\cS}_{\text{un}}}(X_{\cS}^{\top}A_{i}X_{\cS})^{2}\bigg)^{q},
		\end{align*}
		where $\alpha_{i}=\Phi_{\cS}^{\top}\Phi_{i}$ as defined before and $A_{i}\ed \alpha_{i}\alpha_{i}^{\top}$.
		To compute the expectation of the term in the last step, we first condition on $\Phi$ and note that
			\begin{align}\label{chaos_sum_moment}
				\mathbb{E}\bigg[\bigg(\sum_{i\in{\cS}_{\text{un}}}(X_{\cS}^{\top}A_{i}X_{\cS})^{2}\bigg)^{q}\bigg\rvert\Phi\bigg]
				&=(k\ell)^{q}\mathbb{E}\bigg[\bigg(\frac{1}{k\ell}\sum_{i\in{\cS}_{\text{un}}}(X_{\cS}^{\top}A_{i}X_{\cS})^{2}\bigg)^{q}\bigg\rvert\Phi\bigg]\nonumber\\
				&\le (k\ell)^{q-1}\sum_{i\in{\cS}_{\text{un}}}\bEE{(X_{\cS}^{\top}A_{i}X_{\cS})^{2q}|\Phi},
			\end{align}
where we used $|{\cS}_{\text{un}}|=k\ell$, and the convexity of the function $x^{q}$ for $x\ge 0$, $q\in\mN$.
The quantity on the right essentially involves the $(2q)$th moment of a subexponential random variable (see Appendix \ref{app:moment_concentration} for definition).  
 To see that the
quadratic form $X_{\cS}^{\top}A_{i}X_{\cS}$
is subexponential, we use
the Hanson-Wright inequality ($cf.$~\cite{Rudelson_ECP_2013})
to get
		\begin{align*}
		\mathbb{P}&(|X_{\cS}^{\top}A_{i}X_{\cS}-\mu|\ge t|\Phi)
		\le 2\exp\bigg(-\min\bigg\{\frac{t^{2}}{\lambda_{0}^{2}\|A_{i}\|_{F}^{2}},\frac{t}{\lambda_{0}\|A_{i}\|_{op}}\bigg\}\bigg), 
		\end{align*}
		where $\mu=\bEE{X_{\cS}^{\top}A_{i}X_{\cS}|\Phi}=\lambda_{0}\|\alpha_{i}\|_{2}^{2}$. Lemma \ref{lem_sbx_moment} in  Appendix \ref{app:moment_concentration} can now be used to bound the moment in \eqref{chaos_sum_moment}.
Specifically, we get
		\begin{align*}
		\mathbb{E}[(X_{\cS}^{\top}A_{i}X_{\cS})^{2q}|\Phi]
		&\le 2q\cdot(16)^{q}\bigg(\Gamma(q)\lambda_{0}^{2q}\|A_{i}\|_{F}^{2q}
		+\Gamma(2q)\lambda_{0}^{2q}\|A_{i}\|_{op}^{2q}\bigg)
		+2^{2q}\mu^{2q}\\
		&\le 3q\cdot(16)^{q}\Gamma(2q)\lambda_{0}^{2q}\|\alpha_{i}\|_{2}^{4q},
		\end{align*}
		where we used $\|A_{i}\|_{F}=\|A_{i}\|_{op}=\|\alpha_{i}\|_{2}^{2}$.
              Next, taking expectation over $\Phi$, we obtain
		\begin{align}\label{chaos_moment}
		\bEE{(X_{\cS}^{\top}A_{i}X_{\cS})^{2q}}
		&\le c^{\prime}_{q}\Gamma(2q)\lambda_{0}^{2q}\bEE{\|\alpha_{i}\|_{2}^{4q}},
		\end{align}
		where $c^{\prime}_{q}=3q\cdot (16)^{q}$.
		Thus, combining the result above with \eqref{chaos_sum_moment}, we get
		\begin{align}
		\mathbb{E}\bigg[\bigg(\sum_{i\in{\cS}_{\text{un}}}(X_{\cS}^{\top}A_{i}X_{\cS})^{2}\bigg)^{q}\bigg]
		&\le c^{\prime}_{q}\Gamma(2q)\lambda_{0}^{2q}(k\ell)^{q}
		\sum_{i\in{\cS}_{\text{un}}}\bEE{\|\alpha_{i}\|_{2}^{4q}}
              \nonumber  \\
		&= c^{\prime}_{q}\Gamma(2q)\lambda_{0}^{2q}(k\ell)^{q}
		\bigg(
		\sum_{i\in \cS}\bEE{\|\alpha_{i}\|_{2}^{4q}}
		+\sum_{i\in {\cS}_{\text{un}}\backslash \cS}\bEE{\|\alpha_{i}\|_{2}^{4q}}\bigg).\label{chaos_sum_moment_split}
		\end{align}
          When $i\in \cS$,
		\begin{align*}
		\mathbb{E}[\|\alpha_{i}\|_{2}^{4q}] 
		&=\bEE{\bigg(\|\Phi_{i}\|_{2}^{4}+\sum_{j\in \cS\backslash\{i\}}(\Phi_{i}^{\top}\Phi_{j})^{2}\bigg)^{2q}}\\
		&\le 2^{2q}\left(\bEE{\|\Phi_{i}\|_{2}^{8q}}
		+\bEE{\bigg(\sum_{j\in \cS\backslash\{i\}}(\Phi_{i}^{\top}\Phi_{j})^{2}\bigg)^{2q}}\right),
		\end{align*}
		and when $i\in {\cS}_{\text{un}}\backslash \cS$,
		\begin{align*}
		\bEE{\|\alpha_{i}\|_{2}^{4q}}
		&\le 
		\bEE{\bigg(\sum_{j\in \cS}(\Phi_{i}^{\top}\Phi_{j})^{2}\bigg)^{2q}}.
		\end{align*}
		Since $\Phi_{i}$ has independent, subgaussian entries with parameter $1/m$, we see that $\|\Phi_{i}\|_{2}^{2}\sim\sbx(c^{\prime}/m,c^{\prime\prime}/m)$ with $c^{\prime}=128$ and $c^{\prime\prime}=8$ \cite[Lemma D.2]{Ramesh_arxiv_2019}. This gives, using Lemma \ref{lem_sbx_moment},
		\begin{align*}
		\bEE{(\|\Phi_{i}\|_{2}^{2})^{4q}}
		&\le 2q(16)^{q}\bigg(\Gamma(2q)\frac{c^{\prime 2q}}{m^{2q}}+\Gamma(4q)\frac{c^{\prime\prime 4q}}{m^{4q}}\bigg)
		+(\bEE{\|\Phi_{i}\|_{2}^{2}})^{4q}\nonumber\\
		&\le 4q(16)^{q}c^{\prime 2q}\Gamma(4q)\frac{1}{m^{2q}}+1,
		\end{align*}
		where we used $c^{\prime}>c^{\prime\prime 2}$.
		Using similar arguments, we note that $\Phi_{i}^{\top}\Phi_{j}|\Phi_{i}$ is subgaussian with parameter $\|\Phi_{i}\|_{2}^{2}/m$, which implies that,
conditioned on $\Phi_{i},$                 
                $\sum_{j\in \cS\backslash\{i\}}(\Phi_{i}^{\top}\Phi_{j})^{2}$
is
$\sbx(c^{\prime}(k-1)\|\Phi_{i}\|_{2}^{4}/m^{2},c^{\prime\prime}\|\Phi_{i}\|_{2}^{2}/m)$.
Then, using Lemma \ref{lem_sbx_moment} again, we get
		\begin{align*}
		\mathbb{E}\bigg[\bigg(\sum_{j\in \cS\backslash\{i\}}(\Phi_{i}^{\top}\Phi_{j})^{2}\bigg)^{2q}\bigg]
		&\le~c^{\prime}_{q}\bE{\Phi_{i}}{\Gamma(q)c^{\prime q}\bigg(\frac{k-1}{m^{2}}\bigg)^{q}\|\Phi_{i}\|_{2}^{4q}+\Gamma(2q)c^{\prime\prime 2q}\bigg(\frac{\|\Phi_{i}\|_{2}^{2}}{m}\bigg)^{2q}}\nonumber \\
		&~+2^{2q}\bigg(\bEE{\sum_{j\in \cS\backslash\{i\}}(\Phi_{i}^{\top}\Phi_{j})^{2}}\bigg)^{2q}\\
		&\le~ c^{\prime}_{q}c^{\prime q}\Gamma(q)\bigg(\frac{k-1}{m^{2}}\bigg)^{q}
		\bigg(1+2c^{\prime}_{q}c^{\prime 2q}\Gamma(2q)\frac{1}{m^{q}}\bigg)\nonumber\\
		&+c^{\prime}_{q}c^{\prime\prime 2q}\Gamma(2q)\frac{1}{m^{2q}}\bigg(1+c^{\prime}_{q}c^{\prime 2q}\Gamma(2q)\frac{1}{m^{q}}\bigg)+2^{2q}\bigg(\frac{k-1}{m}\bigg)^{2q}\\
		&\le~5c^{\prime}_{q}c^{\prime 2q}\Gamma(2q)\bigg(\frac{k}{m}\bigg)^{2q}.
		\end{align*}       
		Combining these results and substituting into \eqref{chaos_sum_moment_split}, we get 
		\begin{align*}
		\mathbb{E}\bigg[\bigg(\sum_{i\in{\cS}_{\text{un}}}(X_{\cS}^{\top}A_{i}X_{\cS})^{2}\bigg)^{q}\bigg]
		&\le c^{\prime}_{q}\Gamma(2q)\lambda_{0}^{2q}(k\ell)^{q-1}
		\bigg(\sum_{i\in{S}}\bEE{\|\alpha_{i}\|_{2}^{4q}}+\sum_{i\in\cS_{\text{un}}\backslash \cS}\bEE{\|\alpha_{i}\|_{2}^{4q}}\bigg)\\
		&\le 5c^{\prime 2}_{q}c^{\prime 2q}\Gamma(2q)\lambda_{0}^{2q}(k\ell)^{q-1}
		\bigg(k\Gamma(2q)\bigg(\frac{k}{m}\bigg)^{2q}
		+(k\ell-k)\Gamma(2q)\bigg(\frac{k}{m}\bigg)^{2q}\bigg)\\
		&= 5c^{\prime 2}_{q}c^{\prime 2q}(\Gamma(2q))^{2}\lambda_{0}^{2q} \bigg(\frac{k\sqrt{k\ell}}{m}\bigg)^{2q}.
		\end{align*}
		Rescaling the exponent, we get
		\begin{align*}
		\bEE{\|a\|_{2}^{q}}
		&=\bEE{\bigg(\sum_{i\in{\cS}_{\text{un}}}(X_{\cS}^{\top}A_{i}X_{\cS})^{2}\bigg)^{\frac{q}{2}}}\\
		&\le 5c_{q/2}^{2}c^{\prime q}(\Gamma(q))^{2}\lambda_{0}^{q}\bigg(\frac{k\sqrt{k\ell}}{m}\bigg)^{q}
		\end{align*}
Noting that $c^{\prime}(5c_{q/2}^{2})^{1/q}\le 45\cdot8c^{\prime}=c_{0}$, we obtain the result. 
              \qed

\subsection{Proof of Lemma \ref{lem:V_prop}}\label{app:lem_V_prop}
\begin{enumerate}[(i)]
\item To show the first property, we note that the true covariance matrix can be decomposed as $\bEE{T}=WBW^{\top}+(\dg-\on)I$, where $W\in\{0,1\}^{k\ell\times \ell}$ encodes the block structure, and $B\in\mR^{\ell\times \ell}$ contains the distinct values from each block. In particular,
  for $1\leq i \leq k\ell$ and $1\leq j \leq \ell$, define 
	\begin{align*}
	W_{ij}=\begin{cases}
	1,~\text{if}~i\in \cS_{j},\\
	0,~\text{otherwise}, 
	\end{cases}
	\end{align*}
	and, for $1\leq i \leq \ell$ and $1\leq j \leq \ell$, define 
	\begin{align*}
	B_{ij}=\begin{cases}
	\on,~\text{if}~i=j,\\
	\off,~\text{otherwise}.
	\end{cases}
	\end{align*}
Since $\bEE{T}$ and $WBW^{\top}$ have the same set of eigenvectors, we will show that the matrix $V\in\mR^{k\ell\times \ell}$ consisting of the $\ell$ leading eigenvectors of $WBW^{\top}$ has the desired property.
To that end, first note that there are only $\ell$ unique rows in $W$, one unique row corresponding to each block. We will show that  $V$ also consists of $\ell$ unique rows, in exact correspondence with the rows of $W$.
To do so, we will follow \cite[Lemma 3.1]{Rohe_annals_stats_2011} and show that $V$ is essentially a row-transformed version of $W$, i.e., there exists an invertible matrix $H\in\mR^{\ell\times \ell}$ such that $WH=V$. We start by considering the eigen decomposition
\begin{align*}
(W^{\top}W)^{\frac{1}{2}}B(W^{\top}W)^{\frac{1}{2}}
=U\Lambda U,
\end{align*}
where $\Lambda\in\mR^{\ell\times\ell}$ is diagonal and $U\in\mR^{\ell\times\ell}$ is an orthonormal matrix.
Left multiplying by $W(W^{\top}W)^{-\frac{1}{2}}$ and right multiplying by $(W^{\top}W)^{-\frac{1}{2}}W^{\top}$ in the equation above, we get,
\begin{align*}
WBW^{\top}=WH\Lambda (WH)^{\top},
\end{align*}
where $H\ed (W^{\top}W)^{-\frac{1}{2}}U$.
Finally, right multiplying by $WH$ and noting that $(WH)^{\top}WH=I$, we have
\begin{align*}
WBW^{\top}\cdot WH=WH\cdot \Lambda,
\end{align*}
implying that the columns of $WH$ are the normalized eigenvectors of $WBW^{\top}$. 

We have thus shown that $V=WH$. Let $v^{i}$ and $w^{i}$ denote the $i$th row of $V$ and $W$, respectively.
If $v^{i}=v^{j}$ for some $i\ne j$, then $w^{i}H=w^{j}H$. Since $H=(W^{\top}W)^{-\frac{1}{2}}U$ is invertible, this implies $w^{i}=w^{j}$. Conversely, if $w^{i}=w^{j}$ for some $i\ne j$, then $w^{i}H=w^{j}H$, which implies $v^{i}=v^{j}$.	

\item 	
	Using the fact that $V=WH$ from (i), we have for $v^{i}\ne v^{j}$,
		\begin{align*}
		\|v^{i}-v^{j}\|_{2}
		&=\|(w^{i}-w^{j})H\|_{2}\\
		&\ge \sqrt{2}\nu_{\min}(H),
		\end{align*}
		where $\nu_{\min}(H)\ed\min_{\|x\|_{2}=1}\|x^{\top}H\|_{2}$, and we used $\|w^{i}-w^{j}\|_{2}=\sqrt{2}$ for $w^{i}\ne w^{j}$. Now,
		\begin{align*}
		\min_{\|x\|_{2}=1}\|x^{\top}H\|_{2}^{2}
		&=\min_{\|x\|_{2}=1}x^{\top}HH^{\top}x\\
		&=\min_{\|x\|_{2}=1}x^{\top}(WW^{\top})^{-1}x\\
		&=\frac{1}{k},
		\end{align*}
		where we used $HH^{\top}=(W^{\top}W)^{-\frac{1}{2}}UU^{\top}(WW^{\top})^{-\frac{1}{2}}=(WW^{\top})^{-1}$ and the fact that $WW^{\top}=k~\diag{I}$. Putting everything together, we get
		\begin{align*}
		\|v^{i}-v^{j}\|_{2}^{2}\ge \frac{2}{k}.
		\end{align*}

\end{enumerate}	

\subsection{Proof of Theorem \ref{thm_Rud_alternate}}\label{app:Rudelson_pf}

The proof is similar to that of \cite{Rudelson_JFA_1999}, and we highlight the steps needed to extend the result to general $A$.	In particular, following similar arguments as in \cite{Rudelson_JFA_1999}, it can be shown that
		\begin{align}
				\mathbb{E}\bigg[\bigg\|\frac{1}{n}\sum_{i=1}^{n}Z_{i}Z_{i}^{\top}-A\bigg\|_{op}\bigg]
				\le c\frac{\sqrt{\log N}}{n}\sqrt{\bEE{\max_{i\in[n]}\|Z_{i}\|_{2}^{2}}}\sqrt{\bEE{\bigg\|\sum_{i=1}^{n}Z_{i}Z_{i}^{\top}\bigg\|_{op}}},\label{eq_op}
				\end{align}
		Now,
		\begin{align}
		\bEE{\bigg\|\sum_{i=1}^{n}Z_{i}Z_{i}^{\top}\bigg\|_{op}}
		&\le n\bEE{\bigg\|\frac{1}{n}\sum_{i=1}^{n}Z_{i}Z_{i}^{\top}-A\bigg\|_{op}+\|A\|_{op}}\nonumber \\
		&= n(\beta+\|A\|_{op}),\label{eq_beta}
		\end{align}
		where $\beta\ed \bEE{\bigg\|\frac{1}{n}\sum_{i=1}^{n}Z_{i}Z_{i}^{\top}-A\bigg\|_{op}}$.
		It follows from \eqref{eq_op} and \eqref{eq_beta} that
		\begin{align*}
		\beta\le c\sqrt{\frac{\log N}{n}}\sqrt{\bEE{\max_{i\in[n]}\|Z_{i}\|_{2}^{2}}}\sqrt{\beta +\|A\|_{op}}.
		\end{align*}
		Letting $\alpha=c\sqrt{(\log N)/n}\sqrt{\bEE{\max_{i\in[n]}\|Z_{i}\|_{2}^{2}}}$, we have the solution
		\begin{align*}
		\beta\le \frac{1}{2}\bigg(\alpha^{2}+\alpha\sqrt{\alpha^{2}+4\|A\|_{op}}\bigg),
		\end{align*}
which completes the proof.


\subsection{Proof of Lemma \ref{lem:ET_structure}}\label{app:ET_structure}
Our goal is to compute the expected value of the  clustering matrix, denoted $\bEE{T}$, and we will do so by first conditioning on the measurement ensemble $\bphi_{1}^{n}$ and noting that each entry of $T$ is then of the form $(X^{\top}AX)^{2}$, where $X$ is subgaussian and $A$ is a fixed matrix (given $\bphi_{1}^{n}$). This conditional expectation can be calculated using Lemma \ref{lem_qform_moment}. The next step is to average over the distribution of $\bphi_{1}^{n}$, and our analysis will require the moment assumptions on the entries of $\bphi_{1}^{n}$ described in Assumption~$2$. 
		Although each entry of $\bEE{T}$
		can be explicitly characterized in terms of the system parameters, we will sometimes only mention the leading terms. In fact, the analysis of our algorithm in Theorem \ref{thm_multiple_supp_disj} only requires an upper bound on the diagonal entries and tight upper and lower bounds on the off diagonal entries of $\bEE{T}$.
		
Specifically, by the definition of $T$ from \eqref{Tuv}, we note that
		\begin{equation}\label{Tuv}
		\bEE{T_{uv}}=\frac{1}{n}\sum_{j=1}^{n}(\bphi_{ju}^{\top}\bphi_{j}X_{j})^{2}\cdot (\bphi_{jv}^{\top}\bphi_{j}X_{j})^{2},
		\end{equation} 
		for $(u,v)\in{\cS}_{\text{un}}\times {\cS}_{\text{un}}$.
		The expectation in the expression above is over the joint distribution of $X_{1}^{n}$, $\bphi_{1}^{n}$ and the labels $G_{1}^{n}$ (generating samples from the mixture  $\dP_{\cS}=\frac 1 \ell \sum_{i=1}^{\ell} \dP^{(i)}$ described in Section II in the main file can be thought of as drawing the label $G$ uniformly from $[\ell]$, and conditioned on $G=g$, drawing a sample from $\dP^{(g)}$). We will first condition on the labels (or, equivalently, on the random subsets $\{I_{1},\ldots,I_{\ell}\}$ defined as $I_{i}\ed \{j\in[n]: \mathtt{supp}(X_{j})=\cS_{i}\}$ and on the measurement matrices.
		We focus on a single summand in \eqref{Tuv}, and drop the dependence on the sample index $j$. With a slight abuse of notation, we let $\cS=\text{supp}(X)$ denote the support of the sample we focus on and note that
		\begin{align*}
		\bE{X}{(\bphi_{u}^{\top}\bphi X)^{2}\cdot (\bphi_{v}^{\top}\bphi X)^{2}|\bphi,G}
		=\bE{X}{(X_{\cS}^{\top}\alpha_{u}\alpha_{v}^{\top}X_{\cS})^{2}|\bphi,G},
		\end{align*}
		where,  $\alpha_{u}\ed \bphi_{\cS}^{\top}\bphi_{u}$, $u\in{\cS}_{\text{un}}$. We can now use Lemma \ref{lem_qform_moment} to get
		\begin{align}\label{M_offdiag_phi}
		\bE{X}{(X_{\cS}^{\top}\alpha_{u}\alpha_{v}^{\top}X_{\cS})^{2}|\bphi,G}
		=\rho\sum_{i\in{\cS}}\alpha_{ui}^{2}\alpha_{vi}^{2}
		+\lambda_{0}^{2}\sum_{i\ne j}\alpha_{ui}^{2}\alpha_{vj}^{2} +\lambda_{0}^{2}\sum_{i\ne j}\alpha_{ui}\alpha_{vi}\alpha_{uj}\alpha_{vj},
		\end{align}
		where recall $\lambda_{0}=\bEE{X_{i}^{2}}$ and $\rho=\bEE{X_{i}^{4}}$. We will first handle the $u=v$ case, which will be used to compute the diagonal entries of the mean matrix. We have, for every $u\in{\cS}_{\text{un}}$,
		\begin{align*}
		\bE{X,\bphi}{(X_{\cS}^{\top}\alpha_{u}\alpha_{u}^{\top}X_{\cS})^{2}|G}
		&=\rho\bE{\bphi}{\sum_{i\in{\cS}}\alpha_{ui}^{4}|G}+2\lambda_{0}^{2}\bE{\bphi}{\sum_{i\ne j}\alpha_{ui}^{2}\alpha_{uj}^{2}|G}\\
		&=
		\rho\bE{\bphi}{\sum_{i\in{\cS}}(\bphi_{u}^{\top}\bphi_{i})^{4}|G}
		+2\lambda_{0}^{2}\bE{\bphi}{\sum_{i\ne j}(\bphi_{u}^{\top}\bphi_{i})^{2}(\bphi_{u}^{\top}\bphi_{j})^{2}|G}.
		\end{align*}

		When $u\in \cS$,
		\begin{align}
		  \dg^{s}&\ed\bE{X,\bphi}{(X_{\cS}^{\top}\alpha_{u}\alpha_{u}^{\top}X_{\cS})^{2}|G}
                 \nonumber \\
		&=\rho\bE{\bphi}{\|\bphi_{u}\|_{2}^{8}+\sum_{i\in{\cS}\backslash\{u\}}(\bphi_{u}^{\top}\bphi_{i})^{4}|G}
		+2\lambda_{0}^{2}\bE{\bphi}{2\|\bphi_{u}\|_{2}^{4}\sum_{i\in{\cS}\backslash\{u\}}(\bphi_{u}^{\top}\bphi_{i})^{2}+\sum_{i\ne j}(\bphi_{u}^{\top}\bphi_{i})^{2}(\bphi_{u}^{\top}\bphi_{j})^{2}|G}\nonumber\\
		&\le c\rho\bigg(1+\frac{k-1}{m^{2}}\bigg)
		+c^{\prime}\lambda_{0}^{2}\bigg(\frac{k-1}{m}+\frac{(k-1)(k-2)}{m^{2}}\bigg),\label{M_diag_big}
		\end{align}
		where we used Lemma \ref{lem_qform_moment} in the second step and Lemma \ref{lem_moments_ip} in the third step, and  retained the leading terms.

		When $u\in {\cS}_{\text{un}}\backslash \cS$, using Lemmas \ref{lem_qform_moment} and \ref{lem_moments_ip} once again, we have 
		\begin{align}
		  \dg^{d} &\ed\bE{X,\bphi}{(X_{\cS}^{\top}\alpha_{u}\alpha_{u}^{\top}X_{\cS})^{2}|G}
              \nonumber    \\
		&=\rho\bE{\bphi}{\sum_{i\in{\cS}}(\bphi_{u}^{\top}\bphi_{i})^{4}|G}+2\lambda_{0}^{2}\bE{\bphi}{\sum_{i\ne j}(\bphi_{u}^{\top}\bphi_{i})^{2}(\bphi_{u}^{\top}\bphi_{j})^{2}|G}\nonumber\\
		&\le c\rho\bigg(\frac{k}{m^{2}}\bigg)+c^{\prime}\lambda_{0}^{2}\frac{k(k-1)}{m^{2}}.\label{M_diag_small}
		\end{align}
		We now use these results to bound the diagonal entries of the mean matrix $\bEE{T}$. Using \eqref{Tuv}, \eqref{M_diag_big} and \eqref{M_diag_small}, we see that for $u\in \cS_{1}$,
		\begin{align}
		  \dg\ed \bEE{T_{uu}}               
		&=\bE{G}{\bE{X,\bphi}{\frac{1}{n}\bigg(\sum_{j\in I_{1}}(\bphi_{ju}^{\top}\bphi_{j}X_{j})^{4}
				+\cdots+\sum_{j\in I_{\ell}}(\bphi_{ju}^{\top}\bphi_{j}X_{j})^{4}\bigg)\bigg\rvert G}}\nonumber\\
		&=\bE{G}{\frac{1}{n}\bigg(|I_{1}|\dg^{s}+\sum_{i=2}^{\ell}|I_{i}|\dg^{d}\bigg)}\nonumber\\
		&=\frac{1}{\ell}\dg^{s}+\frac{\ell-1}{\ell}\dg^{d}\nonumber\\
		&\le  \frac{c}{\ell}\bigg\{\rho\bigg(1+\frac{k-1}{m^{2}}\bigg)
		 +\lambda_{0}^{2}\bigg(\frac{k-1}{m}+\frac{(k-1)(k-2)}{m^{2}}\bigg)\bigg\}
		+ \frac{c(\ell-1)}{\ell}\bigg\{\rho\bigg(\frac{k}{m^{2}}\bigg)+\lambda_{0}^{2}\frac{k(k-1)}{m^{2}}\bigg\},\label{M_diag}
		\end{align}
		where we used $\bE{G}{|I_{i}|}=n/\ell$ for all $i\in[\ell]$, under the uniform mixture assumption. The same result holds for $u\in \cS_{i}$ for any $i\in[\ell]$.
		
		The next step is to bound the off diagonal entries of $\bEE{T}$. Continuing from \eqref{M_offdiag_phi}, we will handle each of the three terms separately. For each of these terms, we will consider the case when both $u$ and $v$ belong to the same support, and when they belong to different supports. Overall, these calculations highlight the block structure of $\bEE{T}$, with the diagonal entries all being equal, and the off diagonal entries taking two different values based on whether the indices belong to the same support or not.

     For the first term in \eqref{M_offdiag_phi}, when $(u,v)\in \cS\times \cS$, $u\ne v$, we have
		\begin{align}
		\bE{\bphi}{\sum_{i\in \cS}\alpha_{ui}^{2}\alpha_{vi}^{2}|G}
		=&\bE{\bphi}{\|\bphi_{u}\|_{2}^{4}(\bphi_{u}^{\top}\bphi_{v})^{2}|G}
		+\bE{\bphi}{\|\bphi_{v}\|_{2}^{4}(\bphi_{u}^{\top}\bphi_{v})^{2}|G}+\bE{\bphi}{\sum_{i\in \cS\backslash\{u\}\cup\{v\}}(\bphi_{i}^{\top}\bphi_{u})^{2}
			(\bphi_{i}^{\top}\bphi_{v})^{2}\bigg\rvert G}\nonumber\\
		=&\frac{2}{m}\bigg(1+\frac{3}{\nm}(c_2-1)+\frac{1}{\nm^{2}}(c_3-3c_2+2)\bigg)+\frac{k-2}{m^{2}}\bigg(1+\frac{1}{m}(c_{2}-1)\bigg)\ed\gamma_{1}^{s},\label{M_offdiag_1_big}
		\end{align}
		using Lemma \ref{lem_moments_ip}. On the other hand, when $(u,v)\in \cS\times {\cS}_{\text{un}}\backslash \cS$, we have
		\begin{align}
		\bE{\bphi}{\sum_{i\in \cS}\alpha_{ui}^{2}\alpha_{vi}^{2}|G}
		=&\bE{\bphi}{\|\bphi_{u}\|_{2}^{4}(\bphi_{u}^{\top}\bphi_{v})^{2}|G}
		+\bE{\bphi}{\sum_{i\in \cS\backslash\{u\}}(\bphi_{i}^{\top}\bphi_{u})^{2}
			(\bphi_{i}^{\top}\bphi_{v})^{2}\bigg\rvert G}\nonumber\\
		=&\frac{1}{m}\bigg(1+\frac{3}{\nm}(c_2-1)+\frac{1}{\nm^{2}}(c_3-3c_2+2)\bigg)+\frac{k-1}{m^{2}}\bigg(1+\frac{1}{m}(c_{2}-1)\bigg)\ed\gamma_{1}^{sd}.\label{M_offdiag_1_med}
		\end{align}
		The same result holds when $(u,v)\in{\cS}_{\text{un}}\backslash \cS\times \cS$. Finally, when $(u,v)\in {\cS}_{\text{un}}\backslash \cS \times {\cS}_{\text{un}}\backslash \cS$,
		\begin{align}
		\bE{\bphi}{\sum_{i\in \cS}\alpha_{ui}^{2}\alpha_{vi}^{2}|G}
		&=\bE{\bphi}{\sum_{i\in \cS}(\bphi_{i}^{\top}\bphi_{u})^{2}
			(\bphi_{i}^{\top}\bphi_{v})^{2}|G}\nonumber\\
		&=\frac{k}{m^{2}}\bigg(1+\frac{1}{m}(c_{2}-1)\bigg)\ed\gamma_{1}^{d}.\label{M_offdiag_1_small}
		\end{align}
		
		For the second term in \eqref{M_offdiag_phi}, when $(u,v)\in \cS\times \cS$, 
		\begin{align}
		  \bE{\bphi}{\sum_{i\ne j}\alpha_{ui}^{2}\alpha_{vj}^{2}|G}               
		&=\bE{\bphi}{\|\bphi_{u}\|_{2}^{4}\|\bphi_{v}\|_{2}^{4}
			+(\bphi_{u}^{\top}\bphi_{v})^{4}
			+\|\bphi_{u}\|_{2}^{4}\sum_{i\in \cS\backslash\{u\}\cup\{v\}}(\bphi_{v}^{\top}\bphi_{i})^{2}\bigg\rvert G}\nonumber \\
		&\quad +\bE{\bphi}{\|\bphi_{v}\|_{2}^{4}\sum_{i\in \cS\backslash\{u\}\cup\{v\}}(\bphi_{u}^{\top}\bphi_{i})^{2}
			+(\bphi_{u}^{\top}\bphi_{v})^{2}\sum_{i\in \cS\backslash\{u\}\cup\{v\}}(\bphi_{v}^{\top}\bphi_{i})^{2}\bigg\rvert G}\nonumber \\ 
		&\quad +\bE{\bphi}{(\bphi_{u}^{\top}\bphi_{v})^{2}\sum_{i\in \cS\backslash\{u\}\cup\{v\}}(\bphi_{u}^{\top}\bphi_{i})^{2}
			+\sum_{\substack{i,j\in \cS\backslash\{u\}\cup\{v\} \\ i\neq j}}(\bphi_{u}^{\top}\bphi_{i})^{2}\cdot (\bphi_{v}^{\top}\bphi_{j})^{2}\bigg\rvert G}\nonumber \\
		&=\bigg(1+\frac{1}{m}(c_{2}-1)\bigg)^{2}
		+\bigg(\frac{2}{m^{2}}+\frac{1}{m^{3}}(c_{2}^{2}-2)\bigg)
		+2\bigg(1+\frac{1}{m}(c_{2}-1)\bigg)\frac{k-2}{m}\nonumber \\
		&\quad +2\frac{(k-2)}{m^{2}}\bigg(1+\frac{1}{m}(c_{2}-1)\bigg)
		+\frac{(k-2)(k-3)}{m^{2}}\ed\gamma_{2}^{s},\label{M_offdiag_2_big}
		\end{align}
		where we used Lemma \ref{lem_moments_ip} in the second step. When $(u,v)\in \cS\times {\cS}_{\text{un}}\backslash \cS$, 
		\begin{align}
		\bE{\bphi}{\sum_{i\ne j}\alpha_{ui}^{2}\alpha_{vj}^{2}|G}
		&=\bE{\bphi}{\|\bphi_{u}\|_{2}^{4}\sum_{i\in \cS\backslash\{u\}}(\bphi_{v}^{\top}\bphi_{i})^{2}|G} 
		+\bE{\bphi}{(\bphi_{u}^{\top}\bphi_{v})^{2}\sum_{i\in \cS\backslash\{u\}}
			(\bphi_{u}^{\top}\bphi_{i})^{2}|G}\nonumber \\
		&\quad+\bE{\bphi}{\sum_{\substack{i,j\in \cS\backslash\{u\} \\ j\neq i}}(\bphi_{u}^{\top}\bphi_{i})^{2}\cdot (\bphi_{v}^{\top}\bphi_{j})^{2}|G} \nonumber\\
		&=\bigg(1+\frac{1}{m}(c_{2}-1)\bigg)\frac{k-1}{m}
		+\frac{(k-1)}{m^{2}}\bigg(1+\frac{1}{m}(c_{2}-1)\bigg) +\frac{(k-1)(k-2)}{m^{2}}\ed\gamma_{2}^{sd},\label{M_offdiag_2_med}
		\end{align}
		and the same expression holds when $(u,v)\in{\cS}_{\text{un}}\backslash \cS\times \cS$.
		When $(u,v)\in{\cS}_{\text{un}}\backslash \cS\times {\cS}_{\text{un}}\backslash \cS$, 
		\begin{align}
		\bE{\bphi}{\sum_{i\ne j}\alpha_{ui}^{2}\alpha_{vj}^{2}|G}
		&=\bE{\bphi}{\sum_{\substack{i,j\in \cS \\ j\neq i}}(\bphi_{u}^{\top}\bphi_{i})^{2}\cdot (\bphi_{v}^{\top}\bphi_{j})^{2}|G} =\frac{k(k-1)}{m^{2}}\ed\gamma_{2}^{d},\label{M_offdiag_2_small}
		\end{align}
		
		Finally, for the third term in \eqref{M_offdiag_phi}, when $(u,v)\in \cS\times \cS$,
		\begin{align}
		\bE{\bphi}{\sum_{i\ne j}\alpha_{ui}\alpha_{vi}\alpha_{uj}\alpha_{vj}|G}
		=&\bE{\bphi}{\|\bphi_{u}\|_{2}^{2}\bphi_{u}^{\top}\bphi_{v}\cdot \|\bphi_{v}\|_{2}^{2}\bphi_{u}^{\top}\bphi_{v}|G}\nonumber \\
		&+\bE{\bphi}{\|\bphi_{u}\|_{2}^{2}\bphi_{u}^{\top}\bphi_{v}\sum_{j\in \cS\backslash\{u\}\cup\{v\}}(\bphi_{u}^{\top}\bphi_{j})\cdot (\bphi_{v}^{\top}\bphi_{j})|G}\nonumber \\
		&+\bE{\bphi}{\|\bphi_{v}\|_{2}^{2}\bphi_{u}^{\top}\bphi_{v}\sum_{j\in \cS\backslash\{u\}\cup\{v\}}(\bphi_{u}^{\top}\bphi_{j})\cdot (\bphi_{v}^{\top}\bphi_{j})|G}\nonumber \\
		&+\bE{\bphi}{\sum_{\substack{i,j\in \cS\backslash\{u\}\cup\{v\} \\ j\neq i }}(\bphi_{u}^{\top}\bphi_{i}) (\bphi_{v}^{\top}\bphi_{i})(\bphi_{u}^{\top}\bphi_{j}) (\bphi_{v}^{\top}\bphi_{j})|G}\nonumber\\
		=&\frac{1}{m}\bigg(1+\frac{c_{2}-1}{m}\bigg)^{2}+\frac{2(k-2)}{m^{2}}\bigg(1+\frac{c_{2}-1}{m}\bigg)+\frac{(k-2)(k-3)}{m^{3}}\ed\gamma_{3}^{s}.\label{M_offdiag_3_big}
		\end{align}
		When $(u,v)\in \cS\times {\cS}_{\text{un}}\backslash \cS$,
		\begin{align}
		\bE{\bphi}{\sum_{i\ne j}\alpha_{ui}\alpha_{vi}\alpha_{uj}\alpha_{vj}|G}
		=&\bE{\bphi}{\|\bphi_{u}\|_{2}^{2}\bphi_{u}^{\top}\bphi_{v}\sum_{j\in \cS\backslash\{u\}}(\bphi_{u}^{\top}\bphi_{j})\cdot (\bphi_{v}^{\top}\bphi_{j})|G}\nonumber \\
		&+\bE{\bphi}{\sum_{\substack{i,j\in \cS\backslash\{u\} \\ j\neq i}}(\bphi_{u}^{\top}\bphi_{i}) (\bphi_{u}^{\top}\bphi_{j})(\bphi_{v}^{\top}\bphi_{i}) (\bphi_{v}^{\top}\bphi_{j})\bigg\rvert G}\nonumber\\
		=&\frac{(k-1)}{m^{2}}\bigg(1+\frac{c_{2}-1}{m}\bigg)
		+\frac{(k-1)(k-2)}{m^{3}}\ed\gamma_{3}^{sd},\label{M_offdiag_3_med}
		\end{align}
		and the same expression holds when $(u,v)\in {\cS}_{\text{un}}\backslash \cS \times \cS$. 
		When $(u,v)\in {\cS}_{\text{un}}\backslash \cS\times {\cS}_{\text{un}}\backslash \cS$,
		\begin{align}
		\bE{\bphi}{\sum_{i\ne j}\alpha_{ui}\alpha_{vi}\alpha_{uj}\alpha_{vj}|G}
		&=\bE{\bphi}{\sum_{\substack{i,j\in \cS \\ j\neq i}}(\bphi_{u}^{\top}\bphi_{i}) (\bphi_{u}^{\top}\bphi_{j})(\bphi_{v}^{\top}\bphi_{i}) (\bphi_{v}^{\top}\bphi_{j})\bigg\rvert G}\nonumber\\
		&=\frac{k(k-1)}{m^{3}}\ed\gamma_{3}^{d},\label{M_offdiag_3_small}
		\end{align}
		We have thus computed the expected values of each of
                the three terms in \eqref{M_offdiag_phi}.
                
		Thus, combining \eqref{M_offdiag_1_big}, \eqref{M_offdiag_2_big} and \eqref{M_offdiag_3_big} and using \eqref{M_offdiag_phi} and \eqref{Tuv}, we have for $(u,v)\in \cS_{1}\times \cS_{1}$, $u\ne v$,
		\begin{align}
		\bEE{T_{uv}}
		=&\mathbb{E}_{G}\bigg[\frac{1}{n}\bigg(\sum_{j\in I_{1}}\rho \gamma_{1}^{s}+\lambda_{0}^{2}(\gamma_{2}^{s}+\gamma_{3}^{s})
		+\sum_{j\in I_{2}}\rho \gamma_{1}^{d}+\lambda_{0}^{2}(\gamma_{2}^{d}+\gamma_{3}^{d})\nonumber\\
		&~~~~~~~~~~~+\cdots+\sum_{j\in I_{\ell}}\rho \gamma_{1}^{d}+\lambda_{0}^{2}(\gamma_{2}^{d}+\gamma_{3}^{d})\bigg)\bigg]\nonumber \\
		=&\frac{1}{\ell}\bigg(\rho \gamma_{1}^{s}+\lambda_{0}^{2}(\gamma_{2}^{s}+\gamma_{3}^{s})\bigg)
		+\frac{\ell-1}{\ell}\bigg(\rho \gamma_{1}^{d}+\lambda_{0}^{2}(\gamma_{2}^{d}+\gamma_{3}^{d})\bigg)
		\ed\on
		,\label{mu_s}
		\end{align}
		where again we used $\bE{G}{|I_{i}|}=n/\ell$ for all $i\in[\ell]$. This holds for $(u,v)\in \cS_{i}\times \cS_{i}$, for every $i\in[\ell]$.
		
		For the case when $(u,v)\in \cS_{1}\times \cS_{2}$ or when $(u,v)\in \cS_{2}\times \cS_{1}$,
		\begin{align}
		\bEE{T_{uv}}
		=&\mathbb{E}_{G}\bigg[\frac{1}{n}\bigg(\sum_{j\in I_{1}}\rho \gamma_{1}^{sd}+\lambda_{0}^{2}(\gamma_{2}^{sd}+\gamma_{3}^{sd})
		+\sum_{j\in I_{2}}\rho \gamma_{1}^{sd}+\lambda_{0}^{2}(\gamma_{2}^{sd}+\gamma_{3}^{sd})\nonumber\\
		&~~~~~~~+\sum_{j\in I_{3}}\rho \gamma_{1}^{d}+\lambda_{0}^{2}(\gamma_{2}^{d}+\gamma_{3}^{d})+\cdots 
		+\sum_{j\in I_{\ell}}\rho \gamma_{1}^{d}+\lambda_{0}^{2}(\gamma_{2}^{d}+\gamma_{3}^{d})\bigg)\bigg]\\
		=&\frac{2}{\ell}\bigg(\rho \gamma_{1}^{sd}+\lambda_{0}^{2}(\gamma_{2}^{sd}+\gamma_{3}^{sd})\bigg)
		+\frac{\ell-2}{\ell}\bigg(\rho \gamma_{1}^{d}+\lambda_{0}^{2}(\gamma_{2}^{d}+\gamma_{3}^{d})\bigg)
		\ed\off.\label{mu_d}
		\end{align}
		Again, the same expression holds for $\bEE{T_{uv}}$ whenever $(u,v)\in \cS_{i}\times \cS_{j}$, $i,j\in[\ell]$, $i\ne j$.
		The mean matrix $\bEE{T}$ thus has a block structure with $\dg$ on the diagonal, $\on$ on the remaining entries in the diagonal blocks and $\off$ on the off diagonal blocks as depicted in Figure~\ref{fig_block_mtx}. 

\subsection{Proof of Lemma \ref{lem:ET_spectrum}}\label{app:ET_spectrum}

Using the structure of $\bEE{T}$ derived in Lemma \ref{lem:ET_structure}, we have,
		\begin{align*}
		\|\bEE{T}\|_{op}
		&=\dg+(k-1)\on+k(\ell-1)\off\\
		&\le \rho\frac{k^{2}\ell}{m^{2}}+\lambda_{0}^{2}\frac{k^{3}\ell}{m^{2}},
		\end{align*}
		where we have used the definitions in \eqref{M_diag}, \eqref{mu_s} and \eqref{mu_d}, and simplified.
		
	{For the eigengap computation, we first note from the definitions in \eqref{mu_s} and \eqref{mu_d} that}
		\begin{align*}
		\on-\off
		=&\frac{\rho}{\ell}(\gamma_{1}^{s}+\gamma_{1}^{d}-2\gamma_{1}^{sd})
		+\frac{\lambda_{0}^{2}}{\ell}(\gamma_{2}^{s}+\gamma_{2}^{d}-2\gamma_{2}^{sd}+\gamma_{3}^{s}+\gamma_{3}^{d}-2\gamma_{3}^{sd})\\
		=& \frac{\rho}{\ell}\cdot 0
		+\frac{\lambda_{0}^{2}}{\ell}\bigg\{\bigg(1+\frac{c_{2}-1}{m}\bigg)^{2}+\frac{1}{m^{2}}\bigg(2+\frac{c_{2}^{2}-2}{m}\bigg)\\
		&+\frac{1}{m}\bigg(1+\frac{c_{2}-1}{m}\bigg)^{2}-\frac{2}{m}\bigg(1+\frac{c_{2}-1}{m}\bigg)\bigg(1+\frac{2}{m}\bigg)+\frac{4}{m^{2}}\bigg\}\\
		\ge& \frac{\lambda_{0}^{2}}{\ell}.
		\end{align*}
		We therefore have,
		\begin{align*}
	\Delta_{\ell}=	\nu_{\ell}-\nu_{\ell+1}
		=k(\on-\off)\ge \frac{\lambda_{0}^{2}k}{\ell}.
		\end{align*}

	\section{Useful lemmas}
	\label{app:moment_concentration}
\begin{definition}\label{def_sbg}
		A random variable $X$ is subgaussian with variance
	parameter $\sigma^2$, denoted 	$X\sim\sbg(\sigma^2)$, if  
	\begin{equation*}
	\log \bEE{e^{\theta (X-\bEE{X})}}\leq \theta^2\sigma^2/2,
	\end{equation*}
	for all $\theta\in\mR$.
\end{definition}

\begin{definition}\label{def_sbx}
		A random variable $X$ is subexponential with parameters $\sigma^{2}$ and $b>0$, denoted $X\sim \sbx(\sigma^{2},b)$, if
	\begin{equation*}
	\log \bEE{e^{\theta(X-\bEE X)}}\le \theta^{2}\sigma^{2}/2,
	\end{equation*}  
	for all $|\theta|<1/b$.
\end{definition}

	\begin{lemma}\label{lem_sbx_moment}
		Let $X$ be a subexponential random variable with parameters $v^{2}$ and $b>0$, i.e., for every $t>0$,
		\begin{align*}
		\bPr{|X-\bEE{X}|\ge t}
		\le 2\exp\bigg(-\min\bigg\{\frac{t^{2}}{2v^{2}},\frac{t}{2b}\bigg\}\bigg).
		\end{align*}
		Then, for $q\in\mathbb{N}$, and an absolute constant $c$,
		\begin{align*}
		\bEE{|X-\bEE{X}|^{2q}}\le
		2q\cdot(16)^{q}\bigg(\Gamma(q)v^{2q}+b^{2q}\Gamma(2q)\bigg). 
		\end{align*}
	\end{lemma}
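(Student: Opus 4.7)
The plan is to use the standard tail-to-moment conversion based on the layer-cake formula. Specifically, letting $Y=|X-\bEE{X}|$ and making the substitution $s=t^{2q}$ in $\bEE{Y^{2q}}=\int_0^\infty \bPr{Y^{2q}>s}\,ds$, I would first write
\begin{align*}
\bEE{Y^{2q}}=2q\int_0^\infty t^{2q-1}\bPr{Y>t}\,dt\le 4q\int_0^\infty t^{2q-1}\exp\!\left(-\min\!\left\{\tfrac{t^2}{2v^2},\tfrac{t}{2b}\right\}\right)dt,
\end{align*}
using the hypothesized subexponential tail bound.

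Next, I would use the elementary inequality $e^{-\min(a,b)}=\max(e^{-a},e^{-b})\le e^{-a}+e^{-b}$ to split the integral into a Gaussian-tail piece and an exponential-tail piece over the full half-line. This avoids having to find the crossover point $t=v^2/b$ and keep track of separate regions, which would only change constants.

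For the Gaussian piece, the substitution $u=t^2/(2v^2)$ gives $t^{2q-1}dt=2^{q-1}v^{2q}u^{q-1}du$, so the integral equals $2^{q-1}v^{2q}\Gamma(q)$. For the exponential piece, the substitution $u=t/(2b)$ gives $t^{2q-1}dt=(2b)^{2q}u^{2q-1}du$, so the integral equals $4^q b^{2q}\Gamma(2q)$. Combining,
\begin{align*}
\bEE{Y^{2q}}\le 4q\bigl(2^{q-1}v^{2q}\Gamma(q)+4^q b^{2q}\Gamma(2q)\bigr)=2^{q+1}q\,v^{2q}\Gamma(q)+4^{q+1}q\,b^{2q}\Gamma(2q).
\end{align*}

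The only remaining step is a constant comparison: since $2^{q+1}\le 2\cdot 16^q$ and $4^{q+1}\le 2\cdot 16^q$ for every $q\in\mN$, each term is dominated by $2q\cdot 16^q$ times the corresponding gamma factor, yielding the stated bound $2q\cdot 16^q(\Gamma(q)v^{2q}+b^{2q}\Gamma(2q))$. There is no real obstacle here; the only mildly delicate point is choosing to bound the $\min$ by a sum rather than splitting the integral at $t=v^2/b$, which keeps the algebra clean and the constants explicit.
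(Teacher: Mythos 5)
Your proof is correct: the tail-to-moment conversion, both substitutions, and the constant comparison all check out (indeed $4q\cdot 2^{q-1}=2^{q+1}q\le 2q\cdot 16^q$ and $4q\cdot 4^{q}=4^{q+1}q\le 2q\cdot 16^q$ for $q\ge 1$). The overall skeleton — integrating the tail via $\bEE{Y^{2q}}=2q\int_0^\infty t^{2q-1}\bPr{Y>t}\,dt$ — is the same as the paper's, but you handle the $\min$ in the exponent differently. The paper first absorbs the two regimes into a single sub-gamma-style tail, $\exp(-\min\{t^2/(2v^2),t/(2b)\})\le \exp(-t^2/(2(v^2+bt)))$, inverts it to the form $\bPr{Y\ge bu+\sqrt{b^2u^2+2v^2u}}\le e^{-u}$, and then performs the change of variables $t=bu+\sqrt{b^2u^2+2v^2u}$ in the layer-cake integral, which requires expanding and crudely bounding $(bu+\sqrt{b^2u^2+2v^2u})^{2q-1}$ times the Jacobian — this is where their $(16)^q$ comes from. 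Your route, bounding $e^{-\min(a,b)}\le e^{-a}+e^{-b}$ and evaluating two clean gamma integrals, avoids that algebra entirely and yields the sharper intermediate bound $2^{q+1}q\,v^{2q}\Gamma(q)+4^{q+1}q\,b^{2q}\Gamma(2q)$ before relaxing to match the stated constant; the paper's approach is the standard Boucheron--Lugosi--Massart template and generalizes more readily to non-integer moments, but for the purposes of this lemma your argument is simpler and the constants are more transparent.
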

		\begin{proof}
			We first express the tail bound for $X$ in a form that is easier to evaluate, and then use standard arguments (see, for example, \cite[Theorem 2.3]{BLM_conc_text}) to derive the moment bound.
			We have,
				\begin{align*}
				\bPr{|X-\bEE{X}|\ge t}
				&\le 2\exp\bigg(-\min\bigg\{\frac{t^{2}}{2v^{2}},\frac{t}{2b}\bigg\}\bigg)\\
				&\le 2\exp\bigg(\frac{-t^{2}}{2(v^{2}+bt)}\bigg),
				\end{align*}
				that is,
				\begin{align*}
				\bPr{|X-\bEE{X}|\ge bu+\sqrt{b^{2}u^{2}+2v^{2}u}}
				\le e^{-u}.
				\end{align*}
				With this tail bound, we can now derive the stated moment bound by using
				\begin{align*}
				\bEE{|X-\bEE{X}|^{2q}}
				=2q\int_{0}^{\infty}\bPr{|X-\bEE{X}|\ge t}t^{2q-1}dt.
				\end{align*}
				In particular, upon substituting $t=bu+\sqrt{b^{2}u^{2}+2v^{2}u}$, we get
				\begin{align*}
				\mathbb{E}\bigg[(X-\bEE{X})^{2q}\bigg]
				&\le  2q\int_{0}^{\infty}e^{-u}(bu+\sqrt{b^{2}u^{2}+2v^{2}u})^{2q-1}\nonumber \\ 
				&~~~~~~~~~~\times \bigg(b+\frac{b^{2}u+v^{2}}{\sqrt{b^{2}u^{2}+2v^{2}u}}\bigg)du,
				\end{align*}
				which after simplification yields
				\begin{align*}
				\mathbb{E}\bigg[(X-\bEE{X})^{2q}\bigg]
				\le  2q\cdot (16)^{q}\bigg(b^{2q}\Gamma(2q)+v^{2q}\Gamma(q)\bigg).
				\end{align*}
				                       
\end{proof}
		\begin{lemma}\label{lem_qform_moment}
			Let $X\in\mR^d$ be a mean zero random vector with independent entries such that $\bEE{X_{i}^{2}}=\lambda_{0}$ and $\bEE{X_{i}^{4}}=\rho$ for all $i\in[d]$. Then, for every $a, b\in\mR^d$,
			\begin{align*}	\bEE{(X^{\top}ab^{\top}X)^{2}}=\rho\sum_{i=1}^{d}a_{i}^{2}b_{i}^{2}+\lambda_{0}^{2}\sum_{i\ne j}(a_{i}^{2}b_{j}^{2}+a_{i}b_{i}a_{j}b_{j}).
			\end{align*} 
			In particular, 
			\begin{align*}
			\bEE{(X^{\top}aa^{\top}X)^{2}}
			=\rho\sum_{i=1}^{d}a_{i}^{4}+2\lambda_{0}^{2}\sum_{i\ne j}a_{i}^{2}a_{j}^{2}.
			\end{align*}
		\end{lemma}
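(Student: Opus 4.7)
The plan is a direct term-by-term moment computation. I would first rewrite the scalar as
\[
X^{\top}ab^{\top}X \;=\; (a^{\top}X)(b^{\top}X) \;=\; \sum_{i,j=1}^{d} a_i b_j\, X_i X_j,
\]
and therefore
\[
(X^{\top}ab^{\top}X)^{2} \;=\; \sum_{i,j,k,l=1}^{d} a_i b_j a_k b_l\, X_i X_j X_k X_l.
\]
Taking expectation then reduces everything to evaluating $\bEE{X_i X_j X_k X_l}$ for each tuple $(i,j,k,l)\in[d]^{4}$, after which the result follows by collecting like monomials in $a$ and $b$.

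Because the coordinates of $X$ are independent and zero-mean, $\bEE{X_i X_j X_k X_l}$ vanishes unless every distinct index appearing in the multiset $\{i,j,k,l\}$ appears an even number of times; any lone occurrence of an index would factor out a $\bEE{X_\cdot}=0$. With only four positions this leaves two regimes: either all four indices coincide, in which case Assumption on the fourth moment gives $\bEE{X_i^{4}}=\rho$, or the four positions split into two disjoint pairs of coincident indices, in which case independence and the second-moment assumption give $\lambda_{0}^{2}$. I would then enumerate the three ways to pair the four positions, namely $(12)(34)$, $(13)(24)$, and $(14)(23)$, and read off each corresponding monomial in $(a,b)$: the pattern $i=j$, $k=l$, $i\neq k$ yields $a_i b_i a_k b_k$; the pattern $i=k$, $j=l$, $i\neq j$ yields $a_i^{2} b_j^{2}$; and the pattern $i=l$, $j=k$, $i\neq j$ yields $a_i b_j a_j b_i$, which is coordinatewise the same as $a_i b_i a_j b_j$.

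Summing the diagonal $i=j=k=l$ contribution $\rho\sum_i a_i^{2} b_i^{2}$ together with the three paired contributions (with appropriate ordered-pair sums $i\neq j$) yields the identity, and the special case $a=b$ follows by direct substitution. The calculation is elementary and needs no analytic estimates beyond the assumed second and fourth moments. The only real subtlety is combinatorial bookkeeping: correctly pairing the four \emph{positions} (not values) and tracking which pairings collapse to the same monomial in $a,b$. This is the natural place for a miscount to slip in, so I would double-check the coefficient of the $\lambda_0^{2}$ cross term by testing the formula on a small example (for instance $d=2$ with $a=b=(1,1)$, where $\bEE{(X_1+X_2)^{4}}=2\rho+6\lambda_0^{2}$ pins down the coefficient).
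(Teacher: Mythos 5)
Your method --- expand $(X^{\top}ab^{\top}X)^2=\sum_{i,j,k,l}a_ib_ja_kb_l\,\bEE{X_iX_jX_kX_l}$, discard tuples containing an odd power, and enumerate the three pair partitions of the four positions --- is the standard one, and it is essentially the same elementary computation as the paper's, which merely organizes it by first splitting $(a^{\top}X)(b^{\top}X)$ into its diagonal part $\sum_i a_ib_iX_i^2$ and off-diagonal part $\sum_{i\ne j}a_ib_jX_iX_j$ before squaring. The problem is that you assert the enumeration ``yields the identity'' without actually collecting the terms. If you do collect them, two of your three pairings, $(i{=}j)(k{=}l)$ and $(i{=}l)(j{=}k)$, both produce the monomial $a_ib_ia_jb_j$, so your computation gives
\[
\rho\sum_{i=1}^{d}a_i^2b_i^2+\lambda_0^2\sum_{i\ne j}\left(a_i^2b_j^2+2\,a_ib_ia_jb_j\right),
\]
with coefficient $2$ on the cross term, whereas the statement has coefficient $1$. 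Your own proposed sanity check settles which is right: for $d=2$ and $a=b=(1,1)$ one has $\bEE{(X_1+X_2)^4}=2\rho+6\lambda_0^2$, which matches your count, while the stated formula evaluates to $2\rho+4\lambda_0^2$. The coefficient-$2$ version is also the one consistent with the Gaussian identity $\bEE{(a^{\top}X)^2(b^{\top}X)^2}=\lambda_0^2\big(\|a\|_2^2\|b\|_2^2+2(a^{\top}b)^2\big)$ at $\rho=3\lambda_0^2$.

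So the gap is not in your approach but in the final bookkeeping you skipped, and what that bookkeeping reveals is that the lemma as stated (and the paper's own proof of it) is off by one copy of the cross term: when the paper squares $\sum_{i\ne j}a_ib_jX_iX_j$ it retains only the matching pairing $(k,l)=(i,j)$ and silently drops the transposed pairing $(k,l)=(j,i)$, which contributes the missing $\lambda_0^2\sum_{i\ne j}a_ib_ia_jb_j$. Correspondingly, the ``in particular'' display should read $\rho\sum_i a_i^4+3\lambda_0^2\sum_{i\ne j}a_i^2a_j^2$. You should carry your enumeration through to the end, run the $d=2$ check, and report the corrected identity rather than claim to recover the stated one.
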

		\begin{remark}
			If the second and fourth moments are related as $\rho=2\lambda_{0}^{2}=2c$ for some absolute constant $c$, then the result simplifies to $\bEE{(X^{\top}ab^{\top}X)^{2}}=c((a^{\top}b)^{2}+\|a\|_{2}^{2}\|b\|_{2}^{2})$.
		\end{remark}
		\begin{proof}
			To start with, we note that the quadratic form $X^{\top}ab^{\top}X$ is a subexponential random variable since $X$ is subgaussian. Although this fact can be used to derive upper bounds on the moments of $X^{\top}ab^{\top}X$, we would like to explicitly compute the second moment. We have,
			\begin{align*}
			\bEE{(X^{\top}ab^{\top}X)^{2}}
			&=\bEE{\bigg(\sum_{i=1}^{d}a_{i}b_{i}X_{i}^{2}+\sum_{i\ne j}a_{i}b_{j}X_{i}X_{j}\bigg)^{2}}\\
			&=\bEE{\bigg(\sum_{i=1}^{d}a_{i}b_{i}X_{i}^{2}\bigg)^{2}
				+\bigg(\sum_{i\ne j}a_{i}b_{j}X_{i}X_{j}\bigg)^{2}+2\sum_{i=1}^{d}a_{i}
				b_{i}X_{i}^{2}\sum_{i\ne j}a_{i}b_{j}X_{i}X_{j}}\\
			&=\bEE{\sum_{i=1}^{d}a_{i}^{2}b_{i}^{2}X_{i}^{4}
				+\sum_{i\ne j}a_{i}b_{i}a_{j}b_{j}X_{i}^{2}X_{j}^{2}+\sum_{i\ne j}a_{i}^{2}b_{j}^{2}X_{i}^{2}X_{j}^{2}}.
			\end{align*} 
			Using $\bEE{X_{i}^{2}}=\lambda_{0}$ and $\bEE{X_{i}^{4}}=\rho$, we get
			\begin{align*}
			\bEE{(X^{\top}ab^{\top}X)^{2}}
			&=\rho\sum_{i=1}^{d}a_{i}^{2}b_{i}^{2}
			+\lambda_{0}^{2}\sum_{i\ne j}(a_{i}^{2}b_{j}^{2}+a_{i}b_{i}a_{j}b_{j}).
			\end{align*}
		\end{proof}
		\begin{lemma}\label{lem_moments_ip}
			Let $X,Y,Z$ and $W$ be independent random vectors taking values in $\mR^{m}$, with independent entries that are zero mean with variance $1/m$. Additionally, for every $i\in[m]$, let $\bEE{Z_{i}^{2q}}=c_{q}/m^{q}$, for q=2, 3, 4 and a constant $c_{q}$ that depends only on $q$. Then, the following results hold:
			\begin{enumerate}[(i)]
				\item $\bEE{\|Z\|_{2}^{4}}=1+\frac{1}{m}(c_{2}-1)$
				
				\item $\bEE{\|Z\|_{2}^{6}}=1+\frac{3}{m}(c_{2}-1)+\frac{1}{m^{2}}(c_{3}-3c_{2}+2)$
				
				\item 
				$\bEE{\|Z\|_{2}^{8}}=1+\frac{6}{\nm}(c_2-1)+\frac{1}{\nm^{2}}(11-18c_2+6c_2^2+4c_3)+\frac{1}{\nm^{3}}(c_4-4c_3-6c_2^2+12c_2-6)$
				
				\item $\bEE{(X^{\top}Y)^{4}}=\frac{2}{m^{2}}+\frac{1}{m^{3}}(c_{2}^{2}-2)$
				
				\item $\bEE{\|Z\|_{2}^{4}(Z^{\top}W)^{2}}
				=\frac{1}{m}\bigg(1+\frac{3}{m}(c_{2}-1)+
				\frac{1}{m^{2}}(c_{3}-3c_{2}+2)\bigg)$
				
				\item $\bEE{(X^{\top}Z)^{2}(X^{\top}W)^{2}}
				=\frac{1}{m^{2}}\bigg(1+\frac{1}{m}(c_{2}-1)\bigg)$
				
				\item $\bEE{\|Z\|_{2}^{2}\|W\|_{2}^{2}(Z^{\top}W)^{2}}
				=\frac{1}{m}\bigg(1+\frac{1}{m}(c_{2}-1)\bigg)^{2}$
				
				\item $\bEE{\|Z\|_{2}^{2}(W^{\top}Z)(X^{\top}Z)(X^{\top}W)}=\frac{1}{m^{2}}\bigg(1+\frac{1}{m}(c_{2}-1)\bigg)$
				
				\item $\bEE{(Z^{\top}X)(Z^{\top}Y)(W^{\top}X)(W^{\top}Y)}=\frac{1}{m^{3}}$
				
				\item $\bEE{(X^{\top}Y)^{2}}=\frac{1}{m}$.
			\end{enumerate}
		\end{lemma}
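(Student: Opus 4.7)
The lemma is a catalogue of ten moment identities for random vectors whose entries are independent, zero-mean, and satisfy the prescribed moment conditions $\bEE{Z_i^{2q}}=c_q/m^q$. The guiding principle for every identity is the same: expand the target expression as a polynomial in the coordinates of the underlying vectors, and apply (a) independence across different vectors when they appear, (b) independence across coordinates of a single vector, and (c) the zero-mean condition, which annihilates any monomial in which some individual entry appears with an odd exponent. The surviving monomials are then grouped by their pattern of index coincidences, and each pattern's contribution is the product of a combinatorial count and the prescribed moments.

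I would organise the proof by increasing complexity so that later identities can call earlier ones. Identity (x) is immediate: $\bEE{(X^{\top}Y)^2}=\sum_{i,j}\bEE{X_iX_j}\bEE{Y_iY_j}=\sum_i (1/m)^2=1/m$. Identities (i)--(iii) are single-vector computations: write $\|Z\|_2^{2q}=(\sum_i Z_i^2)^q$, expand by the multinomial theorem, and group by integer partitions of $q$. For each partition $(q_1,\dots,q_r)$ of $q$ with no odd parts, the contribution is the multinomial coefficient times the falling factorial $m(m-1)\cdots(m-r+1)$ times $\prod_s c_{q_s}/m^{q_s}$ (with the convention $c_1=1$). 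For (iii), the partitions of $4$ are $(4),(3,1),(2,2),(2,1,1),(1,1,1,1)$, but the zero-mean condition removes those containing an odd part, leaving only $(4),(2,2),(1,1,1,1)$. Identity (iv) follows the same template for $(X^{\top}Y)^4=(\sum_i X_iY_i)^4$: by independence of $X$ and $Y$, each index pattern now contributes a product of an $X$-moment and a $Y$-moment, so only the ``all four equal'' and the three ``two pairs'' patterns survive.

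For identities (v)--(ix), the cleanest route is iterated conditioning, which lets one vector be integrated out while the others are held fixed. For (v), $\bEE{(Z^{\top}W)^2\mid Z}=\|Z\|_2^2/m$, so the claim reduces to $(1/m)\bEE{\|Z\|_2^6}$, evaluated by (ii). For (vi), conditioning on $X$ and using independence of $Z,W$ gives $\bEE{(X^{\top}Z)^2(X^{\top}W)^2\mid X}=\|X\|_2^4/m^2$, and the result follows from (i). For (vii), condition on $Z$ and compute $\bEE{\|W\|_2^2(Z^{\top}W)^2\mid Z}=\sum_{k,i,j}Z_iZ_j\bEE{W_k^2W_iW_j}$; only the patterns $k=i=j$ and $i=j\neq k$ produce nonzero fourth-moments of $W$, yielding a scalar multiple of $\|Z\|_2^2$ after which (i) finishes the job. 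Identities (viii) and (ix) follow by conditioning twice (first on $Z$, then on $X$ or $W$), and each reduces in the same way.

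The main obstacle is purely bookkeeping rather than any probabilistic subtlety: for the higher-degree expressions such as $\bEE{\|Z\|_2^8}$ in (iii), $\bEE{(X^{\top}Y)^4}$ in (iv), and the triple products in (vii)--(ix), the number of surviving index patterns and their multiplicities grow quickly, and the contributions must be expanded up to order $1/m^{q+r}$ to obtain the precise constants displayed. I would work out one representative proof per template carefully---one single-vector norm, one bilinear with two independent copies, one mixed norm-bilinear---and then indicate that the remaining identities are obtained by the same enumeration of index partitions, combined with the moment conditions of Assumption \ref{assump_phi}.
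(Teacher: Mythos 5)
Your overall strategy---direct multinomial expansion for the single-vector and purely bilinear moments, and iterated conditioning (integrating out one vector at a time) for the mixed moments (v)--(ix)---is exactly the route the paper takes, and your treatment of (v)--(x) matches the paper's proof essentially step for step. However, there is a concrete error in your handling of (i)--(iii). In the expansion of $\|Z\|_2^{2q}=(\sum_i Z_i^2)^q$ every coordinate already appears with an even exponent, so the zero-mean condition removes \emph{nothing}: every partition of $q$ contributes, a part $q_s$ contributing $\bEE{Z_i^{2q_s}}=c_{q_s}/m^{q_s}$ (with $c_1=1$). Your rule ``discard partitions containing an odd part'' is the correct rule for $(\sum_i X_iY_i)^4$ in (iv), where the exponents are the parts themselves, but applied to (iii) it throws away the partitions $(3,1)$ and $(2,1,1)$, whose contributions are $4m(m-1)\,c_3/m^4=4c_3/m^2+O(1/m^3)$ and $6c_2/m+O(1/m^2)$ respectively---precisely the source of the $4c_3$ term and most of the $c_2$ dependence in the displayed identity. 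Applied to (i) the same rule would discard $(1,1)$ and return $c_2/m$ instead of $1+(c_2-1)/m$, and applied to (ii) it would discard every partition of $3$. So, as written, this step fails; the fix is simply to keep all partitions of $q$ and use your (otherwise correct) contribution formula.

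A secondary observation on (iv): you expand directly, whereas the paper conditions on $X$ and invokes Lemma~\ref{lem_qform_moment}. Your method is sound, but carried out it gives $\sum_i\bEE{U_i^4}+3\sum_{i\neq j}\bEE{U_i^2}\bEE{U_j^2}$ with $U_i=X_iY_i$, i.e.\ $3/m^2+(c_2^2-3)/m^3$: the two-pairs pattern occurs with multiplicity $3$, not $2$. This disagrees with the displayed $2/m^2+(c_2^2-2)/m^3$, which is inherited from Lemma~\ref{lem_qform_moment}; that lemma's proof keeps only one of the two copies of the cross term $\sum_{i\neq j}a_ib_ia_jb_j$ produced by squaring $\sum_{i\neq j}a_ib_jX_iX_j$. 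A Gaussian sanity check ($\bEE{(X^{\top}Y)^4}=3\,\bEE{\|X\|_2^4}/m^2=3/m^2+6/m^3$ when $c_2=3$) supports your multiplicity of $3$, so if you execute your computation you should expect a mismatch with the stated constant; this affects only absolute constants in the downstream bounds.
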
	
		
		\begin{proof}
			\begin{enumerate}[(i)]
				\item 
				\begin{align*}
				\vspace{-5cm}\bEE{\|Z\|_{2}^{4}}
				&=\bEE{\sum_{i=1}^{m}Z_{i}^{4}+\sum_{i\ne j}Z_{i}^{2}Z_{j}^{2}}\\ 
				&=\frac{c_{2}}{m}+\frac{m-1}{m}
				=1+\frac{1}{m}(c_{2}-1).
				\end{align*}

				\item 
				\begin{align*}
				\bEE{\|Z\|^{6}}&=\bEE{(Z_{1}^{2}+\ldots+Z_{\nm}^{2})^{2}(Z_{1}^{2}+\ldots+Z_{\nm}^{2})}\\
				&=\bEE{\bigg(\sum_{i=1}^{\nm}Z_{i}^{4}+\sum_{i\ne j}Z_{i}^{2}Z_{j}^{2}\bigg)\bigg(\sum_{t=1}^{\nm}Z_{t}^{2}\bigg)}\\		&=\bEE{\sum_{i=1}^{\nm}Z_{i}^{4}\sum_{t=1}^{\nm}Z_{t}^{2}+\sum_{t=1}^{\nm}Z_{t}^{2}\sum_{i\ne j}Z_{i}^{2}Z_{j}^{2}}.
				\end{align*}
				For the first term,
				\begin{align*}		\bEE{\sum_{i=1}^{\nm}Z_{i}^{4}\sum_{t=1}^{\nm}Z_{t}^{2}}
				&=\bEE{\sum_{i=1}^{\nm}Z_{i}^{6}+\sum_{i\ne t}Z_{i}^{4}Z_{t}^{2}}\\
				&=\nm\frac{c_3}{\nm^{3}}+\nm(\nm-1)\frac{c_2}{\nm^{2}}\frac{1}{\nm}
				=\frac{1}{\nm^2}(c_3-c_2)+\frac{c_2}{\nm},\numberthis
				\end{align*}
				and for the second term,
				\begin{align*}
				\bEE{\sum_{t=1}^{\nm}Z_{t}^{2}\sum_{i\ne j}Z_{i}^{2}Z_{j}^{2}}
				&=\bEE{2\sum_{t\ne i}Z_{t}^{4}Z_{i}^{2}+\sum_{t\ne i\ne j}Z_{t}^{2}Z_{i}^{2}Z_{j}^{2}}\\
				&= 2m(\nm-1)\frac{c_2}{\nm^{2}}\frac{1}{\nm}+\nm(\nm-1)(\nm-2)\frac{1}{\nm^{3}}\\
				&=1+\frac{1}{\nm}(2c_2-3)-\frac{2}{\nm^{2}}(c_2-1)		
				\end{align*}
				Thus,
				\begin{equation*}
				\bEE{\|Z\|^{6}}=1+\frac{3}{\nm}(c_2-1)+\frac{1}{\nm^{2}}(c_3-3c_2+2).
				\end{equation*}
				
				\item 
				\begin{align*}
				\bEE{\|Z\|^{8}}
				=&\bEE{(Z_{1}^{2}+\cdots+Z_{\nm}^{2})^{4}}\\
				=&\nm\bEE{Z_{1}^{8}}+{\nm\choose 2} \frac{4!}{3!}2\bEE{ Z_{1}^{6}Z_{2}^{2}}+{\nm\choose 2} \frac{4!}{2!2!}2\bEE{ Z_{1}^{4}Z_{2}^{4}}
				+{\nm\choose 3} \frac{4!}{2!}3\bEE{ Z_{1}^{4}Z_{2}^{2}Z_{3}^{2}}\nonumber\\
				&+{\nm\choose 4} 4!\bEE{Z_{1}^{2}Z_{2}^{2}Z_{3}^{2}Z_{4}^{2}}\\
				=&1+\frac{6}{\nm}(c_2-1)+\frac{1}{\nm^{2}}(11-18c_2+6c_2^2+4c_3)+\frac{1}{\nm^{3}}(c_4-4c_3-6c_2^2+12c_2-6).
				\end{align*}
				
				\item 
				To compute $\bEE{(X^{\top}Y)^{4}}$, we first note that
				\begin{align*}
				\bEE{(X^{\top}Y)^{4}|X}&=\bEE{(Y^{\top}XX^{\top}Y)^{2}|X}\\
				&=\bEE{Y_{1}^{4}}\sum_{i=1}^{m}X_{i}^{4}+2(\bEE{Y_{1}^{2}})^{2}
				\sum_{i\ne j}X_{i}^{2}X_{j}^{2}\\
				&=\frac{c_2}{m^{2}}\sum_{i=1}^{m}X_{i}^{4}+2\bigg(\frac{1}{m}\bigg)^{2}
				\sum_{i\ne j}X_{i}^{2}X_{j}^{2},
				\end{align*}
				where we used Lemma \ref{lem_qform_moment} in the second step. This gives
				\begin{align*}
				\bEE{(X^{\top}Y)^{4}}&=\frac{c_{2}}{m}\bEE{X_{1}^{4}}+\frac{2(m-1)}{m}(\bEE{X_{1}^{2}})^{2}\\
				&=\frac{c_{2}^{2}}{m^{3}}+\frac{2(m-1)}{m^{3}}=\frac{2}{m^{2}}+\frac{1}{m^{3}}(c_{2}^{2}-2).
				\end{align*}
				
				\item 
				Similar to the previous calculation, we first compute the conditional expectation to get
				\begin{align*}
				\bEE{\|Z\|_{2}^{4}(Z^{\top}W)^{2}|Z}
				&=\|Z\|_{2}^{4}\bigg(\sum_{i=1}^{m}\bEE{Z_{i}^{2}W_{i}^{2}|Z}+\sum_{i\ne j}\bEE{Z_{i}W_{i}Z_{j}W_{j}|Z}\bigg)=\|Z\|_{2}^{4}\frac{\|Z\|_{2}^{2}}{m},
				\end{align*}
				which gives
				\begin{align*}
				\bEE{\|Z\|_{2}^{4}(Z^{\top}W)^{2}}&=\frac{1}{m}\bEE{\|Z\|_{2}^{6}}=\frac{1}{m}\bigg(1+\frac{3}{\nm}(c_2-1)+\frac{1}{\nm^{2}}(c_3-3c_2+2)\bigg).
				\end{align*}
				
				\item 
				We have 
				\begin{align*}
				\bEE{(X^{\top}Z)^{2}(X^{\top}W)^{2}|X}
				&=\bEE{(X^{\top}Z)^{2}|X}\bEE{(X^{\top}W)^{2}|X}=\frac{\|X\|_{2}^{2}}{m}\cdot\frac{\|X\|_{2}^{2}}{m}.
				\end{align*}
				Thus,
				\begin{align*}
				\bEE{(X^{\top}Z)^{2}(X^{\top}W)^{2}}
				=\frac{1}{m^{2}}\bigg(1+\frac{1}{m}(c_{2}-1)\bigg).
				\end{align*}

				\item 
				\begin{align*}
				\bEE{\|Z\|_{2}^{2}\|W\|_{2}^{2}(Z^{\top}W)^{2}|Z}
				=&\|Z\|_{2}^{2}~\bEE{\|W\|_{2}^{2}(Z^{\top}W)^{2}|Z}\\
				=&\|Z\|_{2}^{2}\bigg(\sum_{i=1}^{m}\bEE{\|W\|_{2}^{2}Z_{i}^{2}W_{i}^{2}|Z}+\sum_{i\ne j}\bEE{\|W\|_{2}^{2}W_{i}W_{j}Z_{i}Z_{j}|Z}\bigg)\\
				=&\|Z\|_{2}^{2}\sum_{i=1}^{m}Z_{i}^{2}\bEE{W_{i}^{4}+\sum_{l\ne i}W_{i}^{2}W_{l}^{2}}\\
				&+\|Z\|_{2}^{2}\sum_{i\ne j}Z_{i}Z_{j}\bEE{W_{i}^{3}W_{j}+W_{j}^{3}W_{i}+\sum_{l\ne i,~l\ne j}W_{l}^{2}W_{i}W_{j}}\\
				=&\|Z\|_{2}^{2}\sum_{i=1}^{m}Z_{i}^{2}\bigg(\frac{c_{2}}{m^{2}}+\frac{m-1}{m^{2}}\bigg)=\|Z\|_{2}^{4}\bigg(\frac{1}{m}+\frac{c_{2}-1}{m^{2}}\bigg).
				\end{align*}
				Thus,
				\begin{align*}
				\bEE{\|Z\|_{2}^{2}\|W\|_{2}^{2}(Z^{\top}W)^{2}}=\frac{1}{m}\bigg(1+\frac{c_{2}-1}{m}\bigg)^{2}.
				\end{align*}
				
				\item 
				\begin{align*}
				\bEE{\|Z\|_{2}^{2}(W^{\top}Z)(X^{\top}Z)(X^{\top}W)|Z,W}
				&=\|Z\|_{2}^{2}(W^{\top}Z)\bEE{X^{\top}WZ^{\top}X|W,Z}\\
				&=\|Z\|_{2}^{2}(W^{\top}Z)\frac{Z^{\top}W}{m}.
				\end{align*}
				Using similar arguments as in the proof of (v),
				\begin{align*}
				\bEE{\|Z\|_{2}^{2}(W^{\top}Z)(X^{\top}Z)(X^{\top}W)}=\frac{1}{m^{2}}\bigg(1+\frac{c_{2}-1}{m}\bigg).
				\end{align*}

				\item 
				\begin{align*}
				\bEE{(Z^{\top}X)(Z^{\top}Y)(W^{\top}X)(W^{\top}Y)|X,Y,W}
				&=(W^{\top}X)(W^{\top}Y)\bEE{Z^{\top}XY^{\top}Z|X,Y}\\
				&=(W^{\top}X)(W^{\top}Y)\frac{X^{\top}Y}{m}
				\end{align*}
				Thus,
				\begin{align*}
				\bEE{(Z^{\top}X)(Z^{\top}Y)(W^{\top}X)(W^{\top}Y)}
				&=\frac{1}{m}\bE{X,Y}{\bE{W}{(W^{\top}X)(W^{\top}Y)(X^{\top}Y)|X,Y}}\\
				&=\frac{1}{m}\bE{X,Y}{(X^{\top}Y)\bE{W}{W^{\top}XY^{\top}W|X,Y}}\\
				&=\frac{1}{m^{2}}\bE{X,Y}{(X^{\top}Y)^{2}}=\frac{1}{m^{3}}.
				\end{align*}

				\item \begin{align*}
				\bEE{(X^{\top}Y)^{2}}&=\sum_{i=1}^{m}\bEE{X_{i}^{2}Y_{i}^{2}}+\sum_{i\ne j}\bEE{X_{i}Y_{i}X_{j}Y_{j}}=\frac{1}{m}.
				\end{align*}
			\end{enumerate}
		\end{proof}	

\end{appendices}

\bibliography{IEEEabrv,bibfile}
\bibliographystyle{IEEEtran}

\end{document}